\newcommand{\ssection}{\section}
\newtheorem{lm}{Lemma}[section]
\newtheorem{thm}[lm]{Theorem}
\newtheorem{prp}[lm]{Proposition}
\newtheorem{cor}[lm]{Corollary}
\newtheorem{ex}[lm]{Example}
\newtheorem{df}[lm]{Definition}
\newtheorem{prob}{Problem}
\newcommand{\myqed}{\hfill$\Box$}
\newcommand{\wyj}[1]{\hfill{\small #1}}
\newcounter{senumi}[section]
\newcounter{senumip}[section]
\newcounter{temp}[section]
\def\thesenumi{\thesection.\arabic{senumip}}
\def\p@senumip\thesenumip{\thesenumi}
\newenvironment{senumerate}%
    {\begin{list}%
        {(\thesenumi)}%
        {\usecounter{senumip}}
        \setcounter{senumip}{\value{temp}}
    }%
    {\setcounter{temp}{\value{senumip}}
     \end{list}}
\newcounter{penumi}[section]
\newcounter{ptemp}[section]
\newcounter{ppenumi}[section]
\newcounter{pptemp}[section]
\def\theppenumi{\theptemp.\arabic{ppenumi}}
    {\begin{list}%
        {(\theppenumi)}%
        {\usecounter{ppenumi}\setlength{\rightmargin}{\leftmargin}}
        \setcounter{ppenumi}{\value{pptemp}}
    }%
    {\setcounter{pptemp}{\value{ppenumi}}
     \end{list}}
\newcounter{entmp}
\newenvironment{tenumerate}{\begin{enumerate}}{\end{enumerate}}
\newcommand{\m}[1]{{\uppercase {\mathbf{#1}}}}
\newcommand{\rel}[1]{{\uppercase {\mathbb{#1}}}}
\newcommand{\mrel}[2]{{\uppercase{\mathbf{#1}}}\!\left[\uppercase{\mathbb{#2}}\right]}
\newcommand{\ceqv}[1]{\ensuremath{\operatorname{\textsc{Ceqv}
                                \ifthenelse{\equal{#1}{}}{}{\!\left( {\m #1} \right)}}}}
\newcommand{\csat}[1]{\ensuremath{\operatorname{\textsc{Csat}
                                \ifthenelse{\equal{#1}{}}{}{\!\left( {\m #1} \right)}}}}
\newcommand{\mcsat}[1]{\ensuremath{\operatorname{\textsc{MCsat}
                                \ifthenelse{\equal{#1}{}}{}{\!\left( {\m #1} \right)}}}}
\newcommand{\scsat}[1]{\ensuremath{\operatorname{\textsc{SCsat}
                                \ifthenelse{\equal{#1}{}}{}{\!\left( {\m #1} \right)}}}}
\newcommand{\csp}[1]{\ensuremath{\operatorname{\textsc{CSP}
                                \ifthenelse{\equal{#1}{}}{}{\!\left( {\rel #1} \right)}}}}
\newcommand{\polsatstar}{\csat}
\newcommand{\cpolsatstar}{\csat}
\newcommand{\npc}{\textsf{NP}-complete\xspace}
\newcommand{\conpc}{\textsf{co-NP}-complete\xspace}
\newcommand{\np}{\textsf{NP}\xspace}
\newcommand{\ptime}{\textsf{P}\xspace}
\newcommand{\usp}{Uniform Solution Property\xspace}
\newcommand{\USP}{USP\xspace}
\newcommand{\vr}[1]{{\uppercase {\mathcal {#1}}}}
\newcommand{\set}[1]{{\left\{ {#1} \right\} }}
\newcommand{\ci}{\subseteq}
\newcommand{\co}{\supseteq}
\newcommand{\card}[1]{\left| #1 \right|}
\newcommand{\cardd}[1]{\# #1}
\newcommand{\equa}[1]{\left\| #1 \right\|}
\newcommand{\tup}[1]{\langle #1 \rangle}
\newcommand{\intv}[2]{I\left[#1,#2\right]}
\newcommand{\vpair}[2]{{{#1}\choose{#2}}}
\renewcommand{\leq}{\leqslant}
\renewcommand{\geq}{\geqslant}
\renewcommand{\le}[1]{\leqslant_{#1}}
\newcommand{\comp}{\leq\!\geq}
\newcommand{\scomp}{<\!>}
\newcommand{\dist}[2]{{\sf dist}\!\left( #1,#2 \right)}
\newcommand{\distt}[3]{{\sf dist}_{#3}\!\left( #1,#2 \right)}
\renewcommand{\mapsto}{\longmapsto}
\newcommand{\tomaps}{\longmapsfrom}
\newcommand{\join}{\vee}
\newcommand{\meet}{\wedge}
\newcommand{\jjoin}{\bigvee}
\newcommand{\mmeet}{\bigwedge}
\newcommand {\bc}[1]{{\overline {#1}} }
\newcommand{\con}[1]{{\sf Con\:\m{#1}}}
\newcommand{\cn}[1]{{\sf Con\:\m{#1}}}
\newcommand{\Cn}[1]{{{\sf Con\:}\m{#1}}}
\newcommand{\cg}[3]{{\rm Cg}^{{\m {#1}}}({#2},{#3})}
\newcommand{\pol}[1]{{\rm Pol\:\m #1}}
\newcommand{\poln}[2]{{\rm Pol}_{#1}\m #2}
\newcommand{\clo}[1]{{\rm Clo\:\m #1}}
\renewcommand{\d}{\po D}
\newcommand{\q}{\po Q}
\newcommand{\po}[1]{{\mathbf {#1}}}
\newcommand{\te}[1]{{\mathbf {#1}}}
\newcommand{\tn} [1]{{\bf {#1}}}
\newcommand{\typ}{{\rm typ}}
\newcommand{\typset}[1]{\typ\set{#1}}
\newcommand{\rst}[2]{ {#1} |_{{#2}} }
\newcommand{\prect}[1]{\prec_{\tn #1}}
\newcommand{\minim}[3]{M_{\m #1}\left(#2,#3\right)}
\renewcommand{\o}[1]{\overline {#1}}
\newcounter{ttable}
\newcommand{\centr}[2]{\left(#2 : #1\right)}
\newcommand{\comm}[2]{\left[ #1 , #2 \right]}
\newcommand{\com}[2]{\left[ #1 , #2 \right]}
\newcommand{\commm}[2]{\left[ #1 , \ldots, #2 \right]}
\newcommand{\map}{\longrightarrow}
\newcommand{\congruent}[1]{\stackrel{#1}{\equiv}}
\newcommand{\h}[1]{\widehat{#1}}
\newcommand{\fj}{\varphi}
\begin{document}
\title{Satisfiability in multi-valued circuits}

\author{Pawe\l{} M. Idziak AND Jacek Krzaczkowski}

\email{idziak@tcs.uj.edu.pl, jacek.krzaczkowski@uj.edu.pl}
\address{Jagiellonian University,\\  Faculty of Mathematics and Computer Science,\\Department of Theoretical Computer Science,\\
ul. Prof. S. \L{}ojasiewicza 6,\\ 30-348, Krak\'ow, Poland}
\date{July 23, 2017}

\thanks{The project is partially supported by Polish NCN Grant \# 2014/14/A/ST6/00138.}

\maketitle

\begin{abstract}
Satisfiability of Boolean circuits is among the most known and important problems
in theoretical computer science.
This problem is \npc in general but becomes polynomial time when restricted either to monotone gates or linear gates.
We go outside Boolean realm and consider circuits built of any fixed set of gates
on an arbitrary large finite domain.
From the complexity point of view this is strictly connected with the problems of solving equations (or systems of equations) over finite algebras.

The research reported in this work was motivated by a desire to know
for which finite algebras $\m A$ there is a polynomial time algorithm
that decides if an equation over $\m A$ has a solution.
We are also looking for polynomial time algorithms that decide if two circuits over a finite algebra compute the same function.
Although we have not managed to solve these problems in the most general setting
we have obtained such a characterization for a very broad class of algebras
from congruence modular varieties.
This class includes most known and well-studied algebras
such as groups, rings, modules (and their generalizations like quasigroups, loops, near-rings, nonassociative rings, Lie algebras), lattices (and their extensions like Boolean algebras, Heyting algebras or other algebras connected with multi-valued logics including MV-algebras).

This paper seems to be the first systematic study of the computational complexity of satisfiability of non-Boolean circuits and solving equations over finite algebras.
The characterization results provided by the paper is given in terms of nice structural properties of algebras for which the problems are solvable in polynomial time. 
\end{abstract}

\ssection{Introduction
\label{sect-intro}}

One of the most celebrated \npc  problem is \textsc{SAT} --
the problem that takes on a Boolean expression 
and decides whether there is a $\set{0,1}$-valuation of variables
that satisfies this expression.

The most popular variant of this problem is \textsc{CNF-SAT}
(often called \textsc{SAT} as well)
in which the input is in Conjunctive Normal Form.
A formula in \textsc{CNF} is a conjunction of clauses each of which is a disjunction of
(e.g. at most 3) literals.
These clauses (if ternary) can be treated as (ternary) relations on the set $\set{0,1}$
and the SAT problem simply asks whether a conjunction of atomic formulas
(in this new relational language) is satisfiable.
This generalizes to any (finite) relational structure, say $\rel D$,
where the problem lies in answering whether a conjunction of atomic formulas
(in the language of $\rel D$) is satisfiable in $\rel D$.
This is now known under the name of Constraint Satisfaction Problem, or \csp{} for short.
A characterization of relational structures over $\set{0,1}$ for which \csp{}
is solvable in a polynomial time has been done in \cite{schaefer}.
The structures for which a polynomial time algorithm is not provided in \cite{schaefer}
have been shown there to be \npc with respect to \csp{}.
The similar dichotomy conjecture for \csp{} over arbitrary finite domains
has been stated by Feder and Vardi in \cite{feder-vardi}.
With the help of deep algebraic tools two algorithmic paradigms have been shown
to be fruitful in establishing polynomial time complexity of a wide range of relational structures.
One of these paradigms generalizes Gaussian elimination method
to the realm of algebras with few subpowers \cite{idziak-etal}.
The other generalizes DATALOG programming to local consistency checking method
\cite{barto-kozik:datalog}.
Both of those methods were explored to their limits,
so that they cannot be extended any further and a new 
approach is needed.
Very recently three independent proofs
(one by D.~Zhuk, another one by A.~Rafiey, J.~Kinne and T.~Feder
and the third one by A.~Bulatov)
of the \csp{} dichotomy conjecture have been announced.

In contrast to \textsc{CNF-SAT} the problem o satisfiability of general Boolean expression
is often called \textsc{CIRCUITS SAT} or \csat{} for short.
After restricting this \npc problem for example to the circuits that are
either monotone (only AND and OR gates)
or linear (only XOR gates)
the problem becomes solvable in a polynomial time.
Thus it is natural to isolate those collections of 2-valued gates that lead
to circuits with polynomially solvable satisfiability problem.
Actually such characterization of tractable families of 2-valued gates
can be inferred from the results of \cite{gorazd-krzacz:2el}.

In general, different collections of admissible gates (on a given set) give rise to algebras
(in the universal algebraic sense).
Thus we will talk about circuits over a fixed algebra $\m A$.
In this language the output gates of such circuits
can be represented by terms of an algebra $\m A$
(or polynomials of $\m A$, if values on some input gates are fixed).
We also relax the notion of satisfiability of such circuits to be read:

\begin{itemize}
  \item[]$\csat A$\\
        given a circuit over $\m A$ with two output gates $\po g_1, \po g_2$
        is there a valuation of input gates $\o x$ that gives the same output on
        $\po g_1, \po g_2$, i.e. $\po g_1(\o x)=\po g_2(\o x)$.
\end{itemize}
Note here, that in some cases (including $2$-element Boolean algebra)
the satisfiability of $\po g_1(\o x)=\po g_2(\o x)$
can be replaced by satisfiability of $\po g(\o x) = c$,
where $c$ is a constant and $\po g$ is a new output gate that combines $\po g_1$ and $\po g_2$.

In a circuit that has more than two output gates it is also natural to state the following question.
We will see that this very similar question has different taste.
\begin{itemize}
  \item[]$\mcsat A$\\
        given a circuit over $\m A$ with output gates $\po g_1, \po g_2, \ldots, \po g_k$
        is there a valuation of input gates $\o x$ that gives the same output on
        $\po g_1, \po g_2\ldots, \po g_k$,
        i.e. $\po g_1(\o x)=\po g_2(\o x)= \ldots = \po g_k(\o x)$.
\end{itemize}

From algebraic point of view problem $\csat A$ asks for the solutions
of an equation over $\m A$.
The problem $\mcsat A$ asks for solutions of a special system of equations over $\m A$.
But we can also ask for solutions of arbitrary systems of equations.
This however has a more natural wording in purely algebraic terms.
\begin{itemize}
  \item[]$\scsat A$\\
        Given polynomials
        \[
        \po g_1(\o x),\po h_1(\o x), \ \ldots \ , \po g_k(\o x),\po h_k(\o x)
        \]
        of an algebra $\m A$, is there a valuation of the variables $\o x$ in $A$ such that
        \begin{eqnarray*}
        \po g_1(x_1,\ldots,x_n) &=& \po h_1(x_1,\ldots,x_n)\\
                                &\vdots&\\
        \po g_k(x_1,\ldots,x_n) &=& \po h_k(x_1,\ldots,x_n),
\end{eqnarray*}
\end{itemize}

\medskip

With this natural approach via multi valued circuits also the problem \textsc{TAUTOLOGY}
has its natural generalization:
\begin{itemize}
  \item[]$\ceqv A$\\
        given a circuit over $\m A$
        is it true that for all inputs $\o x$
        we have the same values on given two output gates
        $\po g_1, \po g_2$, i.e. $\po g_1(\o x)=\po g_2(\o x)$.
\end{itemize}
In the algebraic setting this is simply the question of equivalence of two terms or polynomials.
Here equivalence of $k$ pairs of terms/polynomials reduces to $k$ independent
$\ceqv{}$ queries.

In Boolean realm the problem $\ceqv{}$ can be treated as the complement of $\csat{}$
and therefore is \conpc.
In general the closely related problem $\ceqv A$ is somehow independent from $\csat A$.
This independence means that all four possibilities of tractability/intractability can be witnessed by some finite algebras.
For example for the $2$-element lattice $\m L$ the problem $\csat L$ is in $\ptime$
while $\ceqv L$ is \conpc.
An example of a finite semigroup $\m S$ with $\ceqv S \in \ptime$ and $\csat S$ being \npc
can be inferred from \cite{klima:monoids}.

\medskip
It is worth to note that solving equations (or systems of equations) is one of the oldest
and well known mathematical problems
which for centuries was the driving force of research in algebra.
Let us only mention Galois theory, Gaussian elimination or Diophantine Equations.

In the decision version of these problems one asks if an equation (or system of such equations)
expressed in the language of a fixed algebra $\m A$,
has a solution in $\m A$.
In fact, for $\m A$ being the ring of integers this is the famous 10th Hilbert Problem on Diophantine Equations, which has been shown to be undecidable \cite{matiyasevich:10th}.
In finite realms such problems are obviously decidable in nondeterministic polynomial time.
There are numerous results related to problems connected with solving
equations and systems of equations over fixed finite algebras.
Most of them concerns well known algebraic structures as groups
\cite{burris-lawrence:groups}, \cite{goldman-russell},
\cite{horwath-etal}, \cite{horvath-szabo:polsatstar} rings
\cite{horvath:positive}, \cite{burris-lawrence:rings} or lattices
\cite{schwarz} but there are also some more general results
\cite{aichinger-mudrinski}, \cite{larose-zadori}.

\medskip

The main goal of this paper is to attack the classification problems of the form:
for which finite algebras $\m A$ there is an algorithm that answers one of the problems
$\csat A$, $\mcsat A$, $\scsat A$ or $\ceqv A$
in polynomial time with respect to the size of the circuit,
i.e. the size of the underlying graph of the circuit.
It seems that the most natural way to look at these problems is to treat circuits over $\m A$
(or in fact output gates of such circuits) as terms/polynomials of the algebra $\m A$.
This obvious translation makes our attack fruitful, as we can apply deep results and techniques developed by universal algebra such as {\em modular commutator theory} and
{\em tame congruence theory}.
These tools are especially useful in case of algebras generating congruence modular variety.
This assumption covers many well known structures as groups, rings, modules or lattices.
Our attempt to attack the classification problems has resulted
in partial characterization of computational
complexity of \csat{}, \mcsat{} and \ceqv{} for algebras generating congruence modular varieties.
This partial characterization leaves some room to be filled before establishing a dichotomy.

\ssection{The results
\label{sect-results}}

In this section we present the state of the art in more details
and discuss our results and tools.

The first thing in which our research differs from what has been already considered
is that we concentrate on circuits rather than on syntactic form of terms or polynomials.
This difference is visible in how the size of the input is measured.
We have seen how an output gate can be treated as a term or a polynomial.
On the other hand, in an obvious way,
every term over $\m A$ can be treated as a circuit
in which each gate is used as an input to at most one other gate.
This leads to a circuit whose underlying graph is a tree.
However circuits can have more compact representation than terms.
For example, in groups the terms
$\po t_n(x_1,x_2,\ldots,x_n) = [\ldots [[x_1,x_2],x_3] \ldots x_n]$, 
(where $[x,y]=x^{-1}y^{-1}xy$ is the group commutator)
expressed in the pure group language of $(\cdot,\ ^{-1})$ have an exponential size in $n$,
as the number of occurrences of variables doubles whenever we pass from $n$ to $n+1$.
On the other hand the size of a circuit realizing $\po t_n$ has $6n-5$ vertices
as can be seen from the picture below.

\bigskip
\begin{center}
\includegraphics[width=6cm,height=7.2cm]{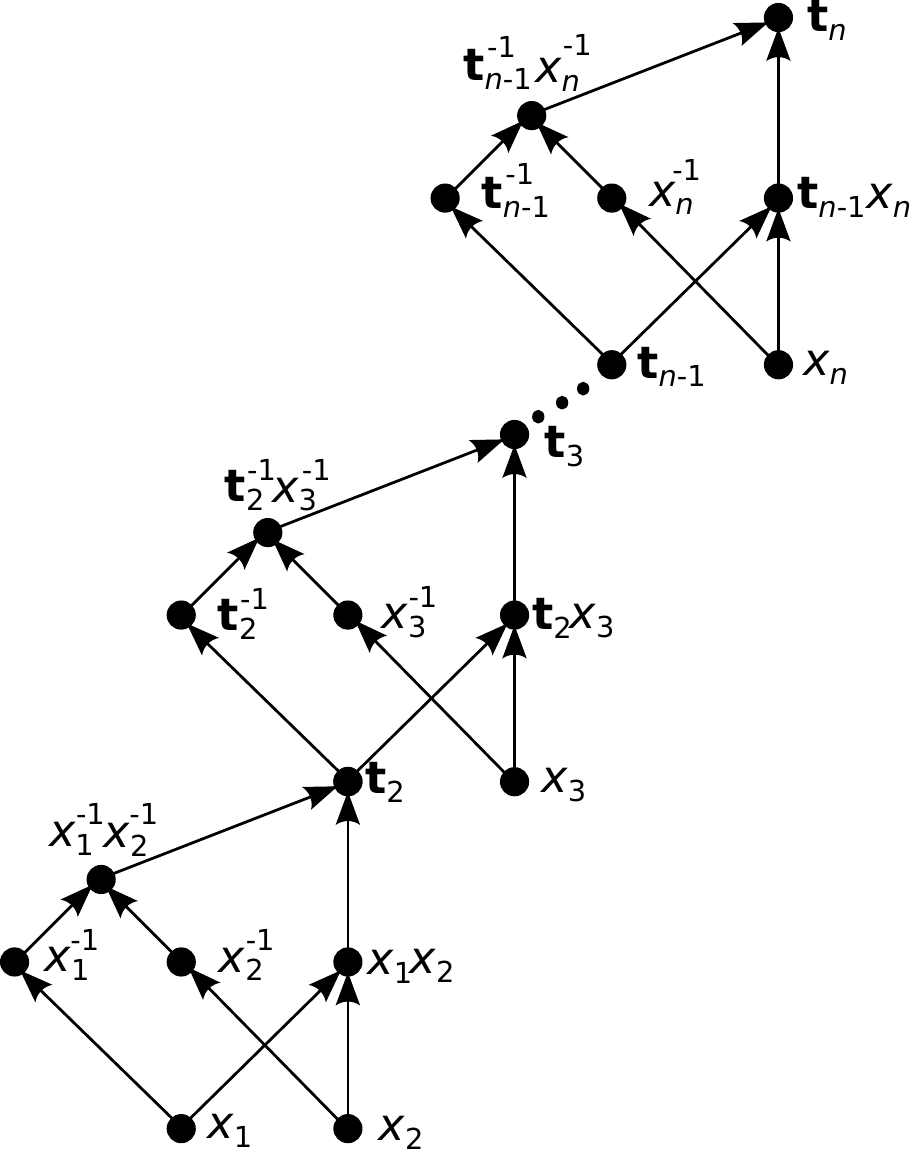}
\end{center}

\bigskip

The consequences of this (exponential) disproportion in measuring the input size for terms and circuits are illustrated by the following example.

\begin{ex}
\label{exam-s3}
There are finite groups $\m A$ such that
$\csat A$ is \npc, while there are polynomial time algorithms
for solving equations over $\m A$.

There are also finite groups $\m B$ such that
$\ceqv A$ is \conpc, while there are polynomial time algorithms
for checking the identities in $\m B$.
\end{ex}

\begin{proof}
The first such example for $\csat{}$ was the symmetric group $\m S_3$
for which polynomial time algorithm was shown in \cite{horvath-szabo:groups},
while the first author's observation on the \npc{}ness is included in \cite{gorazd-krzacz:preprimal}.

The papers \cite{horvath:metabelian,horvath-szabo:polsatstar,horvath-szabo:a4}
contain many other examples of solvable non-nilpotent groups
which witness both statements in our example.
\end{proof}

Note that in case of $\scsat{}$ there is no such disproportion in the size
as every polynomial equation $z = \po t(\o x)$
can be replaced by a system of equations of the form $y=\po f(x_1,\ldots,x_k)$ or $y=c$,
where $\po f$ is one of the basic operations and $c$ is a constant.
This replacement has linear size with respect to the circuit representing $\po t(\o x)$.
For example for the above term
$\po t_n(x_1,x_2,\ldots,x_n) = [\ldots [[x_1,x_2],x_3] \ldots x_n]$,
slightly abusing our conditions,
we can use the following representation
\begin{eqnarray*}
t_2&=& x^{-1}_1 x^{-1}_2 x_1 x_2\\
t_3&=& t^{-1}_2 x^{-1}_3 t_2 x_3 \\
   &\vdots&  \\
t_n&=& t^{-1}_{n-1} x^{-1}_n t_{n-1} x_n,
\end{eqnarray*}
in which $t_2,\ldots,t_n$ are treated as variables.

However, even in the setting of a single equation,
representing a polynomial $\po t(\o x)$ by its corresponding circuit
and looking at the size of this circuit (instead of the syntactic length of $\po t$)
allows us to harmlessly expand the original language of the algebra $\m A$ by finitely many polynomials.
In fact in our intractability proofs we will often expand the language
of the original algebra $\m A$ by finitely many polynomials of $\m A$.
This will allow us to code \npc problems in much more smooth way.
Note that the possibility of such expansions show that the characterizations we are looking for
can be done up to polynomial equivalence of algebras.
Two algebras are polynomially equivalent if they have the same universes and each polynomial of one of them can be defined by composing the polynomials of the other one.

It turns out that quite a few results on the complexity
of the problems $\csat{}$, $\mcsat{}$, $\scsat{}$ and $\ceqv{}$
are already known for particular kinds of (finite) algebras.

\medskip
\noindent

\begin{ex}
\label{ex-groups}
Finite Groups:
\begin{itemize}
  \item If $\m A$ is Abelian then $\scsat A \in \ptime$ (by Gaussian elimination),
        and for all other groups $\scsat A$ is \npc \
        \cite{goldman-russell}.
  \item $\csat A$ is in $\ptime$, whenever $\m A$ is nilpotent \cite{goldman-russell}
        and \npc \ otherwise \cite{goldman-russell,horvath-szabo:polsatstar}.
  \item $\ceqv A$ is in $\ptime$, whenever $\m A$ is nilpotent \cite{burris-lawrence:groups}
        and \conpc \ otherwise \cite{horwath-etal,horvath-szabo:polsatstar}.
\end{itemize}
\end{ex}

\begin{ex}
\label{ex-rings}
Finite Rings:
\begin{itemize}
  \item If $\m A$ is essentially an Abelian group (i.e. satisfies the identity $xy=0$)
        then $\scsat A \in \ptime$ (by Gaussian elimination),
        and for all other rings $\scsat A$ is \npc \ \cite{larose-zadori}.
  \item $\csat A$ is in $\ptime$, whenever $\m A$ is nilpotent \cite{horvath:positive}
        and \npc \ otherwise \cite{burris-lawrence:rings}.
  \item $\ceqv A$ is in $\ptime$, whenever $\m A$ is nilpotent
        and \npc \ otherwise (see \cite{hunt-stearns} for commutative rings and          \cite{burris-lawrence:rings} for general case).
\end{itemize}
\end{ex}

\begin{ex}
\label{ex-lattices}
\medskip
\noindent
Finite Lattices:
\begin{itemize}
  \item $\csat A \in \ptime$ if $\m A$ is distributive
        and \npc \ otherwise \cite{schwarz}.
  \item For all nontrivial lattices $\m A$, $\scsat A$ is \npc \
        while $\ceqv A$ is \conpc \ (easy to see).
\end{itemize}
\end{ex}

The examples given above suggest that the existence of polynomial time algorithms for the considered circuits problems go hand in hand with nice structure theory of the underlying algebras.
However there are only two results that can be considered general enough to be expressed in structural terms.
These results are stated in the following two theorems.

First note that E.~Aichinger and N.~Mudrinski \cite{aichinger-mudrinski} have shown the following
theorem, a partial converse of which is our Theorem \ref{int-ceqv-nonnil}.

\begin{thm}
\label{int-ceqv-supernil}
If $\m A$ is a finite supernilpotent algebra from a congruence variety
then $\ceqv A$  is in $\ptime$.
\end{thm}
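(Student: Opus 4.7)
The plan is to leverage the structural rigidity of supernilpotent algebras in congruence modular varieties in order to replace the a priori exponential universal quantifier in $\ceqv A$ by a polynomial-size exhaustive check. I would first fix an arbitrary constant $0 \in A$ and, for a tuple $\o a = (a_1,\ldots,a_n) \in A^n$, call $\mathrm{supp}(\o a) = \set{i : a_i \neq 0}$ the \emph{support} of $\o a$. Restricting the test to tuples of bounded support is what makes the enumeration tractable.

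The structural heart of the argument, and the main obstacle, is the following lemma: there is a constant $k$, depending only on $\m A$, such that any two $n$-ary polynomial functions $\po p, \po q$ of $\m A$ are equal as functions $A^n \to A$ if and only if they agree on the set
\[
T_n \;=\; \set{\o a \in A^n : \card{\mathrm{supp}(\o a)} \leq k}.
\]
To prove this I would invoke the higher-commutator calculus for congruence modular varieties developed by Bulatov, Aichinger--Mudrinski and Kearnes--Kiss. Supernilpotence of class $k$ is equivalent to the triviality of the $(k+1)$-ary higher commutator of the top congruence, and in the presence of the Mal'cev term that any nilpotent CM algebra admits, this forces every polynomial to admit a multilinear-style Taylor decomposition whose pieces are each absorbing in at most $k$ arguments at $0$. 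A routine induction on the arity $n$, peeling off one coordinate at a time and exploiting absorption, then shows that a polynomial is pinned down by its values on low-support tuples.

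Granted the lemma, the algorithm is immediate. On input a circuit over $\m A$ with $n$ input gates and two output gates $\po g_1,\po g_2$, enumerate every $\o a \in T_n$, evaluate the circuit bottom-up at $\o a$ in time $O(\card C)$, and accept iff $\po g_1(\o a) = \po g_2(\o a)$ for every such tuple. Since
\[
\card{T_n} \;=\; \sum_{i=0}^{k}\binom{n}{i}\bigl(\card A-1\bigr)^i \;=\; O(n^k)
\]
and $k$ is a constant determined by $\m A$, the overall running time is $O(n^k \cdot \card C)$, polynomial in the size of the circuit.

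The difficulty is thus entirely concentrated in the structural lemma: extracting from the combinatorics of higher commutators both the uniform bound $k$ and the absorbing Taylor expansion of polynomials. Once this algebraic input is secured, the algorithmic content and its complexity analysis are routine, and one should expect the argument to break down outside the congruence modular setting, where higher commutators no longer control polynomial structure in the same uniform way.
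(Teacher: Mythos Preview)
The paper does not give its own proof of this theorem; it is stated as a result of Aichinger and Mudrinski \cite{aichinger-mudrinski} and used as a black box. Your proposal is essentially a sketch of their argument: $k$-supernilpotence in a congruence modular variety is equivalent (cf.\ Theorem~\ref{thm-prime-supernil}) to the statement that every polynomial decomposes, via the Mal'cev term and a fixed constant $0$, into a sum of absorbing commutator polynomials of rank at most $k$, and such a decomposition is recoverable by inclusion--exclusion from the values on tuples of support $\leq k$. Hence two polynomials agree everywhere iff they agree on $T_n$, and the $O(n^k\cdot\card C)$ exhaustive check follows. This is correct and matches the cited proof.
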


The second general result is that of B.~Larose and L.~Z{\'a}dori \cite{larose-zadori}.
After observing that $\scsat{}$ has exactly the same expressive power as $\csp{}$
they used mutual translation between $\scsat{}$ and $\csp{}$ to prove the first part of the next characterization, while the second one is a form of Gaussian elimination.
\begin{thm}
\label{int-scsat-zadori}
For a finite algebra $\m A$ from a congruence modular variety:
\begin{itemize}
  \item if $\scsat A$ is not \npc then  $\m A$ is affine
        (i.e. $\m A$ is polynomially equivalent to a module over a finite ring),
  \item if $\m A$ is affine then $\scsat A \in \ptime$.
\end{itemize}
\end{thm}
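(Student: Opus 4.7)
The plan hinges on the observation highlighted just before the statement: $\scsat A$ and $\csp{}$ have the same expressive power. I would formalize this by setting up mutual polynomial-time reductions. Given a $\scsat A$-instance (a conjunction of equations $\po g_i(\o x) = \po h_i(\o x)$), I view it as a primitive positive query in the relational structure $\rel R$ on the universe $A$ whose basic relations are the graphs of the basic operations of $\m A$ together with equality and singletons naming the constants; conversely, every relation primitively-positively definable over $\rel R$ is expressible by such a system. This lets me import the algebraic machinery developed around CSP.

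For the second clause, suppose $\m A$ is affine, so $\m A$ is polynomially equivalent to a module $\m M$ over a finite ring $R$. Then every polynomial $\po g(\o x)$ of $\m A$ takes the form $r_1 x_1 + \cdots + r_n x_n + c$ with $r_i \in R$ and $c \in M$, and each equation $\po g_i(\o x) = \po h_i(\o x)$ rewrites as a linear equation over $R$. The whole system becomes a system of linear equations over a finitely generated module over a finite ring, solvable in polynomial time by a Gaussian-elimination-style procedure (via Smith normal form, or by iterative variable elimination while maintaining the affine subspace of admissible residues). This places $\scsat A$ in $\ptime$.

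For the first clause, assume $\m A$ is a finite non-affine algebra in a congruence modular variety, and aim to show $\scsat A$ is \npc. By the reduction above it suffices to exhibit a finite set of primitively-positively-definable relations over $\m A$ whose $\csp{}$ is \npc. I would exploit the commutator-theoretic characterization of affineness in congruence modular varieties: such an $\m A$ is affine iff it is abelian iff the term condition holds with respect to a fixed Mal'cev polynomial. Failure of the term condition produces a polynomial $\po p$ together with tuples witnessing $\po p(\o a, \o b) \neq \po p(\o a, \o c)$ while $\po p(\o d, \o b) = \po p(\o d, \o c)$; from such data one can pp-define a relation whose polymorphism clone omits every Taylor term, hence whose $\csp{}$ is \npc.

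The main obstacle lies in upgrading this local failure of the term condition into an explicit \npc\ reduction. The technical heart rests on the structural results of Hobby--McKenzie and Freese--McKenzie on congruence modular varieties, combined with tame congruence theory: a non-affine $\m A$ in a modular variety must admit the tame-congruence types $\tn{3}$, $\tn{4}$, or $\tn{5}$ on some minimal sets, and each of these types supplies pp-definable relations encoding classical \npc\ problems such as NAE-SAT or 1-in-3-SAT. Careful bookkeeping through sections, polynomial reducts, and polynomial equivalence is needed to ensure the coding is uniform and polynomial in the size of the input system.
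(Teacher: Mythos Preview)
The paper does not give its own proof of this theorem; it is quoted as the result of Larose and Z{\'a}dori, with only the one-line gloss that they ``used mutual translation between $\scsat{}$ and $\csp{}$ to prove the first part of the next characterization, while the second one is a form of Gaussian elimination.'' Your high-level plan (reduce $\scsat A$ to a $\csp{}$ over the relational structure built from graphs of basic operations and constants, then use Gaussian elimination for the affine case) is exactly that gloss, and your treatment of the second bullet is fine.

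Your argument for the first bullet, however, has a real gap. In the last paragraph you assert that a non-affine $\m A$ in a congruence modular variety must carry a minimal set of type $\tn 3$, $\tn 4$, or $\tn 5$. This is false on two counts. First, type $\tn 5$ never occurs in congruence modular varieties (see Theorem~\ref{thm-cm-types}). Second, and more seriously, ``non-affine'' in the CM setting just means ``non-Abelian'', and there are plenty of non-Abelian CM algebras with $\typset{\m A}=\set{\tn 2}$: any non-Abelian nilpotent group (e.g.\ the quaternion group $Q_8$) is such an example. So your type-based encoding of \textsc{NAE-SAT} or \textsc{1-in-3-SAT} simply does not apply to the case that matters most, namely solvable but non-Abelian algebras.

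Your earlier paragraph gestures at the right idea---use a term-condition failure to pp-define a hard relation---but the sentence ``from such data one can pp-define a relation whose polymorphism clone omits every Taylor term'' is doing all the work and is not justified. Note that the term operations of $\m A$ itself are polymorphisms of your relational structure $\rel R$ only when they commute with the basic operations; what you actually need is that the \emph{idempotent centralizer clone} of $\m A$ has no Taylor operation whenever $\m A$ is non-Abelian. That is the substantive content of the Larose--Z{\'a}dori argument, and it does not follow from the tame-congruence-type bookkeeping you propose.
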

Not as much is known when one leaves the congruence modularity realm.
It is worth to note however that an important extension of Theorem \ref{int-scsat-zadori}
to finite algebras from varieties omitting $\tn 1$
(in the sense of \cite{hm}) can be found in \cite{zadori}.

Also a number of results on semigroups do not fall in congruence modular setting
but these results are still about particular type of algebras.
The paper \cite{klima-etal} gives a nice, but somewhat technical,
characterization of finite monoids $\m A$ for which $\scsat A \in P$.
There are also several results on the complexity of $\scsat A$
for  particular semigroups or classes of semigroups,
but we are far from having a full characterization similar to that for monoids.
This is because the paper \cite{klima-etal} contains a proof that the expressive power of
$\scsat A$ over semigroups is equivalent to the expressive power of CSP.
Surprisingly another class of algebras with the same expressive power
is the class of algebras with unary operations only \cite{broniek, feder-etal}.

In Section \ref{sect-ex} we will prove that the expressive power of $\csat{}$
is no weaker that this of CSP, as expressed below.

\begin{prp}
\label{prp-csp-csat}
For every finite relational structure $\rel D$ (with finitely many relations)
there is a finite algebra $\mrel A D$
such that the problem $\csp D$ is polynomially equivalent to $\csat {A[\rel D]}$.
\end{prp}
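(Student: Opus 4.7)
The plan is to adjoin to $D$ two fresh truth-value symbols $\top,\bot$ and to equip $A = D \cup \set{\top,\bot}$ with a conjunction-like operation on $\set{\top,\bot}$ together with indicator functions of the relations of $\rel D$, so that each atomic CSP constraint $R_i(\o x)$ corresponds to a single gate of the circuit that outputs $\top$ exactly when $\o x \in R_i$. Concretely, I would take $\mrel A D$ to have universe $A = D \cup \set{\top,\bot}$ (with $\top,\bot \notin D$) and basic operations: the nullary constant $\top$; a binary operation $\wedge$ defined by $\top\wedge\top=\top$ and $x\wedge y=\bot$ on every other pair in $A^2$; and, for each relation $R_i$ of $\rel D$ of arity $n_i$, an $n_i$-ary indicator $\chi_i$ returning $\top$ on tuples from $R_i$ and $\bot$ on every other tuple in $A^{n_i}$.

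For the reduction $\csp D \leq_P \csat{A[\rel D]}$, given a CSP instance $R_{i_1}(\o x^1) \wedge \ldots \wedge R_{i_m}(\o x^m)$ I would build a circuit with two output gates $\po g_1,\po g_2$: $\po g_1$ computes the iterated conjunction $\chi_{i_1}(\o x^1) \wedge \ldots \wedge \chi_{i_m}(\o x^m)$ (for instance, as a balanced tree of $\wedge$-gates), and $\po g_2$ is the constant gate $\top$; this has linearly many gates in the CSP input size. A $D$-valued solution of the CSP forces every $\chi_{i_j}(\o x^j)=\top$ and hence $\po g_1=\top=\po g_2$. Conversely, any $A$-valued assignment with $\po g_1=\po g_2=\top$ must, by the definition of $\wedge$, satisfy $\chi_{i_j}(\o x^j)=\top$ for every $j$; the definition of $\chi_{i_j}$ then forces $\o x^j \in R_{i_j} \ci D^{n_{i_j}}$, and any variable not appearing in a constraint may be set to an arbitrary element of $D$.

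For the reverse direction $\csat{A[\rel D]}\leq_P \csp D$, $\csat{A[\rel D]}$ is tautologically a CSP over the relational structure on $A$ whose basic relations are the graphs of the operations of $\mrel A D$ together with equality. Each such graph is pp-definable from $\rel D$ expanded by the singleton unary predicates $\set{\top},\set{\bot}$, so a standard pp-interpretation -- representing each element of $A$ by a pair in $D^{2}$ via the diagonal together with two fixed off-diagonal pairs for $\top$ and $\bot$ -- yields a polynomial-time reduction to $\csp D$. The main obstacle lies precisely in this reverse step: one must check that the two auxiliary symbols $\top,\bot$ contribute no new expressive power beyond $\rel D$, which is ensured by the indicator-plus-conjunction design since every interaction between $\set{\top,\bot}$ and $D$ inside $\mrel A D$ is mediated by the indicators $\chi_i$, whose non-trivial behavior reduces to membership in the original relations $R_i$.
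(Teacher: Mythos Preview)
Your construction and the forward reduction $\csp D \leq_P \csat{A[\rel D]}$ are essentially identical to the paper's (the paper uses $0,1$ where you use $\bot,\top$, and does not put the constant into the signature, but this is immaterial since $\csat{}$ allows polynomials).

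The gap is in your reverse reduction. You write that ``each such graph is pp-definable from $\rel D$ expanded by the singleton unary predicates $\set{\top},\set{\bot}$'', but $\top,\bot$ live in $A$, not in $D$, so this expansion is a type error. Under your encoding $A\hookrightarrow D^2$ the singletons $\set{\top},\set{\bot}$ become two fixed off-diagonal pairs, and to pp-define the domain of the interpretation (the diagonal together with these two pairs) and the graphs of $\wedge$ and the $\chi_i$ you would need $\rel D$ to pp-define those particular constants. An arbitrary finite relational structure need not pp-define any constant at all, so the pp-interpretation you sketch simply does not exist in general. Your closing sentence acknowledges this is the crux but offers only an informal reassurance, not an argument.

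The paper bypasses this entirely by analysing the polynomial clone of $\mrel A D$ directly. Because every basic operation has range contained in $\set{0,1}$, every polynomial of $\mrel A D$ is either a projection, a constant, or has range contained in $\set{0,1}$; and a non-constant polynomial with range $\set{0,1}$ is, as a function, of the form
\[
x^0_1 \meet \ldots \meet x^0_{k_0} \meet f_{R_1}(\o x^1) \meet \ldots \meet f_{R_s}(\o x^s)\ (\meet\ 1),
\]
where the arguments of each $f_{R_i}$ are bare variables (any compound subterm inside an $f_R$ would force the value $0$). Consequently every $\csat{}$ instance $\po g_1(\o x)=\po g_2(\o x)$ is either trivially decidable (and is sent to a fixed satisfiable or unsatisfiable $\csp D$ instance, which is why the paper first assumes both exist) or reduces to $\po t(\o x)=1$ with $\po t$ of the displayed shape, which translates verbatim back to a $\csp D$ instance. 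This concrete clone analysis is what makes the reverse direction go through; your pp-interpretation route would need substantial additional hypotheses on $\rel D$ to be valid.
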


Unlike in the $\scsat{}$ setting we do not know whether the expressive power of $\csat{}$ is not bigger than the one of $\csp{}$.

\begin{prob}
\label{prob-csat-csp}
Is it true that for every finite algebra $\m A$ there exists a relational structure
$\rel D[\m A]$ such that the problems
$\csat A$ and $\csp {{\rel D}[\m A]}$ are polynomially equivalent?
\end{prob}

The above difference between a single equation and a system of equations
is probably a consequence of the presence of an external conjunction
in systems of equations.
Intuitively, to replace a system of equations by a single equation,
one needs to squeeze many terms (or polynomials) into a single one.
This requires an analogue of an internal conjunction present in Boolean algebras.
Since such a squeeze is not always possible, more algebras
may have polynomial time algorithms for $\csat{}$ than for $\scsat{}$.
Actually our work is going to confirm this claim.

\bigskip

One of the main difficulties in characterizing finite algebras with $\scsat A \in \ptime$ is that this property does not carry over quotient algebras (unless $\ptime = \np$).
The paper \cite{klima-etal} contains an example of a finite semigroup $\m A$ and its congruence $\theta$ with $\scsat {A/\theta}$ being \npc \ while $\scsat A \in \ptime$.
The example below (an easy proof of which is postponed to Section \ref{sect-ex})
shows that this unwanted phenomena occurs
for the $\csat{}$ problem, as well.

\begin{ex}
\label{ex-non-hom}
There is a finite algebra $\m A$ and its congruence $\theta$ such that
$\csat A \in \ptime$ while $\csat {A/\theta}$ is \npc.
\end{ex}

Since passing to quotient algebras may not preserve polynomial time complexity for $\csat{}$,
it is natural to work under the stronger assumption that not only
$\csat A \in \ptime$, but $\csat {A/\theta} \in \ptime$ for all congruences $\theta$ of $\m A$.
Such assumption has also a natural interpretation.
Given $\m A$ we want a fast method to solve equations over $\m A$, or at least decide if such equations have solutions.
However such solutions may not exist in the original algebra $\m A$.
They obviously do exist in $\m A/1_{\m A}$,
where $1_{\m A}$ is the congruence collapsing everything.
Thus the best we can do, is to determine (existence of)
the solutions with best possible precision, i.e. modulo the smallest congruences possible.
This however requires $\m A$ to be regular enough so that $\csat{A^\prime}$ is in \ptime for all quotients $\m A^\prime$ of $\m A$.

\medskip
After fixing the setting we are working in, we can state our main result in the next theorem
which in fact summarizes Theorems \ref{thm-nil-dl-decomp} and \ref{thm-supernil-dl-decomp}.

\begin{thm}
\label{int-csat}
Let $\m A$ be a finite algebra from a congruence modular variety.
\begin{enumerate}
\item
\label{int-csat-nonpc}
    If $\m A$ has no quotient $\m A^\prime$ with $\csat{A^\prime}$ being \npc
    then $\m A$ is isomorphic to a direct product $\m N \times \m D$,
    where $\m N$ is a nilpotent algebra
    and $\m D$ is a subdirect product of $2$-element algebras
    each of which is polynomially equivalent to the $2$-element lattice.
\item
\label{int-csat-ptime}
    If $\m A$ decomposes into a direct product $\m N \times \m D$,
    where $\m N$ is a supernilpotent algebra
    and $\m D$ is a subdirect product of $2$-element algebras each of which is polynomially equivalent to the $2$-element lattice,
    then for every quotient $\m A^\prime$ of $\m A$ the problem
    $\csat{A^\prime}$ is solvable in polynomial time.
\end{enumerate}
\end{thm}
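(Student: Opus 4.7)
The plan is to prove the two parts separately, as they require quite different tools: Part \eqref{int-csat-nonpc} is an algebraic structure theorem that would be established by tame congruence theory and modular commutator theory, while Part \eqref{int-csat-ptime} is a constructive algorithmic statement. Together they correspond to the promised Theorems \ref{thm-nil-dl-decomp} and \ref{thm-supernil-dl-decomp}.

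For Part \eqref{int-csat-nonpc} I would argue contrapositively: if $\m A$ does \emph{not} split as $\m N \times \m D$ of the stated shape, I exhibit a quotient $\m A^\prime$ with $\csat{A^\prime}$ \npc. In a congruence modular variety the only TCT types occurring are $\tn 2$ (affine), $\tn 3$ (Boolean) and $\tn 4$ (lattice). If every prime quotient has type $\tn 2$, then $\m A$ is solvable, hence (in CM) nilpotent, and we take $\m N=\m A$ with $\m D$ trivial. Otherwise some prime quotient has type $\tn 3$ or type $\tn 4$. For type $\tn 3$ the associated minimal set produces a polynomial image strictly larger than the two--element lattice (it contains a two-element Boolean algebra), into which 3-SAT is easily interpreted; for a non-distributive type $\tn 4$ neighborhood we mimic the Schwarz \npc reduction for non-distributive lattices. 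So the only remaining residue is: type $\tn 2$ everywhere else, and all higher types distributive/Boolean and two-element. The commutator-theoretic radical then isolates the solvable (hence nilpotent) direct factor $\m N$; the fact that the complementary factor $\m D$ has zero commutator with $\m N$ upgrades the subdirect decomposition to a direct product $\m A \cong \m N \times \m D$, with $\m D$ forced to be a subdirect product of two-element polynomially-lattice algebras.

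For Part \eqref{int-csat-ptime}, fix $\m A = \m N \times \m D$ of the stated shape and any quotient $\m A^\prime = \m A/\theta$. Since $\m N$ is supernilpotent and $\m D$ is essentially a Boolean/distributive-lattice skeleton, the commutator of any factor congruence of $\m N$ with any factor congruence of $\m D$ is trivial; by the Gumm/Fitting-style decomposition in CM varieties this forces $\theta = \theta_N \times \theta_D$, so $\m A^\prime \cong (\m N/\theta_N)\times(\m D/\theta_D)$, where the first factor remains supernilpotent and the second remains a subdirect product of two-element lattice-like algebras. An instance of $\csat{A^\prime}$, being an equation $\po g_1(\o x) = \po g_2(\o x)$, then projects coordinatewise to an equation on each factor, so the two subproblems can be solved independently. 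On the $\m D$-side, each two-element subdirect coordinate gives a satisfiability question over a polynomial reduct of the two-element lattice/Boolean algebra, which is in \ptime (it is essentially a 2-SAT-type problem, as in Schwarz's argument). The $\m N$-side is the real work: exploiting the finite supernilpotent class, one expresses every polynomial of $\m N/\theta_N$ in a bounded-degree commutator/Taylor normal form, and then solves the resulting equation layer by layer, each layer reducing to a linear equation over an abelian algebra and hence tractable by Gaussian elimination.

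The main obstacle is exactly this supernilpotent $\csat{}$ algorithm. While Theorem~\ref{int-ceqv-supernil} of Aichinger and Mudrinski shows that \emph{checking} an identity in a supernilpotent CM algebra is polynomial (because identities admit a small canonical form), satisfiability demands that we \emph{invert} that canonical form, producing a witness. One must therefore ensure that (i)~the number of layers in the commutator expansion is polynomial in the circuit size rather than in the potentially exponential term size, and (ii)~passing from one layer to the next does not force branching over super-polynomially many residues modulo the next commutator. This gap between testing and witnessing is also precisely why the hypothesis in Part \eqref{int-csat-ptime} must be strengthened from \emph{nilpotent} to \emph{supernilpotent}, and explains why the two parts of the theorem do not yet close into a full dichotomy.
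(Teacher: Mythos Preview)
Your proposal has several genuine gaps.

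\textbf{Part \eqref{int-csat-nonpc}.} The claim ``if every prime quotient has type $\tn 2$, then $\m A$ is solvable, hence (in CM) nilpotent'' is false: solvable does not imply nilpotent, even in congruence modular varieties. This is not a technicality but one of the main steps of the paper. The paper handles it via Lemma~\ref{lm-solvable}: if $\m A$ is solvable but not nilpotent one passes to a subdirectly irreducible quotient whose monolith $\mu$ satisfies $\comm{1}{\mu}>0$, and then carefully builds commutator-type polynomials $\po s(x,y)$ that encode either graph $k$-colorability or 3-SAT, depending on the size of $U/\alpha$. Your outline skips this entirely.

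You also miss the mechanism that separates the type~$\tn 2$ and type~$\tn 4$ parts into a \emph{direct} product. The paper does this through the $(\tn 2,\tn 4)$- and $(\tn 4,\tn 2)$-transfer principles (Theorems~\ref{thm-24transfer} and~\ref{thm-42transfer}): their failure yields \npc{}ness, and their success forces the type-$\tn 2$ and type-$\tn 4$ radicals $\rho_2,\rho_4$ to be complementary factor congruences (Corollary~\ref{cor-decomposition}). Your ``commutator-theoretic radical'' sketch does not explain why the decomposition is direct rather than merely subdirect. Finally, for the type~$\tn 4$ factor, ``mimic Schwarz for non-distributive lattices'' is not enough: the paper needs the substantial Theorem~\ref{si2el}, which uses directed Gumm terms and a delicate partial-order argument to force every subdirectly irreducible with $\typset{\m A}=\set{\tn 4}$ to have exactly two elements.

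\textbf{Part \eqref{int-csat-ptime}.} Your proposed supernilpotent algorithm---solve ``layer by layer, each layer reducing to a linear equation over an abelian algebra''---is precisely the approach whose branching problem you yourself flag, and you do not resolve it. The paper's algorithm (Theorem~\ref{thm-supernil}) is completely different: it uses a Ramsey-type argument (following Goldmann--Russell and Horv\'ath) to show that if $\po w(\o x)=0$ has a solution at all, it has one with at most $d$ nonzero coordinates, where $d$ depends only on $\m A$; one then brute-forces over the $O(n^d)$ candidate supports. Supernilpotency enters because it bounds the rank of commutator polynomials, which is what makes the Ramsey coloring argument go through. For the $\m D$-side, the paper's argument is also simpler than a 2-SAT reduction: it observes the Uniform Solution Property (if $\po t(\o x)=\po s(\o x)$ has any solution, then some constant tuple $(a,\ldots,a)$ is one), so one only tests $\card{D}$ candidates.
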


To understand the above result first note that the congruence modularity assumption covers most algebraic structures considered in classical mathematics.
In particular it includes groups (and their extensions like rings, fields),
and lattices (and their extensions like Boolean algebras or other algebras connected with multi-valued logics).
This assumption does not cover however semigroups (or even semilattices) or multiunary algebras.

The conditions (\ref{int-csat-nonpc}) and (\ref{int-csat-ptime})
show that the nilpotent groups and rings as well as distributive lattices
mentioned in Examples \ref{ex-groups}, \ref{ex-rings} and \ref{ex-lattices}
are in fact paradigms for $\csat{}$ tractability in congruence modular realm.
In fact the structural conditions described in Theorem \ref{int-csat}, 
when specialized to groups, rings or lattices,
gives the already known characterizations presented in the Examples.

The decomposition enforced in (\ref{int-csat-nonpc}) is a result of
almost a dozen constructions interpreting  \npc problems
(mostly \textsc{SAT} and $k$-\textsc{Colorability})
into $\csat A$, whenever $\m A$, or some of its quotients,
fails to satisfy one of the structural conditions that finally lead to this nice decomposition.

The second factor, $\m D$,  of this decomposition is easier to understand than the first one.
It essentially behaves like a finite distributive lattice,
but the algebra $\m D$ does not need to actually have lattice operations.
Instead $\m D$ is composed of $2$-element algebras each of which does have lattice operations expressible by polynomials (and all other operations monotone with respect to this lattice order).

The first factor, $\m N$, of this decomposition requires
the general algebraic notion of nilpotency in congruence modular setting that goes back
to the late 1970's when Smith \cite{smith},
Hagemann and Herrmann \cite{hagemann-herrmann}, Gumm \cite{gumm}
and finally Freese and McKenzie \cite{fm} developed necessary deep tools
of {\em modular commutator theory}.
In fact a notion of the commutator multiplication $\comm{\alpha}{\beta}$
of congruences $\alpha, \beta$ of arbitrary algebras
was defined in a way
that extends multiplication of ideals in ring theory
and commutator multiplication of normal subgroups in group theory.
With the help of such commutator one can define solvable and nilpotent congruences and algebras.

Finite nilpotent groups (and rings) behave very nicely.
In particular they decompose into direct products of groups (or rings) of prime power order.
Unfortunately such nice decomposition of nilpotent algebras in congruence modular varieties
does not hold in general.
However, in this general setting, nilpotent algebras that have this nice decomposition
(and have only finitely many basic operations)
are exactly those that are supernilpotent.
In fact supernilpotency has been introduced by another universal algebraic generalization of commutator multiplication of congruences.

The nilpotent/supernilpotent gap that occurs in Theorem \ref{int-csat}
resists to be easily filled.
This is because in supernilpotent case there is a bound on the arity
of the so called commutator polynomials.
These commutator polynomials can imitate the behavior of the long conjunction.
In nilpotent (but not supernilpotent) case arbitrary long conjunctions are expressible.
But this can be probably done at the expense of exponentially large (with respect to the arity)
circuits needed to represent those conjunctions.
This expected exponential size probably prevents polynomial time reduction of \npc problems to $\csat{}$ in nilpotent but not supernilpotent case.

The reductions we have produced to show intractability of the considered problems
are based on the local behavior described by the second deep tool of universal algebra
known as {\em tame congruence theory}.
This theory, created and described by D.~Hobby and R.~McKenzie in \cite{hm},
is a perfect tool for studying the local structure of finite algebras.
Instead of considering the whole algebra and all of its operations at once,
tame congruence theory allows us to localize to small subsets
on which the structure is much simpler to understand and to handle.
According to this theory
there are only five possible ways a finite algebra can behave locally.
The local behavior must be one of the following:
\begin{enumerate}
\item[{\tn 1}.]  a finite set with a group action on it,
\item[{\tn 2}.]  a finite vector space over a finite field,
\item[{\tn 3}.]  a two element Boolean algebra,
\item[{\tn 4}.]  a two element lattice,
\item[{\tn 5}.]  a two element semilattice.
\end{enumerate}
Now, if from our point of view a local behavior of an algebra is `bad' then we can often show that
the algebra itself behaves `badly'.
For example, since $\csat{}$ or $\ceqv{}$ is intractable in $2$-element Boolean algebra
one can argue that in any finite algebra with tractable $\csat{}$ or $\ceqv{}$
type $\tn 3$ cannot occur (see Theorem \ref{thm-type-3}).

On the other hand it is not true that if the local behavior is `good'
then the global one is good as well.
Several kinds of interactions between these small sets
can produce a fairly messy global behavior.
Such interactions often contribute to \npc{}ness of the considered problems
(see for example Lemma \ref{lm-solvable}).
Also the relative `geographical layout' of those small sets
can result in unpredictable phenomena,
as in Theorems \ref{thm-24transfer} and \ref{thm-42transfer}.

\medskip

Combining Theorems \ref{int-csat} and \ref{int-scsat-zadori}
we are able to infer the following corollary.

\begin{cor}
\label{int-mcsat}
Let $\m A$ be a finite algebra from a congruence modular variety.
\begin{enumerate}
\item
\label{int-csat-nonpc}
    If $\m A$ has no quotient $\m A^\prime$ with $\mcsat{A^\prime}$ being \npc
    then $\m A$ is isomorphic to a direct product $\m M \times \m D$,
    where $\m M$ is an affine algebra
    and $\m D$ is a subdirect product of $2$-element algebras
    each of which is polynomially equivalent to the $2$-element lattice.
\item
\label{int-csat-ptime}
    If $\m A$ decomposes into a direct product $\m M \times \m D$,
    where $\m M$ is an affine algebra
    and $\m D$ is a subdirect product of $2$-element algebras each of which is polynomially equivalent to the $2$-element lattice,
    then for every quotient $\m A^\prime$ of $\m A$ the problem
    $\csat{A^\prime}$ is solvable in polynomial time.
\end{enumerate}
\end{cor}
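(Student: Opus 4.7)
The plan is to derive Corollary \ref{int-mcsat} by combining Theorem \ref{int-csat} with Theorem \ref{int-scsat-zadori}, exploiting the immediate chain of reductions $\csat A \leq_P \mcsat A \leq_P \scsat A$ together with a converse $\scsat N \leq_P \mcsat N$ that holds whenever $\m N$ is a finite nilpotent algebra in a congruence modular variety.

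For part (1), suppose no quotient $\m A^\prime$ of $\m A$ has $\mcsat{A^\prime}$ that is \npc. Since $\csat{A^\prime} \leq_P \mcsat{A^\prime}$, no quotient has \npc $\csat{A^\prime}$ either, so Theorem \ref{int-csat}(1) supplies a decomposition $\m A \cong \m N \times \m D$ with $\m N$ nilpotent (in a congruence modular variety) and $\m D$ the claimed subdirect product of two-element lattice-like algebras. The factor $\m N$ is a homomorphic image of $\m A$, so $\mcsat N$ is not \npc, and the converse reduction yields that $\scsat N$ is not \npc either; Theorem \ref{int-scsat-zadori} then forces $\m N$ to be affine. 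Writing $\m M := \m N$ gives the decomposition claimed in (1).

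For part (2), write $\m A = \m M \times \m D$. The class of algebras of this shape is closed under homomorphic images inside a congruence modular variety, so it suffices to show $\mcsat A \in \ptime$ for $\m A$ of this form. An $\mcsat$ query over the product $\m M \times \m D$ decomposes coordinatewise into an $\mcsat$ query over $\m M$ and an $\mcsat$ query over $\m D$, which can be solved separately on each factor. On $\m M$ the query is an instance of $\scsat M$, which is in \ptime by Gaussian elimination (Theorem \ref{int-scsat-zadori}). On $\m D$ the argument of Theorem \ref{int-csat}(2) applies, strengthened by noting that the lattice operations on each two-element factor of $\m D$ are polynomials, so the constraint $g_1 = g_2 = \cdots = g_k$ is equivalent to the single $\csat$ constraint $g_1 \vee \cdots \vee g_k = g_1 \wedge \cdots \wedge g_k$ evaluated componentwise.

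The main obstacle is the reduction $\scsat N \leq_P \mcsat N$ for nilpotent $\m N$. A Mal'cev polynomial $m$ of $\m N$, which exists because nilpotent algebras in a congruence modular variety have a Mal'cev polynomial, satisfies $m(a, a, z) = z$; but the converse implication "$m(a, b, z) = z$ forces $a = b$" does not hold beyond the affine case, so the naive encoding of a system by the outputs $m(g_i, h_i, z), z$ is insufficient. The actual argument uses the commutator-theoretic structure of nilpotent algebras: iterated commutator polynomials can imitate a long conjunction (see the remark following Theorem \ref{int-csat}), and they allow one to package the $k$ equations $g_i = h_i$ into a single $\mcsat$ instance whose size grows polynomially. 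Executing this construction without the stronger supernilpotency hypothesis, which would furnish commutator terms of bounded arity, is the heart of the matter.
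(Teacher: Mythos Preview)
Your overall plan matches the paper's: reduce to Theorem \ref{int-csat} via the chain $\csat{} \leq_P \mcsat{} \leq_P \scsat{}$, and then close the gap between nilpotent and affine by exhibiting a reduction $\scsat N \leq_P \mcsat N$ for nilpotent $\m N$ in a congruence modular variety. The problem is that you misidentify the obstacle and, as a result, leave the crucial step unfinished.

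You write that for a Malcev polynomial $\po d$ the implication ``$\po d(a,b,z)=z$ forces $a=b$'' fails beyond the affine case, and therefore abandon the naive encoding in favor of an (unexecuted) commutator-polynomial construction. This is precisely where you go wrong. For a \emph{nilpotent} algebra $\m N$ in a congruence modular variety, every translation $x \mapsto \po d(x,b,c)$ is a permutation of $N$ (this is Corollary~7.4 of \cite{fm}, and the paper already invokes it in the proof of Theorem \ref{thm-supernil}). Since $\po d(b,b,c)=c$, injectivity gives $\po d(a,b,c)=c \iff a=b$. Hence, after fixing any $a\in N$, the system $\po g_i(\bar x)=\po h_i(\bar x)$, $i=1,\dots,k$, is equivalent to the single $\mcsat$ instance
\[
\po d(\po g_1(\bar x),\po h_1(\bar x),a)=\cdots=\po d(\po g_k(\bar x),\po h_k(\bar x),a)=a,
\]
and this is exactly the paper's argument. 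No commutator polynomials are needed; the remark after Theorem \ref{int-csat} about ``long conjunctions'' concerns the separate nilpotent/supernilpotent gap for $\csat{}$, not the reduction you want here.

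A smaller issue in your treatment of part (2): you reduce $\mcsat D$ to a single $\csat D$ equation via $\po g_1\vee\cdots\vee\po g_k = \po g_1\wedge\cdots\wedge\po g_k$. This presumes that $\vee,\wedge$ are polynomials of $\m D$, but the paper explicitly exhibits such a $\m D$ that is \emph{not} polynomially equivalent to a distributive lattice. The paper instead uses the Uniform Solution Property established in the proof of Theorem \ref{thm-dl}: if the system has a solution then already some constant tuple $(a,\dots,a)$ solves it, so one just checks all $a\in D$.
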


Our constructions used to show that lack of nice structure of the algebra $\m A$
leads to intractability of $\csat A$  can be also modified to work for intractability of $\ceqv A$ so that we are able to prove a partial converse to Theorem \ref{int-ceqv-supernil}.

\begin{thm}
\label{int-ceqv-nonnil}
Let $\m A$ be a finite algebra from a congruence modular variety.
If $\m A$ has no quotient $\m A^\prime$ with $\ceqv{A^\prime}$ being \conpc
then $\m A$ is nilpotent.
\end{thm}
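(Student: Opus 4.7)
The plan is to argue by contrapositive: assuming $\m A$ is not nilpotent I aim to produce a quotient $\m A^\prime$ of $\m A$ for which $\ceqv{A^\prime}$ is \conpc. The congruence modularity assumption allows us to invoke tame congruence theory \cite{hm} and modular commutator theory \cite{fm}; in particular the only local types that can occur in $\m A$ are $\tn 2$, $\tn 3$ and $\tn 4$, and solvability of $\m A$ is equivalent to its type set being contained in $\set{\tn 2}$. Non-nilpotency thus splits into two regimes: either some prime quotient $\alpha \prec \beta$ in $\con A$ has type $\tn 3$ or $\tn 4$, or else $\m A$ is solvable yet its lower central sequence of congruences stabilizes strictly above $0_{\m A}$.

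First, I would dispose of the non-solvable regime. If type $\tn 4$ occurs in $\m A$, I pick a prime pair $\alpha \prec \beta$ of lattice type, pass to $\m A/\alpha$, and use the minimal sets supporting the induced prime of type $\tn 4$ to extract polynomials of $\m A/\alpha$ that realize the $2$-element lattice operations on a trace. Since $\ceqv{}$ of the $2$-element lattice is \conpc\ (Example \ref{ex-lattices}), a lattice tautology instance can be encoded as a pair of polynomials of $\m A/\alpha$ that agree on all inputs iff the original identity holds; the gluing across the minimal set is carried out by the absorbing polynomials and difference term available in any CM variety. The case of type $\tn 3$ is handled analogously via the $2$-element Boolean algebra, whose tautology problem is likewise \conpc. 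Both reductions closely parallel the \npc reductions to $\csat{}$ that underlie Theorem \ref{int-csat}(\ref{int-csat-nonpc}), with the encoding of equation satisfaction replaced by the encoding of term disequivalence.

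The harder regime is the solvable one, where the type set equals $\set{\tn 2}$ but $\m A$ is not nilpotent. Here I would iterate the lower central sequence of $\m A$ and pick a congruence $\alpha$ maximal with the property that $\m A/\alpha$ is still not nilpotent. The quotient $\m A/\alpha$ then carries congruences $\beta, \gamma$ whose iterated commutator $\comm \beta \gamma$ never collapses to $0_{\m A/\alpha}$. Using the centralizer machinery of \cite{fm} I would construct a polynomial of $\m A/\alpha$ that imitates the group commutator exploited in the solvable non-nilpotent group treatment of \cite{horvath-etal,horvath-szabo:polsatstar}, the paradigmatic example being $\m S_3$ from Example \ref{exam-s3}. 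Iterating this polynomial along a balanced circuit of depth $n$, as in the commutator circuit depicted in Section \ref{sect-results}, yields a term $\po t_n(\o x)$ that behaves like an exponentially long conjunction of literals while occupying only polynomial circuit size; a complement of \textsc{SAT} or $k$-\textsc{Colorability} can then be encoded as two polynomials of $\m A/\alpha$ that agree on all inputs iff the corresponding instance has no solution.

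The main obstacle will lie in this solvable non-nilpotent case. One must produce modular commutator polynomials in the arbitrary CM setting that faithfully reproduce the group-commutator trick of \cite{horvath-etal,horvath-szabo:polsatstar}, and verify that the resulting gadget remains non-trivial after the passage to $\m A/\alpha$. Ensuring that the commutator polynomial is not collapsed by $\alpha$, that the type $\tn 2$ minimal sets of $\m A/\alpha$ interact with the rest of the algebra well enough to propagate Boolean-valued data in and out, and that the iterated circuit stays polynomial in size while simulating an exponentially long identity, is where the technical heart of the argument will sit.
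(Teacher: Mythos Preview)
Your split into the non-solvable and solvable-but-not-nilpotent regimes is the right structure, and your treatment of types $\tn 3$ and $\tn 4$ is essentially correct (the passage to $\m A/\alpha$ is in fact unnecessary, since for these types $\alpha|_N=0$ on the trace and the lattice/Boolean polynomials already live in $\m A$ itself; but taking a quotient does no harm).

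Where you diverge from the paper is the solvable non-nilpotent case. The paper does \emph{not} build a new iterated-commutator gadget \`a la \cite{horwath-etal,horvath-szabo:polsatstar}; instead it observes that the hard work was already done in Lemma~\ref{lm-solvable}. There one passes (via the argument of Corollary~\ref{cor-solvable}) to a subdirectly irreducible quotient $\m A'$ whose monolith $\mu$ satisfies $\comm{1}{\mu}=\mu$, and constructs polynomials $\po t_G(\o x)$ or $\po t_\Phi(\o x)$ that take only the two values $a$ and $e$. For $\csat{}$ one asked whether $\po t(\o x)=a$ has a solution; for $\ceqv{}$ one simply asks the complementary question, whether $\po t(\o x)=e$ holds for all $\o x$. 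No new construction is needed. Your choice of quotient (``$\alpha$ maximal with $\m A/\alpha$ still not nilpotent'') is also different and less useful: it does not by itself deliver a subdirectly irreducible algebra with $\comm{1}{\mu}>0$, which is what makes the selector polynomial $\po s(x,y)$ and the inclusion--exclusion trick in Lemma~\ref{lm-solvable} work.

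Your proposed route---iterating a commutator-like polynomial to simulate long conjunctions---might be made to work, but it is a genuinely different and harder path. The obstacles you list (ensuring the gadget survives the quotient, controlling the interaction with type~$\tn 2$ minimal sets, keeping circuit size polynomial) are real, and the paper sidesteps all of them by recycling the $\csat{}$ reduction wholesale.
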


A short informal summary of these results is completed in the following table,
where `DL-like' stays for being a subdirect product of algebras polynomially equivalent to $2$-element lattices.

\bigskip\bigskip
\bigskip\bigskip
\begin{center}
\begin{tabular}{|l|l|l|l|}
\hline
                &tractable
                &\multicolumn{1}{c|}{\multirow{2}*{open}}
                &intractable
\\
                &(polynomial time)
                &
                &(\textsf{co-NP}- or \npc)
\\ \hline\hline
\multirow{2}*{$\ceqv{}$}
                &supernilpotent
                &nilpotent
                &non nilpotent
\\
                &\wyj{Aichinger {\em\&} Mudrinski \cite{aichinger-mudrinski}}
                &but not supernilpotent
                &\wyj{Thm \ref{int-ceqv-nonnil}}
\\ \hline\hline
\multirow{2}*{$\csat{}$}
                &supernilpotent $\times$ DL-like
                &nilpotent
                &non (nilpotent $\times$ DL-like)
\\
                &\wyj{Thm \ref{int-csat} (\ref{int-csat-ptime})}
                &but not supernilpotent
                &\wyj{Thm \ref{int-csat} (\ref{int-csat-nonpc})}
\\ \hline
\multirow{2}*{$\mcsat{}$}
                &affine $\times$ DL-like
                &\multicolumn{1}{c|}{\multirow{2}*{---}}
                &otherwise
\\
                &\wyj{Cor \ref{int-mcsat} (\ref{int-csat-ptime})}
                &
                &\wyj{Cor \ref{int-mcsat} (\ref{int-csat-nonpc})}
\\ \hline
\multirow{2}*{$\scsat{}$}
                &affine
                &\multicolumn{1}{c|}{\multirow{2}*{---}}
                &otherwise
\\
                &\wyj{Gaussian elimination}
                &
                &\wyj{Larose {\em\&} Z{\'a}dori \cite{larose-zadori}}
\\ \hline
\end{tabular}
\end{center}

\bigskip

An obvious open question is the following:

\begin{prob}
\label{prob-nil-supernil}
Determine the computational complexity of $\ceqv{}$ and $\csat{}$ for nilpotent, but not supernilpotent finite algebras from congruence modular varieties.
\end{prob}

Another question that arises naturally is the role of quotient algebras in the proofs of
\npc{}ness of considered problems.
Note that the result of B.~Larose and L.~Z{\'a}dori \cite{larose-zadori}
for $\scsat{}$ makes no use of quotient algebras.
This is because a quotient of an affine algebra is affine itself.

Example \ref{ex-non-hom} shows that in general it is not enough to establish
\npc{}ness for a quotient algebra to conclude it for the original one.
However it may suffice in some more restricted setting like for example congruence modularity.
In concrete algebraic structures where basic operations are described explicitly it might be much easier.  
In fact in structures described in Examples \ref{ex-groups}, \ref{ex-rings} and \ref{ex-lattices}, passing to quotients is hidden in the hardness proofs 
and (implicitly) replaced by an involved control over congruences in groups, rings or lattices, respectively.

\begin{prob}
\label{prob-modular-homo}
Is it true that \npc{}ness of $\csat{}$ for some quotient of a finite algebra $\m a$
from a congruence modular variety
implies \npc{}ness of $\csat{}$ for $\m A$ itself.
\end{prob}

Even if the answer to Problem \ref{prob-modular-homo} would be negative the next one remains open.

\begin{prob}
\label{prob-cm-without-homo}
Do the characterizations of Theorems \ref{int-csat} (\ref{int-csat-nonpc}), \ref{int-ceqv-nonnil}
and Corollary \ref{int-mcsat} (\ref{int-csat-nonpc})
remain true without passing to quotient algebras.
\end{prob}

Note here that when restricting to equations of the form
\[
\po t(x_1,\ldots,x_n) = c
\]
where $\po t$ is a polynomial but $c$ is a constant,
the satisfiability in the quotient $\m A/\theta$ reduces
to the satisfiability of at least one of the equations in the following disjunction
\[
\po t(x_1,\ldots,x_n) = c_1 \ \  \lor \ \ldots \ \lor  \ \ \po t(x_1,\ldots,x_n) = c_s,
\]
where $\set{c_1,\ldots,c_s}$ is the equivalence class of $c$ modulo $\theta$.
This Cook style reduction gives the hope to attack the following problem.

\begin{prob}
Characterize finite algebras $\m A$ for which determining the existence of a solution to the equations of the form $\po t(\o x) = c$ can be done in polynomial time.
\end{prob}

In view of Problem \ref{prob-csat-csp},
the natural conjecture about dichotomy for $\csat{}$
is not so evident.
However there is a slightly bigger hope for such dichotomy after:
\begin{itemize}
  \item restricting $\csat{}$ to the equations of the form $\po t(\o x) = c$, and
  \item relaxing many-to-one reductions to Cook reductions.
\end{itemize}

\begin{prob}
Prove the dichotomy in the above settings.
\end{prob}

\ssection{Background material
\label{sect-background}}

In general we use the terminology and notation of \cite{mmt}.
Our brief introduction to this terminology, notation and the facts
that we are using in this paper
is modelled after that in \cite{berman-idziak}.

An {\em algebra} $\m A =\langle A, \te f_i (i\in I)\rangle$ is a
nonvoid set $A$ together with a collection of finitary operations
$\te f_i$ on $A$ indexed by a set $I$.
The set $A$ is called the {\em universe} of the algebra and the $\te f_i$'s
are the {\em fundamental operations} of $\m A$.
For $i \in I$ the operation $\te f_i$ maps
$A^{n_i}$ to $A$, that is, $\te f_i$ is $n_i$-ary.
The function from $I$ to the integers given by $i\mapsto n_i$ is the
{\em similarity type} of the algebra $\m A$.
If $I$ is finite, then the algebra is said to be of {\em finite similarity type}.
For algebras of  finite similarity type we often just list the operations,
e.g., a Boolean algebra might be given as $\m B = \langle B,\land,\lor,\neg,0,1\rangle$.
An algebra is {\em finite} if its
universe is finite and is {\em trivial} if its universe has only
one element.

An algebra $\m A =\langle A, \te f_i (i\in I)\rangle$ may also be viewed
as a model in the language $L$ where $L$ consists
of all the  function symbols $\te f_i$ for  $i\in I$.
When necessary, we distinguish  the function
symbol $\te f_i$ in $L$ from the fundamental operation $\te f_i$
on $A$ by writing $\te f_i^{\m A}$
to denote the $n_i$-ary operation on the algebra $\m A$.
A {\em term} for $L$ over a set of variables $X=\set{x_1,x_2,\dots}$
is defined inductively by letting every $x_j\in X$ be a term and
if $i\in I$ and $\te t_1,\dots,\te t_{n_i}$ are terms, then
$\te f_i(\te t_1,\dots,\te t_{n_i})$ is also a term.
If the variables that appear in a term $\te t$ are in the set
$\set{x_1\dots,x_n}$,
then we say $\te t$ is $n$-ary  and denote this by writing $\te t(x_1,\dots,x_n)$.
If $\te t(x_1,\dots,x_n)$ is an $n$-ary term for $L$ over
$X$ and $\m A$ is an algebra in the language $L$, then the
{\em term operation}
$\te t^{\m A}$ on $\m A$ corresponding to $\te t$ is defined by letting
$x_i^{\m A}$ be the projection on the $i$-th coordinate and if
\[
\te t(x_1,\dots,x_n) =
\te f_i(\te t_1(x_1,\dots,x_n),\dots,\te t_{n_i}(x_1,\dots,x_n)),
\]
then
\[
\te t^{\m A}(a_1,\dots,a_n)=
\te f_i^{\m A}(\te t_1^{\m A}(a_1,\dots,a_n),\dots,\te t_{n_i}^{\m A}(a_1,\dots,a_n))
\]
for all $(a_1,\dots,a_n)\in A^n$.
To simplify notation we often suppress the subscript
on fundamental operations and just write $\te f$ or $\te f(x_1,\dots,x_r)$.
Likewise, we often omit the algebra superscript on term operations.
We also use the bar convention by writing $\bc a$ for $(a_1,\dots,a_n)$.

The collection of all term operations on an algebra
forms a {\em clone}, that is, a family of operations on a set that
contains all the projection operations and is closed under composition.
Thus the set of term operations of $\m A$ is called the
{\em clone of $\m A$} and is denoted $\clo A$.
The clone of all operations on $A$ that can be obtained from the
term operations of $\m A$ and all the constant operations
is called the {\em clone of polynomial operations} of $\m A$ and
is denoted $\pol A$. The set of $n$-ary polynomial operations
of $\m A$ is written $\poln  n A$.
Two algebras $\m A$ and $\m B$ are said to be
{\em polynomially equivalent} if they have the same universe
and $\pol A = \pol B$.
A unary polynomial $\po e \in \poln 1 A$ is said to be idempotent
if $\po e(\po e(a)) = \po e(a)$ for all $a \in A$.

A {\em subuniverse} of an algebra $\m A$ is a set $S\subseteq A$
that is closed under the fundamental operations of $\m A$,
that is, $\te f(\bc a)\in S$ for
every fundamental operation $\po f$ of $\m A$ and every $\bc a \in S^r$.
An algebra $\m B$ is a {\em subalgebra} of $\m A$ if $\m B$ and $\m A$
have the same similarity type, the universe of $\m B$ is a subuniverse
of $\m A$, and for every operation symbol $\te f_i$, the operation
$\te f_i^{\m B}$ is the restriction to $B$ of the operation $\te f_i^{\m A}$.
Since the intersection of an arbitrary family of subuniverses of
an algebra $\m A$ is a subuniverse it follows that the set of all
subuniverses of $\m A$, denoted ${\rm Sub}\: \m A$,
forms a complete lattice when ordered by inclusion.
For a subset $X$ of $A$,
the {\em subuniverse of $\m A$ generated by $X$},
denoted ${\rm Sg}^{\m A}(X)$, is the intersection
of all subuniverses of $\m A$ that contain $X$.
Another way to describe ${\rm Sg}^{\m A}(X)$ is to observe
that the subuniverse generated by $X$ consists of all
elements of the form $\te t^{\m A}(\bc x)$ where
$\te t$ ranges over all terms for the language of $\m A$ and the
$\bc x$ are tuples from $X$.

Given two algebras $\m A$ and $\m B$ of the same similarity
type, a function $h:A \to B$ is called a {\em homomorphism}
if $h(\te f(a_1,\dots,a_r))=\te f(h(a_1),\dots,h(a_r))$ for every
fundamental operation $\te f$ of $\m A$ and all $a_i\in A$.
A homomorphism is an {\em isomorphism} if it is both
one-to-one and onto. If $h$ is a homomorphism from $\m A$
to $\m B$, then $h(A)$ is a subuniverse of $\m B$ and if
$A={\rm Sg}^{\m A}(X)$, then the subuniverse $h(A)$ is
generated by $h(X)$.

A congruence relation on an algebra $\m A$ is an equivalence
relation $\theta$ on $A$ that is preserved by the fundamental
operations of $\m A$, that is, if $f$ is an $r$-ary fundamental
operation and $(a_1,b_1),\dots,(a_r,b_r)\in \theta$,
then $(\te f(\bc a),\te f(\bc b)) \in \theta$.
Notation that is often used to express that $(a,b)$ is in the congruence
relation $\theta$ includes  $a \theta b$ and
$a \stackrel \theta \equiv b$.
For a congruence relation $\theta$ on $\m A$ the congruence class containing
an element $a$ is denoted $a/\theta$ and $A/\theta$ is the set
of all congruence classes of $\theta$.
The intersection of a family of congruence relations of an algebra
is again a congruence relation so the set of all congruence
relations of $\m A$, when ordered by inclusion, forms a complete
lattice.  The lattice of congruence relations of $\m A$ is denoted
$\Cn A$.  The top element of this lattice is $A\times A$ and is
written as $1_A$;  the bottom element is the diagonal
$0_A$, which consists of all pairs  $(a,a)$ for $a\in A$.
We frequently omit the subscripts in $0_A$ and $1_A$.

For a set $Z\subseteq A\times A$ the
{\em congruence relation on $\m A$ generated by Z}
is the intersection of all $\theta \in \cn A$ for which
$Z \subseteq \theta$. We write ${\rm Cg}^{\m A}(Z)$ for this
congruence relation but in the case that $Z=\set{(a,b)}$
we write ${\rm Cg}^{\m A}(a,b)$.
Like in the case of subuniverses ${\rm Sg}^{\m A}(X)$
there is an intrinsic way to describe the congruence ${\rm Cg}^{\m A}(Z)$.

\begin{lm}
\label{lm-malcev-chain}
Suppose that in an algebra $\m A$ we have
$(a,b)\in {\rm Cg}^{\m A}\left( Z \right)$ for some $Z\ci A^2$.
Then
\begin{itemize}
  \item there is a natural number $n$, a sequence $(y_1,z_1),\ldots,(y_n,z_n)$ of pairs in $Z$,
        a sequence of unary polynomials $\po p_1,\ldots, \po p_n$ of $\m A$
        and a sequence $x_0,\ldots,x_n$ of elements of $\m A$ such that
        \[
        \begin{array}{l}
        a=x_0, \ \  x_n=b \mbox{\ \ and \ }\\
        \set{x_{i-1},x_i} = \set{\po p_i(y_i),\po p_i(z_i)} \mbox{\ for all \ } 1\leq i \leq n,
        \end{array}
        \]
  \item if additionally $\m A$ is finite and has a ternary polynomial $\po d$
        that behaves like a Malcev operation on a subset $B \ci A$
        (i.e., $\po d(x,x,y) = y = \po d(y,x,x)$ for all $x,y\in B$)
        which is the range of a unary idempotent polynomial $\po e_B$ of $\m A$,
        then for $a,b \in B$ and $Z=\set{(c,d)}$
        there is a single unary polynomial $\po p$ with
        $\po p(c)=a$ and $\po p(d)=b$.
\end{itemize}
\end{lm}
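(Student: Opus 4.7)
The plan is to handle the two assertions separately, using the classical Mal'cev chain construction for the first and a compression argument based on the Mal'cev operation $\po d$ for the second.

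For the first, define $a \sim b$ to mean that a chain $(x_0, \ldots, x_n)$ as in the statement exists. Reflexivity (take $n = 0$), symmetry (reverse the chain), and transitivity (concatenate chains) are immediate. To show $\sim$ is preserved by every fundamental operation $\te f$ of arity $r$, given chains $a_j = x^{(j)}_0, \ldots, x^{(j)}_{n_j} = b_j$ witnessing $a_j \sim b_j$ in each coordinate $j = 1, \ldots, r$, we run them one coordinate at a time: a single step in coordinate $j$, witnessed by the polynomial $\po p^{(j)}_k$ and pair $(y, z) \in Z$, becomes a step of the composite chain witnessed by the unary polynomial obtained from $\po p^{(j)}_k(t)$ placed in coordinate $j$ of $\te f$ with the remaining coordinates frozen to the already processed $b_{j'}$ (for $j' < j$) and the not-yet-processed $a_{j'}$ (for $j' > j$). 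Since $Z \ci \sim$ trivially (take $n = 1$ with $\po p_1$ the identity) and a routine induction on chain length shows $\sim$ is contained in every congruence containing $Z$, we conclude $\sim = {\rm Cg}^{\m A}(Z)$.

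For the second assertion, fix a chain $a = x_0, \ldots, x_n = b$ witnessing $(a, b) \in {\rm Cg}^{\m A}(Z)$, and to avoid a name clash with the Mal'cev polynomial $\po d$ write $Z = \set{(c, c')}$. Replacing each $\po p_i$ by $\po e_B \circ \po p_i$ (and correspondingly each $x_i$ by $\po e_B(x_i)$) leaves the endpoints untouched since $\po e_B$ fixes $a$ and $b$, and forces every $x_i$ and every $\po p_i$'s range into $B$. For each index $i$ where the step is reversed, i.e.\ $x_{i-1} = \po p_i(c')$ and $x_i = \po p_i(c)$, we replace $\po p_i$ by $\po p'_i(t) = \po e_B(\po d(\po p_i(c), \po p_i(t), \po p_i(c')))$; the Mal'cev identities on $B$ give $\po p'_i(c) = x_{i-1}$ and $\po p'_i(c') = x_i$. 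After this normalization we may assume $\po p_i(c) = x_{i-1}$ and $\po p_i(c') = x_i$ for every $i$.

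The compression is now inductive: set $\po q_1 = \po p_1$ and
\[
\po q_i(t) = \po e_B\bigl(\po d(\po q_{i-1}(t), \po p_i(c), \po p_i(t))\bigr) \qquad \text{for } i \geq 2.
\]
At $t = c$ the identity $\po d(y, x, x) = y$ on $B$ yields $\po q_i(c) = \po q_{i-1}(c) = a$ by induction, while at $t = c'$ the identity $\po d(x, x, y) = y$ together with $\po q_{i-1}(c') = x_{i-1} = \po p_i(c)$ yields $\po q_i(c') = x_i$. Hence $\po p := \po q_n$ satisfies $\po p(c) = a$ and $\po p(c') = b$, as required. The main obstacle—and the reason the hypothesis on $\po e_B$ is present—is ensuring that every intermediate value at which the Mal'cev identities are applied actually lies in $B$; wrapping each composite in $\po e_B$ and using $a, b \in B$ to pin the endpoints inside $B$ from the outset handles this uniformly.
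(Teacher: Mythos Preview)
Your proof is correct and follows essentially the same approach as the paper's. The paper just cites Mal'cev for the first item, and for the second it applies $\po e_B$ to push the chain into $B$, reorients reversed links via $\po d(\po p_i(c),\po p_i(x),\po p_i(d))$, and then merges consecutive links two at a time via $\po d(\po p_1(x),\po p_1(d),\po p_2(x))$---which unfolds to exactly your recursive $\po q_i$.
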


\begin{proof}
The first item is due to Malcev. The second item is also a part of folklore, but we will include its proof for the reader convenience.

To see the second item note that from the first one
we know that $\m A$ has the unary polynomials
$\po p_1,\ldots, \po p_n$ and elements $x_0,\ldots,x_n$ such that
\[
a=x_0, b=x_n \mbox{\ and \ }
\set{x_{i-1},x_i} = \set{\po p_i(c),\po p_i(d)} \mbox{\ for all \ } 1\leq i \leq n
\]
and applying $\po e_B$ to this chain we may assume that the ranges of the $\po p_i$'s are contained in $B$, so that the entire chain of the $x_i$'s lives in $B$.
First look at $\set{x_{i-1},x_i} = \set{\po p_i(c),\po p_i(d)}$.
If $(x_{i-1},x_i) = (\po p_i(d), \po p_i(c))$, replace $\po p_i(x)$ by
$\po d_B(\po p_i(c), \po p_i(x), \po p_i(d))$,
so that after this replacement we have
$(x_{i-1},x_i) = (\po p_i(c), \po p_i(d))$ for all $i$.
Now we will show that if $n>1$ such sequence can be shortened and this additional requirements are kept.
Indeed, for $\po p_{1,2}(x) = \po d_B(\po p_1(x), \po p_1(d), \po p_2(x))$
we have $(\po p_{1,2}(c),\po p_{1,2}(d)) = (x_0,x_2)$.
\end{proof}

\medskip
Some terminology from lattice theory is used in describing $\Cn A$.
For  $a \leq b$ in a lattice $\m L$
the ordered pair $(a,b)$ is called a {\em quotient} in $\m L$ and
the {\em interval} from $a$ to $b$,
written $\intv a b$, is the subuniverse of $\m L$ consisting of
$\set{c \in L:a \leq c \leq b}$.
The element $a$ is {\em covered} by $b$ if $a<b$ and $I[a,b]=\set{a,b}$.
If $a$ is covered by $b$, then
we write $a \prec b$   and call $I[a,b]$
a {\em prime interval} or a {\em prime quotient}.
A {\em subcover} of an element $b$ is any
element covered by $b$.
An {\em atom} in a lattice with least element 0 is any element
that covers 0 and a {\em coatom}  or {\em dual atom}
in a lattice with largest element 1 is any element covered by 1.
If $I[a,b]$ and $I[c,d]$ are intervals such that
$b \land c = a$ and $b \lor c=d$, then
$I[a,b]$ is said to {\em transpose up} to
$I[c,d]$, written $I[a,b] \nearrow I[c,d]$;
and $I[c,d]$ is said to {\em transpose down} to $I[a,b]$,
written $I[c,d] \searrow I[a,b]$; and the two intervals
are called {\em transposes} of one another.
Two intervals are said to be {\em projective}
if one can be obtained from the other by a finite sequence
of transposes.
A fundamental fact in lattice theory is that a lattice
is modular if and only if its projective intervals are isomorphic.
Another equivalent condition for modularity is that the lattice
has no elements $a,b,c$ satisfying $a < b, a\lor c = b \lor c$
and $a \land c = b \land c$.  Such a 5-element sublattice generated
by $a,b,c$  will be called an $[a,b,c]$-{\em pentagon}.

An algebra $\m A$
is {\em simple} if it is nontrivial and $\cn A$ consists solely
of $1_A$ and $0_A$.
An algebra is called {\em congruence distributive}
or {\em congruence modular} if its congruence lattice satisfies
the distributive identity or the modular identity.
Two congruence relations $\theta,\tau \in \cn A$
{\em permute} if $\theta \circ \tau= \tau \circ \theta$.  If
$\theta$ and $\tau$ permute, then
$\theta \lor \tau = \theta \circ \tau$
in $\Cn A$.
An algebra is {\em congruence permutable} if every pair of its congruence relations
permute.

\bigskip
Homomorphisms and congruence relations are naturally linked:
If $h$ is a homomorphism on $\m A$, then the {\em kernel of $h$},
denoted ${\rm ker}(h)$, is the set of all
$(a_1,a_2)\in A^2$ for which $h(a_1)=h(a_2)$.
For every homomorphism $h$ the relation ${\rm ker}(h)$
is a congruence on $\m A$.
On the other hand, if $\theta \in \cn A$, then the congruence classes
of $\theta$ form the elements of an algebra $\m A/\theta$ and
the map $a\mapsto a/\theta$ is a homomorphism from $\m A$ onto $\m A/\theta$
with kernel $\theta$.

\bigskip
We next consider direct products of algebras.
Suppose $\m A_j$, for $j \in J$, are algebras of the same similarity
type indexed by a set $J$.
The {\em direct product} of these algebras,
denoted  $\prod_{j\in J}\m A_j$,
is an algebra of the same similarity type as the $\m A_j$
with universe $\prod_{j\in J} A_j$ and fundamental operations defined
coordinatewise:
$\te f(\bc a, \bc b, \bc c, \dots)_j=\te f(a_j,b_j,c_j,\dots)$
for all $j\in J$.
Often the index set $J$ is finite, say $J=\set{1,\dots,n}$,
and we write $\m A_1\times \dots\times\m A_n$ for the direct product
in this situation.
If $J$ is the empty set, then $\prod_{j\in J}\m A_j$ is
a trivial algebra.
A direct product of copies of a single algebra $\m A$ is called
a {\em direct power} of $\m A$.
We write $\m A^J$ for a direct power of $\m A$ indexed by a set $J$
and we often view the elements of this algebra as functions from $J$ to $A$.

If $\m A = \prod_{j\in J}\m A_j$,
then the {\em $j$-th projection}
map $\pi_j$ is a homomorphism of $\m A$ onto $\m A_j$.
The kernel of $\pi_j$ is usually written as $\eta_j$
and thus for $a,b \in A$
we have $(a,b)\in \eta_j$ if and only if $a(j)=b(j)$.
The $\eta_j$ are called {\em projection kernels}.
It is easily
checked that if $J_1$ and $J_2$ are nonvoid complementary
subsets of  $J$ and
$\alpha_i=\bigwedge_{j\in J_i} \eta_j$ for $i=1,2$,
then in $\Cn A$ we have:
\begin{tenumerate}
\item the congruences $\alpha_1$ and $\alpha_2$ permute,
\item $\alpha_1 \lor \alpha_2 =1_A$,
\item $\alpha_1 \land \alpha_2 = 0_A$.
\end{tenumerate}
Conversely, if $\m A$ is any algebra and $\alpha_1$ and $\alpha_2$
are any two congruences for which these three conditions hold,
then $\alpha_1$ and  $\alpha_2$ are each called
{\em factor congruences} of $\m A$
and $\m A \simeq \m A/\alpha_1\times \m A/\alpha_2$.
An algebra is called {\em directly indecomposable} if it is nontrivial
and is not isomorphic to the direct product of two nontrivial
algebras.  Every finite algebra is isomorphic to the direct
product of directly indecomposable algebras but this is not necessarily
the case for infinite algebras.

Certain subalgebras of a direct product called subdirect
products play an important role in our work.
An algebra $\m A$ is a {\em subdirect product} of the algebras
$\m A_j$, for $j\in J$, if $\m A$ is a subalgebra of $\prod_{j\in J} \m A_j$
and for each $j\in J$ the projection map from $\m A$ to $\m A_j$
is onto.  Thus, if $\m A$ is a subdirect product of the $\m A_j$
for $j\in J$ and $\gamma_j$ is the kernel of
the $j$-th projection homomorphism from $\m A$ to $\m A_j$,
then $\bigcap_{j\in J} \gamma_j=0_A$ and each
$\m A_j$ is isomorphic to $\m A/\gamma_j$.
Conversely, if a family of congruence
relations, $\gamma_j$ for $j\in J$,
on an algebra $\m A$ has the property that
$\bigcap_{j\in J} \gamma_j=0_A$, then $\m A$ is isomorphic to
an algebra that is the subdirect product of $\m A/\gamma_j$ for $j\in J$.
A {\em subdirect representation} of $\m A$ with subdirect
factors $\m A_j$ is a homomorphic embedding $h$
of $\m A$ into $\prod_{j\in J}\m A_j$ for which $h( A)$
is the universe of an algebra that
is a subdirect product of the $\m A_j$.

\medskip
An algebra $\m A$ is {\em subdirectly irreducible} if it is nontrivial
and in any subdirect representation of $\m A$  at least
one of the projection maps is an isomorphism.
We use the following internal characterization of a subdirectly
irreducible algebra:  An algebra $\m A$ is subdirectly irreducible
if and only if there is a $\mu \in \cn A$ such that $ 0_A < \mu $
and $\mu \leq \theta$ for all $0_A < \theta \in \cn A$.
The congruence relation $\mu$ is called the {\em monolith} of the
subdirectly irreducible algebra $\m A$.
Thus, $\m A$ is subdirectly irreducible if and only if in the
lattice $\Cn A$ the element $0_A$ is strictly meet irreducible.
A theorem of Birkhoff states that every algebra is a subdirect product
of subdirectly irreducible algebras. This theorem is equivalent to the
statement that in any algebra $\m A$ the congruence relation $0_A$
is the intersection of all strictly meet irreducible members of
$\Cn A$.  If $\m A$ is  a subdirectly irreducible algebra with monolith
$\mu$, then $\mu=\cg A a b$ for every $(a,b)\in \mu - 0$.

\medskip
The class of algebras is a {\em variety} if it is closed under taking homomorphic images, subalgebras and products of algebras.
A classic preservation theorem of Birkhoff states that
a class of algebras is a variety if and only if it is an
equational class, i.e. a class consisting of all algebras satisfying a certain set of identities.

A variety is {\em congruence distributive, congruence modular,} or
{\em congruence permutable} if every algebra in the variety is
congruence distributive, congruence modular, or has permuting
congruence relations, respectively.
Obviously congruence distributive varieties are congruence modular.
But also congruence permutable varieties are known to be congruence modular.
A classic result of Malcev states that a variety $\vr V$
is congruence permutable if and only if $\vr V$ has a ternary term
$\te d$ for which $\vr V \models \te d(x,x,y) \approx \te d(y,x,x) \approx y$.
Such a term is called a {\em Malcev term} for $\vr V$.

There are also several characterizations (due to B.~J\'onsson, A.~Day or H.P.~Gumm)
of congruence modular or distributive varieties in terms of identities they have to satisfy.
We will use one such characterization via so called {\em directed Gumm terms}
which is described in \cite{kazda-directed}.

\begin{thm}
\label{thm-gumm}
A variety $\vr V$ is congruence modular if and only $\vr V$ has ternary terms
\[
\d_1(x,y,z),\ldots, \d_n(x,y,z), \q(x,y,z)
\]
satisfying the following equalities:
\[
\begin{array}{rclclcll}
x &=&  \d_i(x,y,x), &&                  &&&\mbox{\em for all $i=1,\ldots,n$,}  \\
x &=&  \d_1(x,x,y), &&                  &&&\\
    && \d_i(x,y,y) &=& \d_{i+1}(x,x,y), &&&\mbox{\em for all $i=1,\ldots,n-1$,}\\
    && \d_n(x,y,y) &=& \q(x,y,y),       &&&\\
    &&              && \q(x,x,y)        &=& y.&
\end{array}
\]
\end{thm}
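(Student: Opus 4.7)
The plan is to prove the two implications separately.

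For the \emph{sufficiency} of the directed Gumm terms, I would verify the Shifting Lemma, a standard equivalent of congruence modularity. Fix an algebra $\m A\in\vr V$ with congruences $\alpha,\beta,\theta$ satisfying $\alpha\meet\beta\leq\theta$ and elements $a,b,c,d\in A$ forming the shifting configuration $a\mathrel{\alpha}b$, $a\mathrel{\beta}c$, $b\mathrel{\beta}d$, $c\mathrel{\alpha}d$ with $(a,c)\in\theta$; the goal is to derive $(b,d)\in\theta$. Feeding $a,b,c,d$ into the terms $\d_i$ and $\q$, I expect a $\theta$-chain from $b$ to $d$: the identities $\d_i(x,y,x)=x$ combined with $(a,c)\in\theta$ force consecutive intermediate values to be $\theta$-related, the gluing equalities $\d_i(x,y,y)=\d_{i+1}(x,x,y)$ allow concatenation across the whole sequence, and the Mal'cev-like terminal identities $\d_n(x,y,y)=\q(x,y,y)$ together with $\q(x,x,y)=y$ deliver the final value $d$.

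For the \emph{necessity}, I would first invoke Gumm's classical theorem to obtain \emph{alternating} Gumm terms in $\vr V$, then straighten them into the directed sequence of the statement. Following the strategy of Kazda, Kozik, McKenzie and Moore for directed J\'onsson terms in the distributive case, I would construct the $\d_i$'s by iterated compositions of the alternating Gumm terms and verify the required identities inside the free algebra $\po F=\po F_{\vr V}(x,y,z)$. Congruence modularity of $\con{F}$ is precisely what lets the obstructions to straightening collapse at each stage of the inductive construction, and the composition is then pushed until it merges with the Mal'cev-like tail $\q$.

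The main obstacle will be the straightening step in the necessity direction: simultaneously maintaining $\d_i(x,y,x)=x$ together with the gluing identity $\d_i(x,y,y)=\d_{i+1}(x,x,y)$ throughout the iteration, while driving the sequence toward $\q$. A naive composition of classical alternating Gumm terms does not preserve both requirements at once, and the inductive construction must exploit congruence modularity of $\con{F}$ in a careful way at every stage in order to balance these constraints and terminate in a bounded number of steps.
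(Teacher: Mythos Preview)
The paper does not prove Theorem~\ref{thm-gumm}; it is stated as background material and attributed to \cite{kazda-directed} (Kazda, Kozik, McKenzie, Moore). Your proposal therefore cannot be compared to a proof in the paper, since none is given.

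That said, your outline is broadly aligned with the argument in the cited source. The sufficiency direction via the Shifting Lemma is standard and correct. For necessity, your plan --- start from classical (alternating) Gumm terms and straighten them by iterated composition inside the free algebra, using congruence modularity of $\Cn F$ to control the induction --- is exactly the strategy of \cite{kazda-directed}, which you yourself reference. You correctly identify the real difficulty: maintaining the idempotent-like condition $\d_i(x,y,x)=x$ simultaneously with the gluing identities while progressing toward the Mal'cev tail. Since the paper simply quotes the result, there is nothing further to compare.
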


\bigskip
The material on universal algebra presented so far
is ``classical" and was all known by the mid 1960s.
In our work we will require some deep results that have
come out of two more recent developments:
generalized commutator theory and tame congruence theory.
We now present the basics of these two topics.

Fuller discussions of the generalized commutator may be found in
\cite{fm}, \cite[Section 4.13]{mmt} and \cite[Chapter 3]{hm}.
The main reference for tame congruence theory is \cite{hm}.

We begin with the theory of the commutator.
Let $\m A$ be an algebra, $\gamma \in \cn A$, and $R,S\subseteq A^2$.
We say {\em $R$ centralizes $S$ modulo $\gamma$}, denoted
$C(R,S;\gamma)$, if for every $n \geq 1$, every $(n+1)$-ary
term $\te t$, every $(a,b) \in R$, and every
$(c_1,d_1),\dots,(c_n,d_n)\in S$ we have
\[
\te t(a,\bc c) \congruent{\gamma} \te t(a,\bc d)  \mbox{\ \  iff \ \  }
\te t(b,\bc c) \congruent{\gamma} \te t(b,\bc d).
\]

The following facts are easily verified.

\begin{prp} \label{prp-basic-central1}
For binary relations that are congruence relations on $\m A$:
\begin{tenumerate}
\item
If $\alpha' \subseteq \alpha$ and $\beta' \subseteq \beta$,
then $C(\alpha,\beta;\gamma)$ implies $C(\alpha',\beta';\gamma)$.
\item
If $C(\alpha,\beta,\gamma_i)$ for all $i\in I$,
then $C(\alpha,\beta;\bigcap_{i\in I}\gamma_i)$.
\item
$C(\alpha,\beta;\alpha)$ and $C(\alpha,\beta;\beta)$.
\item
If $C(\alpha_i,\beta,\gamma)$ for all $i\in I$,
then $C(\bigvee_{i\in I}\alpha_i,\beta;\gamma)$.
\item
If $\theta \ci \alpha,\beta, \gamma$ then $C(\alpha,\beta;\gamma)$
holds in $\m A$ iff $C(\alpha/\theta,\beta/\theta;\gamma/\theta)$ holds in the quotient $\m A/\theta$.
\end{tenumerate}
Moreover, {\rm (1)} and {\rm (2)} hold for arbitrary binary relations
$\alpha, \alpha',\beta,\beta'$, and
{\rm (3)} holds if $\alpha$ and $\beta$  are binary relations
that are preserved by the fundamental operations of~$\m A$.
\end{prp}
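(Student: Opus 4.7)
All five items follow mechanically from the universal quantifier in the definition of $C(R,S;\gamma)$; the plan is essentially bookkeeping, so I would organize the arguments to isolate the single genuine computation.

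For (1), since $C(\alpha,\beta;\gamma)$ asserts the equivalence $\te t(a,\bc c) \equiv_\gamma \te t(a,\bc d) \Leftrightarrow \te t(b,\bc c) \equiv_\gamma \te t(b,\bc d)$ for every choice of $(a,b)\in\alpha$ and $(c_i,d_i)\in\beta$, restricting to $(a,b)\in\alpha'$ and $(c_i,d_i)\in\beta'$ only asks the equivalence at fewer witnesses and the conclusion is immediate. For (2), fix witnesses $(a,b)\in\alpha$, $(c_i,d_i)\in\beta$, and a term $\te t$. Using that $x \equiv_{\bigcap_i \gamma_i} y$ iff $x \equiv_{\gamma_i} y$ for every $i\in I$, I apply each $C(\alpha,\beta;\gamma_i)$ separately and reintersect; both directions of the iff transport. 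Neither (1) nor (2) uses compatibility of the relations anywhere, which gives the ``moreover'' part for these items.

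For (4), I would use the standard description of $\bigvee_i \alpha_i$ in $\Cn A$ as the transitive closure of $\bigcup_i \alpha_i$: every pair in the join is joined by a finite chain $a = a_0, a_1, \ldots, a_k = b$ with each consecutive pair $(a_{j-1},a_j)$ in some $\alpha_{i_j}$. The statement ``$\te t(a_j,\bc c) \equiv_\gamma \te t(a_j,\bc d)$'' is then transported step by step from $j=0$ to $j=k$ via the hypothesis $C(\alpha_{i_j},\beta;\gamma)$, which gives the required equivalence of ``$\te t(a,\bc c) \equiv_\gamma \te t(a,\bc d)$'' and ``$\te t(b,\bc c) \equiv_\gamma \te t(b,\bc d)$''. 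For (5), the correspondence theorem gives a bijection between $\theta$-containing congruences of $\m A$ and congruences of $\m A/\theta$; combining this with the identity $\te t^{\m A}(a,\bc c)/\theta = \te t^{\m A/\theta}(a/\theta,\bc c/\theta)$ makes the iff defining $C$ in $\m A$ literally the iff defining $C$ in $\m A/\theta$ (the hypothesis $\theta \subseteq \gamma$ is what allows us to read $\equiv_\gamma$ as $\equiv_{\gamma/\theta}$).

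The only item requiring an actual algebraic step is (3), and it is also where the ``preserved by fundamental operations'' hypothesis enters. For $C(\alpha,\beta;\alpha)$ with $\alpha$ a congruence, I would apply compatibility of $\alpha$ to the tuple of pairs $(a,b), (c_1,c_1), \ldots, (c_n,c_n)$, all of which lie in $\alpha$ by $(a,b) \in \alpha$ together with reflexivity, obtaining $(\te t(a,\bc c), \te t(b,\bc c)) \in \alpha$; an analogous step yields $(\te t(a,\bc d), \te t(b,\bc d)) \in \alpha$. The desired equivalence for $\gamma = \alpha$ then drops out by symmetry and transitivity of $\alpha$, and the $C(\alpha,\beta;\beta)$ case is symmetric. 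For the generalization to compatible (reflexive) binary relations the same row-pair argument applies unchanged. I do not foresee any real obstacle; the whole proposition is a collection of straightforward universal-algebraic manipulations, and the only care needed is to record which properties of the relations (compatibility and reflexivity for (3), arbitrary binary for (1)--(2)) are invoked at each step.
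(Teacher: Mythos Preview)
Your argument is correct and is exactly the standard verification; the paper itself gives no proof at all, merely prefacing the proposition with ``The following facts are easily verified.'' One minor remark: for the ``moreover'' clause on (3) you correctly flag that reflexivity is used to put the diagonal pairs $(c_i,c_i)$ (resp.\ $(a,a)$) into $\alpha$ (resp.\ $\beta$); the paper's phrasing ``preserved by the fundamental operations'' is slightly looser than what the argument actually needs, so your parenthetical ``(reflexive)'' is the honest hypothesis.
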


An algebra $\m A$ is {\em Abelian}, or is said to satisfy the
{\em term condition}, if $C(1_A,1_A;0_A)$ holds.
Note that if $C(1_A,1_A;\gamma)$, then $\m A/\gamma$ is Abelian.

If $\alpha$ and $\beta$ are congruence relations on an algebra
$\m A$, then the {\em commutator} of $\alpha$ and $\beta$, denoted
$\com \alpha \beta$, is the least congruence $\gamma$ for which
$C(\alpha,\beta;\gamma)$.  The {\em centralizer} of
$\beta$ {\em modulo} $\alpha$, denoted $\centr \beta \alpha$, is the
largest congruence $\delta$ for which $C(\delta,\beta;\alpha)$.

We will appeal, often without reference, to the following facts about the
centralizer and the commutator:
\begin{prp}
\label{prp-basic-central2}
For congruence relations in an arbitrary algebra $\m A$
\begin{tenumerate}
\item
$C(\alpha, \beta;\gamma)$ if and only if $\alpha \leq \centr \beta \gamma$,
\item
$\centr {0_A} \alpha = 1_A$,
\item
$\alpha \leq \centr \beta \alpha$.
\setcounter{entmp}{\value{enumi}}
\end{tenumerate}
If $\m A$ belongs to a congruence modular variety then we additionally have {\rm (see \cite{fm})}
\begin{tenumerate}
\setcounter{enumi}{\value{entmp}}
\item
$\com \alpha \beta = \com \beta \alpha$,
\item
$C(\alpha,\beta;\gamma)$ if and only if $\com \alpha \beta \leq \gamma$,
\item
$\com \alpha {\bigvee_i \beta_i} =\bigvee_i\com \alpha {\beta_i}$,
\item
$\centr {\bigvee_i \beta_i} \alpha = \bigwedge_{\; i} \centr {\beta_i} {\alpha}$,
\item
$\centr \beta {\bigwedge_{\; i} \alpha_i} = \bigwedge_{\; i} \centr \beta {\alpha_i}$,
\item
if the intervals $I[\alpha_1,\beta_1]$ and $I[\alpha_2,\beta_2]$ are projective
in the  lattice $\Cn A$, then
$\centr {\beta_1}{\alpha_1} =\centr{\beta_2}{\alpha_2}$.
\end{tenumerate}
\end{prp}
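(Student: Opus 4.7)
The plan is to dispose of (1)--(3) directly from the definitions, reduce (5) to closure under intersection, cite the classical Freese--McKenzie machinery for the genuinely modular facts (4), (6) and (9), and finally derive (7) and (8) from the Galois-type correspondence between centralizer and commutator.

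First I would handle (1), (2), (3) without any modularity assumption. Item (1) is essentially just the definition of $\centr{\beta}{\gamma}$ as the largest congruence $\delta$ satisfying $C(\delta,\beta;\gamma)$; the existence of such a largest element is guaranteed by item (4) of Proposition \ref{prp-basic-central1}, since the union of a family of binary relations centralizing $\beta$ modulo $\gamma$ still centralizes $\beta$ modulo $\gamma$ (and the generated congruence does too). For (2) one checks that $C(1_A,0_A;\alpha)$ always holds: if $(c_i,d_i)\in 0_A$ then $c_i=d_i$, so the two sides of the term condition $\te t(a,\bar c)\congruent{\alpha}\te t(a,\bar d)$ and $\te t(b,\bar c)\congruent{\alpha}\te t(b,\bar d)$ are literally equal. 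Applying (1) then yields $\centr{0_A}{\alpha}=1_A$. Item (3) is a direct restatement of Proposition \ref{prp-basic-central1}(3) via (1). Item (5) requires Proposition \ref{prp-basic-central1}(2): the set of $\gamma$ with $C(\alpha,\beta;\gamma)$ is closed under arbitrary intersections, hence has a least element, which by definition is $\comm{\alpha}{\beta}$; this immediately gives $C(\alpha,\beta;\gamma)\iff\comm{\alpha}{\beta}\leq\gamma$ in any algebra, but in a congruence modular variety it is also well-defined as a binary operation on $\Cn A$.

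Next I would invoke the congruence modular commutator theory of Freese and McKenzie \cite{fm} for the three genuinely modular facts. Symmetry (4) is the most delicate: the bare definition of $C(\alpha,\beta;\gamma)$ is manifestly asymmetric in $\alpha$ and $\beta$, and the equality $\comm{\alpha}{\beta}=\comm{\beta}{\alpha}$ fails in general (e.g.\ in semigroup varieties). The proof uses the Day (or directed Gumm) terms from Theorem \ref{thm-gumm}: one uses them to manufacture, out of a term witnessing a failure of the term condition from the $\alpha$-side, a corresponding term witnessing it from the $\beta$-side. This is the classical Hagemann--Herrmann argument via the Shifting Lemma. Additivity (6) in the second variable reduces, by (5), to showing that if $C(\alpha,\beta_i;\gamma)$ holds for every $i$ then $C(\alpha,\bigvee_i\beta_i;\gamma)$ holds; this is again a consequence of modularity and is proved in \cite{fm}. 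The projectivity invariance (9) is proved by induction on the length of a projectivity chain, so it suffices to treat a single transpose $I[\alpha_1,\beta_1]\nearrow I[\alpha_2,\beta_2]$ with $\alpha_2=\alpha_1\vee\beta_1$ and $\beta_2=\beta_1$ (and the dual); for this step modularity forces $\centr{\beta_1}{\alpha_1}=\centr{\beta_2}{\alpha_2}$ by a direct term-condition calculation, again in \cite{fm}.

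Finally, items (7) and (8) I would obtain formally from (1), (4), (5) and (6) by a standard Galois-connection computation. For (7), for any congruence $\delta$ one has the chain of equivalences
\[
\delta\leq\centr{\bigvee_i\beta_i}{\alpha}
\iff C\!\left(\delta,\tsum_i\beta_i;\alpha\right)
\iff \comm{\delta}{\bigvee_i\beta_i}\leq\alpha,
\]
using (1) and (5); by (4) and (6) this commutator equals $\bigvee_i\comm{\delta}{\beta_i}$, so the condition is equivalent to $\comm{\delta}{\beta_i}\leq\alpha$ for every $i$, i.e.\ $\delta\leq\bigwedge_i\centr{\beta_i}{\alpha}$. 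Taking $\delta$ to be the left-hand side gives one inclusion and taking it to be the right-hand side gives the other. For (8), the analogous chain uses instead Proposition \ref{prp-basic-central1}(2): $\delta\leq\centr{\beta}{\bigwedge_i\alpha_i}$ iff $C(\delta,\beta;\alpha_i)$ for all $i$ iff $\delta\leq\centr{\beta}{\alpha_i}$ for all $i$. The main obstacle in the whole proof is the symmetry statement (4); everything else is either a formal unpacking of definitions or a routine consequence of (4) together with the intersection-closure of $C(\alpha,\beta;-)$.
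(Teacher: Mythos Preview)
The paper does not actually prove this proposition: it is stated as background material, with items (1)--(3) left as easy consequences of the definitions and Proposition~\ref{prp-basic-central1}, and items (4)--(9) simply referred to \cite{fm}. Your plan---dispatch (1)--(3) from the definitions, cite \cite{fm} for the hard modular facts, and derive (7), (8) formally---is therefore in complete agreement with how the paper handles it.

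There is, however, one genuine error in your write-up. You claim that closure of $\{\gamma : C(\alpha,\beta;\gamma)\}$ under intersection ``immediately gives $C(\alpha,\beta;\gamma)\iff\comm{\alpha}{\beta}\leq\gamma$ in any algebra''. This is false: closure under intersection yields a least element but says nothing about upward closure. The term condition $C(\alpha,\beta;\gamma)$ is \emph{not} monotone in $\gamma$ in general algebras, and the equivalence in (5) genuinely requires congruence modularity (it is one of the nontrivial results in \cite{fm}, not a definitional unpacking). This same oversight resurfaces in your derivation of (8): Proposition~\ref{prp-basic-central1}(2) only gives the implication from $\forall i\, C(\delta,\beta;\alpha_i)$ to $C(\delta,\beta;\bigwedge_i\alpha_i)$; the converse direction needs exactly the upward monotonicity you have not justified, and should be obtained from (5) in the modular setting. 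Finally, your description of the transpose in (9) has the indices scrambled (if $I[\alpha_1,\beta_1]\nearrow I[\alpha_2,\beta_2]$ then $\alpha_1=\beta_1\wedge\alpha_2$ and $\beta_2=\beta_1\vee\alpha_2$), though the reduction-to-one-transpose strategy is correct.
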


\medskip
\noindent
A consequence of items (2) and (3) in
Proposition \ref{prp-basic-central1} is that
$\com \alpha \beta \leq \alpha \cap \beta$
for all congruence relations in
an arbitrary algebra,
however, for algebras in congruence distributive varieties
it is known that $\com \alpha \beta = \alpha \cap \beta$,
see e.g., \cite[p. 258] {mmt}.

By means of the commutator it is possible to define notions of
Abelian, solvable and nilpotence for arbitrary algebras.
Let $\alpha \leq \beta$ be congruence relations of an
algebra $\m A$.
The congruence relation $\beta$ is {\em Abelian over} $\alpha$
if $C(\beta,\beta;\alpha)$
and $\beta$ is {\em Abelian}
if $C(\beta,\beta;0_A)$.
We say $\beta$ is {\em solvable over} $\alpha$
if there exists a finite chain
of congruence relations $\beta=\gamma_0 \geq \gamma_1 \geq \dots \geq \gamma_m=\alpha$
such that $\gamma_i$ is Abelian over $\gamma_{i+1}$ for all $i<m$.
A congruence relation $\beta$
is {\em solvable} if it is solvable over $0_A$.
An algebra $\m A$ is {\em solvable} if $1_A$, and hence every congruence
relation of $\m A$, is solvable.  \
An algebra $\m A$ is {\em locally solvable}
if every finitely generated subalgebra of $\m A$ is solvable.
It can be argued that in the congruence lattice of a finite algebra $\m A$
the join of all the solvable congruence relations is itself
solvable.  This largest solvable congruence relation is called
the {\em solvable radical} of $\m A$.

For a congruence $\theta$ and $i=1,2,\dots$ we write
\[ \begin{array}{rclcrcl}
\theta^{(1)}&=&\theta & \ \ \ & \theta^{[1]}&=&\theta \\
\theta^{(i+1)}&=&[\theta,\theta^{(i)}] & \ \ \ &\theta^{[i+1]}&=&[\theta^{[i]},\theta^{[i]}].
\end{array}
\]

A congruence relation $\theta$ on $\m A$ is called
{\em $k$-step left nilpotent}
if $\theta^{(k+1)}=0_A$ and the algebra $\m A$
is {\em left nilpotent} if $1_A$ is $k$-step left nilpotent for some finite $k$.
In the congruence modular varieties we use the word nilpotent rather than left nilpotent.
Note that $\theta$ is solvable if $\theta^{[k]}=0_A$ for some $k$.

The following strengthening of the nilpotency is also relevant in our setting.
First, for a bunch of congruences
$\alpha_1,\ldots,\alpha_k,\beta,\gamma \in \con A$
we say that $\alpha_1,\ldots,\alpha_k$ centralize $\beta$ modulo $\gamma$,
and write $C(\alpha_1,\ldots,\alpha_k,\beta;\gamma)$,
if for all polynomials $\po f \in \pol A$ and all tuples
$\o a_1 \congruent{\alpha_1} \o b_1, \ldots, \o a_k \congruent{\alpha_k} \o b_k$
and $\o u \congruent{\beta} \o v$
such that
\[
\po f(\o x_1,\ldots, \o x_k, \o u) \congruent{\gamma} \po f(\o x_1,\ldots, \o x_k, \o v)
\]
for all possible choices of
$(\o x_1,\ldots, \o x_k)$ in $\set{\o a_1,\o b_1} \times \ldots \times \set{\o a_k,\o b_k}$
but $(\o b_1,\ldots.\o b_k)$,
we also have
\[
\po f(\o b_1,\ldots, \o b_k, \o u) \congruent{\gamma} \po f(\o b_1,\ldots, \o b_k, \o v).
\]
This notion was introduced by A.~Bulatov \cite{bulatov:supercomm}
and further developed by E.~Aichinger and N.~Mudrinski
\cite{aichinger-mudrinski}.
In particular they have shown that for all $\alpha_1,\ldots,\alpha_k \in \con A$
there is the smallest congruence $\gamma$ with $C(\alpha_1,\ldots,\alpha_k;\gamma)$
called the $k$-ary commutator and denoted by $\commm{\alpha_1}{\alpha_k}$.
Such generalized commutator behaves especially well in algebras from congruence modular varieties.
In particular this commutator is monotone, join-distributive and we have
\[
\comm{\alpha_1} {\commm{\alpha_2}{\alpha_k}} \leq \commm{\alpha_1}{\alpha_k}
\]
Thus every $k$-supernilpotent algebra, i.e.~algebra satisfying
$[ \overbrace{1,\ldots,1}^{\text{\scriptsize $k\!+\!1$ times}} ] =0$,
is $k$-nilpotent.
The following properties, that can be easily inferred from
the deep work of R.~Freese and R.~McKenzie \cite{fm}
and K.~Kearnes \cite{kearnes:small-free}, have been summarized in \cite{aichinger-mudrinski}.

\begin{thm}
\label{thm-prime-supernil}
For a finite algebra $\m A$ 
from a congruence modular variety the following conditions are equivalent:
\begin{tenumerate}
\item
$\m A$ is $k$-supernilpotent,
\item
$\m A$ is $k$-nilpotent, decomposes into a direct product of algebras of prime power order
and the clone $\clo A$ is generated by finitely many operations,
\item
$\m A$ is $k$-nilpotent and all commutator polynomials have rank at most $k$.
\end{tenumerate}
\end{thm}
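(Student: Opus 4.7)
The plan is to establish the cycle (1) $\Rightarrow$ (3) $\Rightarrow$ (2) $\Rightarrow$ (1), drawing on the higher commutator calculus just described together with two deep structural results: the direct product decomposition of finite nilpotent algebras from \cite{fm} and Kearnes' analysis of small free spectra from \cite{kearnes:small-free}.

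For (1) $\Rightarrow$ (3), I would unpack the definitions. The hypothesis that $[\,\underbrace{1_A,\ldots,1_A}_{k+1}\,]=0_A$ translates, via the polynomial formulation of the higher commutator $C(\alpha_1,\ldots,\alpha_{k+1};\gamma)$, into the statement that no polynomial of $\m A$ can witness an essentially $(k\!+\!1)$-ary commutator, which is precisely the bound on the rank of commutator polynomials. The fact that $k$-supernilpotence entails $k$-nilpotence is recorded in the excerpt immediately before the theorem, via the inequality $\comm{\alpha_1}{\commm{\alpha_2}{\alpha_k}} \leq \commm{\alpha_1}{\alpha_k}$.

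For (3) $\Rightarrow$ (2), I would invoke the Freese--McKenzie structural machinery. In a finite nilpotent algebra from a congruence modular variety the congruence lattice admits an analogue of a Sylow decomposition indexed by the prime divisors of $|A|$, but those ``Sylow'' congruences are only factor congruences under additional hypotheses. The bound on the rank of commutator polynomials is exactly what makes the relevant centralizer computations go through, so that the Sylow congruences permute with one another and generate $1_A$ disjointly, yielding the decomposition of $\m A$ as a direct product of prime power order factors. The finite generation of $\clo A$ then follows because the bounded-rank condition reduces the description of an arbitrary polynomial to polynomials of arity at most $k$ acting on a nilpotent quotient, and on a finite set there are only finitely many such.

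Finally, for (2) $\Rightarrow$ (1), I use that supernilpotence in a congruence modular variety is preserved under and reflected by direct products, which reduces the problem to a single $k$-nilpotent algebra $\m A_i$ of prime power order whose clone is finitely generated. Here Kearnes' dichotomy for free spectra from \cite{kearnes:small-free} applies: on a prime power algebra, finite generation of the clone together with $k$-nilpotence forces the free spectrum to grow at a sufficiently controlled rate that the $(k+1)$-ary commutator must collapse to $0_A$, i.e.\ $\m A_i$ is $k$-supernilpotent. The main obstacle is the implication (3) $\Rightarrow$ (2): general $k$-nilpotent algebras in a congruence modular variety do not split along the prime factorization of their cardinality, and extracting such a decomposition from the commutator rank hypothesis is exactly where the delicate work of \cite{fm} on centralizers, together with Kearnes' spectrum analysis, is indispensable.
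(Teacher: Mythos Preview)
The paper does not give its own proof of this theorem: it is stated as background material, with the sentence immediately preceding it reading ``The following properties, that can be easily inferred from the deep work of R.~Freese and R.~McKenzie \cite{fm} and K.~Kearnes \cite{kearnes:small-free}, have been summarized in \cite{aichinger-mudrinski}.'' So there is nothing in the paper to compare your argument against beyond that citation.

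Your sketch points to exactly the same sources and is broadly in line with how the result is assembled in \cite{aichinger-mudrinski}. Two remarks on the sketch itself. First, in (3)~$\Rightarrow$~(2) the finite generation of $\clo A$ is not quite for the reason you give: the relevant fact (from \cite{kearnes:small-free}) is that bounding the essential arity of commutator polynomials bounds the free spectrum of the variety generated by $\m A$, and a finite algebra with logarithmically-polynomial free spectrum has a finitely generated clone. Second, in (2)~$\Rightarrow$~(1) you should be careful that the degree of supernilpotence you recover is the \emph{same} $k$ as the degree of nilpotence assumed; Kearnes' dichotomy yields supernilpotence for some degree, and matching it to $k$ uses the commutator-polynomial description again together with the prime-power hypothesis. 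These are exactly the points where \cite{aichinger-mudrinski} does the bookkeeping, so your plan of deferring to that paper is appropriate.
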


The commutator polynomials mentioned in condition (3) of Theorem \ref{thm-prime-supernil}
are the paradigms for the failure of supernilpotency.
We say that $\po t(x_1, \ldots, x_{k-1}, z) \in \poln k A$ is a commutator polynomial of rank $k$ if
\begin{itemize}
  \item $\po t(a_1,\ldots,a_{k-1},b)=b$ whenever $b \in \set{a_1,\ldots,a_{k-1}} \ci A$,
  \item $\po t(a_1,\ldots,a_{k-1},b)\neq b$ for some $a_1,\ldots,a_{k-1},b \in A$.
\end{itemize}

\medskip
We next sketch the material on tame congruence theory that we will need.

For a nonvoid subset $U$ of an algebra $\m A$ the
{\em algebra induced by $\m A$ on U} is the algebra
$\rst {\m A} U$ whose universe is $U$ and whose fundamental
operations are all polynomials $\po p \in \poln m A$ for which
$\rst {\po p} {U^m}$ maps $U^m$ into $U$.
The algebra $\rst {\m A} U$ is nonindexed, that is, there is
no index set specified for the set of fundamental operations.
Note that every polynomial  operation of $\rst {\m A} U$ is its fundamental operation.
Two nonvoid subsets $U$ and $V$ of $\m A$ are
called {\em polynomially isomorphic} if there exist
$\po f,\po g \in \poln 1 A$ such that $\po f(U)=V$, $\po g(V)=U$,
$\po f\po g$ is the identity on $V$, and $\po g\po f$ is the identity on $U$.
If $U$ and $V$ are polynomially isomorphic, then the algebras
$\rst {\m A} U$ and $\rst {\m A} V$ are isomorphic
as nonindexed algebras,
that is, it is possible to index the fundamental operations
of each with one index set
so that the resulting algebras are isomorphic in the usual sense.

An {idempotent} polynomial for an algebra $\m A$ is any
$\po e \in \poln 1 A$ such that $\po e^2(x)=\po e(x)$ for all $x\in A$.
For an idempotent polynomial $\po e$ the restriction
$\rst {\po e} {\po e(A)}$ is the identity map on $\po e(A)$.
Algebras induced by $\m A$ on the range of an idempotent
polynomial have a particularly simple characterization for their
fundamental operations.  Namely, if $\po e$ is idempotent for
$\m A$ and $U=\po e(A)$,
then the fundamental operations of $\rst {\m A} U$ consist
of all polynomials of the form $\rst {\po e \po p} {U}$
where $\po p$ ranges over all polynomials of $\m A$.
The collection of all idempotent polynomials for $\m A$ is
denoted $E(\m A)$.

Let $\alpha < \beta$ in the congruence lattice of a finite
algebra $\m A$.  By ${\rm U}_{\m A}(\alpha,\beta)$ we denote all sets
of the form $\po f(A)$,
with at least two elements,
where $\po f\in \poln 1 A$ and
$\po f(\beta)\not \subseteq \alpha$.
Minimal members of ${\rm U}_{\m A}(\alpha,\beta)$,
that is, minimal when ordered by inclusion,
are called $(\alpha, \beta)$-{\em minimal sets of}
$\m A$. The set of all $(\alpha, \beta)$-minimal sets of
$\m A$ is denoted ${\rm M}_{\m A}(\alpha,\beta)$.

In a finite algebra $\m A$
a quotient $(\alpha,\beta)$ in $\Cn A$
is called {\em tame} if there exist
$V \in {\rm M}_{\m A}(\alpha,\beta)$
and $\po e \in E(\m A)$ such that $\po e(A)=V$ and for all
$\gamma \in \cn A$ if $\alpha < \gamma < \beta$, then
$\rst \gamma V \ne \rst \alpha V$ and $\rst \gamma V \ne \rst \beta V$.
Note that every prime quotient is tame.
A basic result in tame congruence theory is that
if $(\alpha,\beta)$ is a tame quotient, then all
$(\alpha, \beta)$-minimal sets of
$\m A$ are polynomially isomorphic.
If $(\alpha,\beta)$ is tame and $U\in {\rm M}_{\m A}(\alpha,\beta)$,
then any set of the form $ a/\beta\cap U$ that is not
of the form $ a/\alpha \cap U$ is called a {\em trace} of $U$ and an
$(\alpha,\beta)$-{\em trace} of $\m A$.
The union of all $(\alpha,\beta)$-traces of $U$ is called
the {\em body} of $U$ and those elements of $U$ not in the body
of $U$ form the {\em tail} of $U$.
If $N$ is a trace for $U$, then $\rst \alpha N$ denotes
$\alpha \cap N^2$, and $\rst \alpha N$ is a congruence on
the nonindexed algebra $\rst {\m A} N$.

\medskip
The interest in tame congruence theory in tame quotients and
their minimal sets and traces arises from the fact that
the local behavior of a tame quotient falls into one of
five distinct situations.  More specifically,
for any finite algebra $\m A$, for any tame quotient
$(\alpha, \beta)$, and for any trace $N$ of
$U\in {\rm M}_{\m A}(\alpha,\beta)$, the quotient algebra
$(\rst {\m A} N)/(\rst \alpha N)$ must be
polynomially equivalent to one of the following five types of algebras:

\begin{enumerate}
\item[{\tn 1}.] a {\rm G}-set,
\item[{\tn 2}.] a finite dimensional vector space over a finite
field,
\item[{\tn 3}.] a 2-element Boolean algebra,
\item[{\tn 4}.] a 2-element distributive lattice,
\item[{\tn 5}.] a 2-element semilattice.
\end{enumerate}

\noindent
Moreover, the particular type \tn 1, \tn 2, \tn 3, \tn 4, or
\tn 5 is independent of the choice of
$U$ and $N$.  This is called the {\em type} of the tame quotient
$(\alpha,\beta)$ and is denoted $\typ(\alpha,\beta)$.

The type of a tame quotient in a finite algebra
has significant consequences for local behavior and for
the algebraic structure of the algebra and the quotient.
For example, it is known that for a tame quotient $(\alpha,\beta)$,
$\typ(\alpha,\beta)\in\set{\tn 1, \tn 2}$ if and only if $\beta$ is Abelian over $\alpha$.
Because of this, types \tn 1 and \tn 2 are referred to as the {\em Abelian types}
and types \tn 3, \tn 4, and \tn 5 are the {\em non-Abelian types}.

In our work the tame quotients that we consider are usually prime quotients.
The following terminology is used in connection with the set of types of prime quotients in finite algebra.

For $\alpha \prec \beta$ the fact $\typ(\alpha,\beta)= \tn i$ will be sometimes denoted by
$\alpha \prect i \beta$.
For $\gamma < \delta$ in $\Cn A$ the set of all types $\typ(\alpha,\beta)$
for $\gamma \leq \alpha \prec \beta \leq \delta$ is denoted $\typ\set{\gamma,\delta}$.
The {\em type set of a finite algebra} $\m A$, denoted $\typ\set{\m A}$, is $\typ\set{0_A,1_A}$.
The {\em type set of a class} $\vr K$ of algebras consists of the union
of the type sets of the finite algebras in $\vr K$ and is
denoted $\typ\set{\vr K}$.

Two preservation theorems involving the calculus of types
that we will frequently use are that type is preserved under homomorphism
and that projective prime quotients have the same type, that is, for a finite algebra  $\m A$,
\begin{itemize}
\item
if $\delta\leq\alpha\prec\beta$ in $\Cn A$, then $\typ(\alpha,\beta)=
\typ(\alpha/\delta,\beta/\delta)$ in $\m A/\delta$,
\item
if $\alpha_1\prec\beta_1$ and $\alpha_2\prec\beta_2$ are projective prime quotients in $\Cn A$,
then $\typ(\alpha_1,\beta_1)=\typ(\alpha_2,\beta_2)$.
\end{itemize}
These two results show that if $\tn i \in \typ\set{\m A}$,
then there is a subdirectly irreducible algebra ${\m A}'$ with monolith
$\mu$ such that $\m A^\prime$ is a homomorphic image of $\m A$ and
$\typ(0_{A^\prime},\mu)=\tn i$.
Many of our arguments involve an analysis of $(0,\mu)$-minimal
sets and traces for such a monolith $\mu$.

We next summarize some of the algebraic properties that are
consequences of a prime quotient having a particular type.
Consider an arbitrary finite algebra $\m A$ with $\alpha \prec \beta$.
Let $U \in M_{\m A}(\alpha,\beta)$ and let $N$ be an
$(\alpha ,\beta)$-trace contained in $U$.
Suppose $\typ(\alpha,\beta)=\tn 3 \mbox{ or } \tn 4$.
For these two types,  it is known that $N$ is the unique $(\alpha,\beta)$-trace contained in $U$,
$\alpha|_N = 0$ and the algebra $\rst {\m A} {N}$ is polynomially equivalent to
a 2-element Boolean algebra or 2-element distributive lattice.
In both cases there are two binary polynomial $\land, \lor \in \poln 2 A$
such that $\rst \land U$ is a {\em pseudo-meet} and $\rst \lor U$ is a {\em pseudo-join},
\cite [Definition 4.18]{hm}.
This in particular means that we can label the two elements of $N$ with
$0$ and $1$ so that $\langle \set{0,1}, \rst \land N, \rst\lor N \rangle$ is a distributive
lattice with $0 < 1$.
Thus, every $n$-ary operation on $N=\set{0,1}$
that preserves this order is of the form $\rst {\po p} N$
for some $\po p \in \poln n A$.
If $\typ(\alpha, \beta)=\tn 3$, then in addition to the binary
polynomials $\land$ and $\lor$ that we have in the type \tn 4 case,
there is also a unary polynomial $'$ such that $0'=1, 1'=0$ and $A' = U$.
The algebra $\langle \set{0,1}, \rst \land N, \rst\lor N, \rst ' N \rangle$
is a Boolean algebra and thus every $n$-ary operation on $N$
is the restriction to $N$ of some $n$-ary polynomial on $\m A$
that can be built using $\land, \lor,$ and $'$.

If $\typ(\alpha,\beta)=\tn 2$, then there may be more than one
trace contained in $U$.  Let $B$ be the body of $U$.
A useful result (see \cite [Definition 4.22]{hm}) that applies to this type \tn 2 case
is that there is a $\po d \in \poln 3 A$ such that
\begin{tenumerate}
\item
$\po d(x,x,x)=x$ for all $x \in U$.
\item
$\po d(x,x,y)=y=\po d(y,x,x)$ for all $x \in B$ and $y\in U$.
\item
For every $a,b \in B$, the unary polynomials given by
$\po d(x,a,b), \po d(a,x,b)$, and $\po d(a,b,x)$ are
permutations of $U$
\item
$B$ is closed under $\po d$, that is, $\po d(a,b,c)\in B$
for all $a,b,c\in B$.
\end{tenumerate}
The polynomial $\po d$ is called a {\em pseudo-Malcev operation} for $U$.

\bigskip

Since we are particularly interested in finite algebras from congruence modular varieties
we conclude our discussion of tame congruence theory
by citing some results that connect it with the theory of the generalized commutator
in {\em locally finite varieties},
i.e. varieties in which finitely generated algebras are finite.

\begin{thm}
\label{thm-cm-types}
Let $\vr V$ be a locally finite variety.
\begin{tenumerate}
\item
$\vr V$ is congruence modular if and only if
$\typ\set{\vr V} \ci \set{\tn 2, \tn 3, \tn 4}$
and minimal sets in finite algebras of $\vr V$ have empty tails.
\item
$\typ\set{\vr V}\subseteq \set{\tn 2}$ if and only if $\vr V$
is congruence permutable and every algebra in $\vr V$ is locally solvable.
\end{tenumerate}
\end{thm}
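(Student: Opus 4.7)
The plan is to assemble both statements from standard tame congruence theory as developed in \cite{hm}, combined with the Gumm-term characterization of congruence modularity recorded in Theorem \ref{thm-gumm}.

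For the forward direction of (1), I would fix Gumm terms $\d_1,\ldots,\d_n,\q$ for $\vr V$ and analyze, in a finite $\m A \in \vr V$ with a prime quotient $\alpha \prec \beta$ and minimal set $U$ containing a trace $N$, what the Gumm identities force on the induced algebra $\rst{\m A}{N}/\rst{\alpha}{N}$. For type $\tn 1$, every polynomial on a trace is essentially unary of the form $\sigma(x_k)$ with $\sigma$ a permutation, and tracing the identities $\d_1(x,x,y)=x$ and $\d_i(x,y,y)=\d_{i+1}(x,x,y)$ forces each $\d_i$ to be the first projection, hence $\q(x,y,y)=x$, contradicting $\q(x,x,y)=y$. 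A parallel analysis on the two-element semilattice rules out type $\tn 5$. For empty tails, given any $u$ in a purported tail of $U$, one builds from the Gumm polynomials an operation witnessing that $u$ already lies in some $(\alpha,\beta)$-trace, contradicting the tail/body dichotomy. The backward direction is subtler: assuming $\typ\set{\vr V}\ci\set{\tn 2,\tn 3,\tn 4}$ together with empty tails, one synthesizes global Gumm terms from the local pseudo-Malcev polynomials (type $\tn 2$) and pseudo-lattice polynomials (types $\tn 3,\tn 4$) available on each minimal set, using free algebras of $\vr V$ and the empty-tail hypothesis to coherently glue local witnesses across the variety.

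For (2) forward, the inclusion $\typ\set{\vr V}\ci\set{\tn 2}$ makes every prime quotient Abelian, so a finite induction up the congruence lattice of any finite algebra in the locally finite variety $\vr V$ yields solvability, hence local solvability. Congruence permutability is obtained by promoting the pseudo-Malcev polynomial attached to each type $\tn 2$ minimal set into a genuine Malcev term of the variety, a lifting possible precisely because no other types interfere. Conversely, if $\vr V$ is congruence permutable with Malcev term $\po d$, then $\po d$ cannot be realized on a trace of type $\tn 3$, $\tn 4$, or $\tn 5$ since the pseudo-lattice or pseudo-semilattice structure on such a trace is incompatible with $\po d(x,x,y)=y=\po d(y,x,x)$; local solvability independently forces Abelianness and so also rules out types $\tn 3,\tn 4,\tn 5$, leaving $\typ\set{\vr V}\ci\set{\tn 1,\tn 2}$. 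Finally, on a type $\tn 1$ trace the induced clone contains only essentially unary operations of the form $\sigma(x_k)$, so no restriction of $\po d$ can serve as a Malcev operation there, eliminating type $\tn 1$ as well.

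The main obstacle is the backward direction of (1): fabricating a complete system of Gumm terms for $\vr V$ out of purely local minimal-set data constitutes the hardest part of the Hobby--McKenzie characterization of locally finite congruence modular varieties, and in our exposition we would rely on \cite{hm} for this step rather than redevelop it.
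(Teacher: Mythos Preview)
The paper does not prove this theorem at all: it appears in the background section and is simply cited from \cite{hm} as a known characterization from tame congruence theory. There is therefore no ``paper's own proof'' to compare your proposal against.

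Your sketch is a reasonable outline of how the Hobby--McKenzie arguments run, and you are right that the backward direction of (1) is the substantial part and is best delegated to \cite{hm}. One inaccuracy is worth flagging: in the backward direction of (2) you claim that a Mal'cev term cannot be realized on a trace of type $\tn 3$. This is false---the two-element Boolean algebra has the Mal'cev polynomial $x\oplus y\oplus z$ (equivalently $(x\wedge y')\vee(x'\wedge y)$ iterated), so congruence permutability by itself is compatible with type $\tn 3$. Your argument is rescued by the next sentence, where local solvability correctly eliminates the non-Abelian types $\tn 3,\tn 4,\tn 5$; but the decomposition of the work should be: the Mal'cev term rules out $\tn 1,\tn 4,\tn 5$, and local solvability rules out $\tn 3,\tn 4,\tn 5$, jointly leaving only $\tn 2$.
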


The varietal conditions given in item (2) of Theorem \ref{thm-cm-types}
will be of special interest in our work.
A variety $\vr V$ is called {\em affine} if it is congruence modular and Abelian.
It can be argued that if $\vr V$ is affine then it is also congruence permutable.
The  properties of affine varieties are developed in \cite{fm}.
Each affine variety $\vr V$ has a corresponding ring $\m R$ with unit
such that every algebra in $\vr V$ is polynomially equivalent to an $\m R$-module
and conversely every $\m R$-module is polynomially equivalent to an algebra in $\vr V$.

\ssection{Some easy observation
\label{sect-ex}}

Except canonical \npc problems (like \textsc{SAT} or $k$-colorability of graphs) used in our proofs of \npc{}ness we will also need the following easy observation,
a straightforward proof of which can be found in \cite{gorazd-krzacz:preprimal}.

\begin{prp}
\label{prp-dl01}
It is \npc to decide whether the following systems of two equations of the form
\begin{eqnarray*}
  \mmeet_{i=1}^m x^i_1 \join x^i_2 \join x^i_3 &=& 1,\\
  \jjoin_{i=1}^n y^i_1 \join y^i_2 \join y^i_3 &=& 0,
\end{eqnarray*}
where $x^i_j$ and $y^i_j$ are variables,
have  solutions in the $2$-element lattice.
\myqed
\end{prp}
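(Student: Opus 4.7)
The plan is a polynomial-time reduction from standard \textsc{3-SAT}, after noting that the problem is obviously in \np: any candidate $\set{0,1}$-assignment of all variables can be checked against the two equations in time linear in their size.

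Let $\phi=\mmeet_{i=1}^k C_i$ be a \textsc{3-SAT} instance on propositional variables $x_1,\ldots,x_p$, with $C_i=\ell_1^i\join\ell_2^i\join\ell_3^i$ where each literal is either a variable or its negation. Since the $2$-element lattice has no complementation in its signature, the essence of the reduction is to simulate negation by splitting each $x_s$ into a pair of fresh variables $z_s,z_s'$ of the target system, intended to hold the values of $x_s$ and $\neg x_s$ respectively, and then to enforce $z_s'=1-z_s$ using the two available equations jointly.

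Concretely, for every $s=1,\ldots,p$ I would add the triple $(z_s,z_s,z_s')$ to the first equation, contributing the constraint $z_s\join z_s'=1$, and the triple $(z_s,z_s,z_s')$ to the second equation, contributing the constraint $z_s\meet z_s'=0$; in $\set{0,1}$ these two together pin $z_s'$ to $1-z_s$. Next, for each clause $C_i$ of $\phi$, I would form a triple by the substitution $x_s\mapsto z_s$, $\neg x_s\mapsto z_s'$ and add it to the first equation; by construction, satisfying this triple in the $2$-element lattice is equivalent to satisfying $C_i$ under the Boolean assignment $\sigma(x_s):=z_s$. The whole construction has size $O(k+p)$, so the reduction is polynomial.

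Correctness is then routine: a satisfying assignment $\sigma$ of $\phi$ yields a solution by setting $z_s:=\sigma(x_s)$ and $z_s':=1-\sigma(x_s)$, and conversely any solution of the constructed system restricts to a satisfying assignment of $\phi$ via $\sigma(x_s):=z_s$, because the complementarity constraints force the substituted triples to encode exactly the original clauses. There is no genuine obstacle in the argument; the only conceptual move is distributing the complementarity constraint between $z_s$ and $z_s'$ across the two equations prescribed by the problem's shape, one of which is responsible for the $\join$-half and the other for the $\meet$-half of complementation.
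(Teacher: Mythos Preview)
Your proof is correct. The paper does not actually prove this proposition in-text; it merely remarks that ``a straightforward proof of which can be found in \cite{gorazd-krzacz:preprimal}'' and places a $\Box$ after the statement. Your reduction from \textsc{3-SAT}---splitting each propositional variable $x_s$ into a complementary pair $z_s,z_s'$, enforcing $z_s\join z_s'=1$ via a triple in the first equation and $z_s\meet z_s'=0$ via a triple in the second, and then encoding each clause as a triple in the first equation---is exactly the kind of straightforward argument the paper alludes to, and it is complete as written.

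One small remark: you have (correctly) read the inner connective of the second equation as $\meet$ rather than the $\join$ that appears in the displayed statement. That is indeed the intended reading, as is clear from the way the proposition is used elsewhere in the paper (e.g., in the proof of Example~\ref{ex-non-hom}, where the second equation is written out explicitly as a join of meets); with $\join$ inside, the second equation would be trivially satisfiable by zeroing all $y$'s and the problem would collapse to monotone \textsc{3-SAT}, which is in \ptime.
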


We continue this section with the proofs of Proposition \ref{prp-csp-csat}
and Example \ref{ex-non-hom}.

\medskip

{\textsc{Proposition \ref{prp-csp-csat}.}}
{\em
For every finite relational structure $\rel D$ (with finitely many relations)
there is a finite algebra $\mrel A D$
such that the problem $\csp D$ is polynomially equivalent to $\csat {A[\rel D]}$.
}

\medskip
\begin{proof}
Without loss of generality we may assume that $\rel D$ has both satisfiable and unsatisfiable instances, say $\top$ and $\bot$, respectively.
Now, for the relational structure $\rel D = \left(D, \mathcal{R}\right)$
put $\mrel A D$ to be $\left(A; \meet, \set{f_R}_{R\in{\mathcal R}}  \right)$,
where\begin{itemize}
       \item $A= D\cup \set{0,1}$ with $0,1 \not\in D$,
       \item the binary operation $\meet$ is defined by:
       \[
       a \meet b =
       \left\{
       \begin{array}{ll}
       1, & \mbox{\rm if $a=1=b$,}\\
       0, & \mbox{\rm otherwise,}
       \end{array}
       \right.
       \]
       \item $f_R$ is the $\set{0,1}$-characteristic function of the relation $R$, i.e.
       \[
       f_R(a_1,\ldots,a_k) =
       \left\{
       \begin{array}{ll}
       1, & \mbox{\rm if $(a_1,\ldots,a_k)\in R$,}\\
       0, & \mbox{\rm otherwise.}
       \end{array}
       \right.
       \]
     \end{itemize}
It should be obvious that the instance
\begin{equation}
\label{eq-meet-r}
R_1(x^1_1,\ldots,x^1_{k_1}) \meet \ldots \meet R_s(x^s_1,\ldots,x^s_{k_s})
\end{equation}
of $\csp D$ transforms equivalently to the following instance of $\csat {A[\rel D]}$
\[
f_{R_1}(x^1_1,\ldots,x^1_{k_1}) \meet \ldots \meet f_{R_s}(x^s_1,\ldots,x^s_{k_s}) = 1.
\]

On the other hand the only polynomials of $\mrel A D$ that are non constant
are among those that have the following form:
\begin{equation}
\label{eq-meet-f}
x^0_1 \meet \ldots x^0_{k_0} \meet
f_{R_1}(x^1_1,\ldots,x^1_{k_1}) \meet \ldots \meet f_{R_s}(x^s_1,\ldots,x^s_{k_s})
\meet 1,
\end{equation}
where the last conjunct (namely $1$) may be absent
and the $x^0_j$'s are not among the $x^i_j$'s with $i \geq 1$.
Moreover the range of such polynomials is contained $\set{0,1}$
where the value $1$ is obtained by sending all the $x^0_j$'s to $1$
and the other variable to the values in $D$ satisfying (\ref{eq-meet-r}).
Thus the only nontrivial instances (i.e. the ones that do not transform to $\top$ or $\bot$)
of $\csat {A[\rel D]}$ have the form $\po t(\o x) = 1$,
with $\po t(\o x)$ being described in (\ref{eq-meet-f}).
Such an equation obviously translates to the equivalent instance (\ref{eq-meet-r}) of $\csp D$.
\end{proof}

\bigskip

\textsc{Example \ref{ex-non-hom}.}
{\em
There is a finite algebra $\m A$ and its congruence $\theta$ such that
$\csat A \in \ptime$ while $\csat {A/\theta}$ is \npc.
}
\medskip
\begin{proof}
The operations of the algebra $\m A$ will be defined in such a way
that the satisfiability of the polynomial equation $\po t(\o x) = \po s(\o x)$
easily reduces to the one over the $2$-element lattice
whenever the ranges of the polynomials $\po t$ and $\po s$ are not disjoint.
On the other hand it will be possible to define the congruence $\theta$
and lattice-like polynomials $\po l$ and $\po r$
so that the one equation of the form $\po l(\o x) = \po r(\o x)$
encodes modulo $\theta$ a system of two lattice equations.

For the underlying set of our algebra we put
\[
A=\set{0,1,0',1',
0_{\meet,l},1_{\meet,l},s_{\meet,l},
0_{\meet,r},1_{\meet,r},s_{\meet,r},
0_{\join,l},1_{\join,l},s_{\join,l},
0_{\join,r},1_{\join,r},s_{\join,r},
}.
\]

Our basic operations come in two sorts:
\begin{itemize}
  \item left: ternary `disjunction' $D_l$, binary `conjunction' $\meet_l$ and unary $f_l$
  \item right: ternary `conjunction' $C_r$, binary `disjunction' $\join_r$ and unary $f_r$,
\end{itemize}
To define these operations we will refer to an external two element lattice
$\left(\set{\bot,\top}; \meet, \join \right)$ in which $\bot < \top$.
This reference is done by passing from
$x\in\set{0,0_{\meet,l},0_{\meet,r},0_{\join,l},0_{\join,r}}$ to $\h x = \bot$ and for
$x\in\set{1,1_{\meet,l},1_{\meet,r},1_{\join,l},,1_{\join,r}}$ to $\h x = \top$
and putting
\begin{eqnarray*}
D_l(x,y,z) &=&
\left\{
\begin{array}{ll}
1_{\join,l}, & \mbox{\rm if $x,y,z\in\set{0,1}$ and $\h x\join \h y \join \h z = \top$,}\\
0_{\join,l}, & \mbox{\rm if $x,y,z\in\set{0,1}$ and $\h x\join \h y \join \h z = \bot$,}\\
s_{\join,l}, & \mbox{\rm otherwise,}
\end{array}
\right.\\
x \meet_l y&=&
\left\{
\begin{array}{ll}
1_{\meet,l}, & \mbox{\rm if $x,y\in\set{0_{\meet,l},1_{\meet,l},0_{\join,l},1_{\join,l}}$
                        and $\h x\meet \h y  = \top$,}\\
0_{\meet,l}, & \mbox{\rm if $x,y\in\set{0_{\meet,l},1_{\meet,l},0_{\join,l},1_{\join,l}}$
                        and $\h x\meet \h y  = \bot$,}\\
s_{\meet,l}, & \mbox{\rm otherwise,}
\end{array}
\right.\\
C_r(x,y,z) &=&
\left\{
\begin{array}{ll}
1_{\meet,r}, & \mbox{\rm if $x,y,z\in\set{0,1}$ and $\h x\meet \h y \meet \h z = \top$,}\\
0_{\meet,r}, & \mbox{\rm if $x,y,z\in\set{0,1}$ and $\h x\meet \h y \meet \h z = \bot$,}\\
s_{\meet,r}, & \mbox{\rm otherwise,}
\end{array}
\right.\\
x \join_r y&=&
\left\{
\begin{array}{ll}
1_{\join_r}, & \mbox{\rm if $x,y\in\set{0_{\meet,r},1_{\meet,r},0_{\join,r},1_{\join,r}}$
                        and $\h x\join \h y  = \top$,}\\
0_{\join_r}, & \mbox{\rm if $x,y\in\set{0_{\meet,r},1_{\meet,r},0_{\join,r},1_{\join,r}}$
                        and $\h x\join \h y  = \bot$,}\\
s_{\join_r}, & \mbox{\rm otherwise,}
\end{array}
\right.\\
f_l(x) &=&
\left\{
\begin{array}{ll}
1', & \mbox{\rm if $x=1_{\meet,l}$,}\\
x,  & \mbox{\rm otherwise,}
\end{array}
\right.\\
f_r(x) &=&
\left\{
\begin{array}{ll}
0', & \mbox{\rm if $x=0_{\join,r}$,}\\
x,  & \mbox{\rm otherwise.}
\end{array}
\right.\\
\end{eqnarray*}
A careful inspection of the above definitions shows that
there not many ways to compose operations of
$\m A = \left(A; D_l, C_r,\meet_l, \join_r, f_l, f_r \right)$
in a meaningful way, i.e. to get polynomials that have essential arity at least $4$.
Moreover for two such polynomials $\po t, \po s$ either they have disjoint ranges or the equation
$\po t(\o x)=\po s(\o x)$ has a solution.
However disjointness of the ranges can be checked by inspecting how the polynomials $\po t, \po s$
are built from the basic operations.
This shows that $\csat A \in P$.

\medskip
Now let $\Theta$ be the congruence with one nontrivial block $\set{0',1'}$.
This opens the way to transform the system of two lattice equations
\begin{eqnarray*}
(x^1_1\join x^1_2\join x^1_3) \meet \ldots \meet (x^m_1\join x^m_2\join x^m_3) &=& 1\\
(y^1_1\meet y^1_2\meet y^1_3) \join \ldots \join (y^n_1\meet y^n_2\meet y^n_3) &=& 0
\end{eqnarray*}
into a single equation
\[
f_l\left(D_l(x^1_1,x^1_2,x^1_3) \meet_l \ldots \meet_l D_l(x^m_1,x^m_2,x^m_3)\right) =
f_r\left(C_r(y^1_1,y^1_2,y^1_3) \join_r \ldots \join_r C_r(y^n_1,y^n_2,y^n_3)\right),
\]
so that the system is solvable in two element lattice if and only if this single equation is satisfied in $\m A/\Theta$ by the very same $\set{0,1}$-values for all the variables.
Together with Proposition \ref{prp-dl01} this allows us to conclude that
 $\csat {A/\theta}$ is \npc.
\end{proof}

\ssection{Type \tn 3 need not apply
\label{sect-type3}}

The most classical problem of solving equation is satisfiability of Boolean formulas
which is actually a paradigm for \npc \ problems.
The presence of Boolean behavior inside a finite algebra is in fact ruled out by the following theorem.

\begin{thm}
\label{thm-type-3}
If $\m A$ is finite algebra from a congruence modular variety
such that $\tn 3 \in \typset{\m A}$,
then $\cpolsatstar A$ is \npc.
\end{thm}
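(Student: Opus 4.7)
The plan is to reduce \textsc{3-SAT} to $\csat A$ by exploiting the pseudo-Boolean operations that come with any type \tn 3 prime quotient.

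First, I would fix a prime quotient $(\alpha,\beta)$ in $\Cn A$ with $\typ(\alpha,\beta) = \tn 3$, then pick an $(\alpha,\beta)$-minimal set $U$ together with an idempotent polynomial $\po e \in E(\m A)$ satisfying $\po e(A) = U$; these exist because every prime quotient is tame. Since $\m A$ lies in a congruence modular variety, Theorem \ref{thm-cm-types}(1) guarantees that minimal sets have empty tails, so $U$ coincides with its body. For type \tn 3 the body of $U$ reduces to the unique $(\alpha,\beta)$-trace $N$, hence $U = N$ and $|N|=2$, say $N = \{0,1\}$. The structural description of type \tn 3 then supplies polynomials $\land, \lor \in \poln 2 A$ and ${}' \in \poln 1 A$ whose restrictions to $N$ realize the two-element Boolean algebra; in particular they preserve $N$, and every Boolean function on $N$ is the restriction of a polynomial built from $\po e, \land, \lor, {}'$.

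Next, given a \textsc{3-SAT} instance $\phi(x_1,\ldots,x_n) = \bigwedge_{j=1}^m C_j$, I would build a polynomial $\po t_\phi(y_1,\ldots,y_n)$ over $\m A$ by replacing each occurrence of $x_i$ with $\po e(y_i)$, each propositional negation with ${}'$, and each propositional $\wedge,\vee$ with the pseudo-Boolean $\land,\lor$ found above. The resulting circuit has size linear in $|\phi|$, and the equation to be tested is
\[
\po t_\phi(y_1,\ldots,y_n) \;=\; 1,
\]
where the right-hand side is a constant input gate carrying the value $1 \in N$ (constants are permitted in $\csat$-instances).

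The correctness argument runs both ways. Any $\{0,1\}$-satisfying assignment $b_1,\ldots,b_n$ of $\phi$ yields a solution by setting $y_i := b_i$, since $\po e$ fixes $N$ pointwise and $\land,\lor,{}'$ agree on $N$ with their Boolean counterparts. Conversely, from any $\m A$-solution $a_1,\ldots,a_n$ the values $b_i := \po e(a_i) \in N$ form a Boolean assignment; because every intermediate computation inside $\po t_\phi$ stays on $N$---the outer layer starts with $\po e(a_i) \in N$, and $\land,\lor,{}'$ preserve $N$---the polynomial $\po t_\phi$ evaluates to $\phi(b_1,\ldots,b_n)$, which must therefore equal $1$.

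I expect the only genuinely structural step to be the opening one: arranging that the range of $\po e$ is exactly the two-element trace $N$ on which the pseudo-Boolean clone is available. This is precisely where congruence modularity is used, through the empty-tails statement of Theorem \ref{thm-cm-types}; without it $U$ could acquire a tail and one could no longer argue that all intermediate values in $\po t_\phi$ are Boolean. Once this step is secured the encoding of \textsc{3-SAT} is routine, and membership of $\csat A$ in $\np$ is immediate from the finiteness of $\m A$.
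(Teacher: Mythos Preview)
Your proposal is correct and follows essentially the same route as the paper's proof: both pick a type $\tn 3$ minimal set, use the idempotent retraction onto it together with the pseudo-Boolean operations $\land,\lor,{}'$, and then encode \textsc{3-SAT} literally. The only difference is that you spell out explicitly, via Theorem \ref{thm-cm-types}(1) (empty tails in congruence modular varieties), why the minimal set $U$ coincides with its two-element trace $N$; the paper simply writes $U=\{0,1\}$ without further comment, leaving that step implicit.
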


\begin{proof}
Suppose that $\m A$ is a finite algebra containing a type \tn 3 minimal set $U=\set{0,1}$ with respect to some covering pair $\alpha \prec \beta$ of its congruences.
Then $\rst{\m a}{U}$ is polynomially equivalent to a 2-element Boolean algebra, so that there are polynomials $\meet, \join, \neg$ of $\m A$ that behave on $U$ like meet, join and negation, respectively.
Moreover, there is a unary idempotent polynomial $\po e_{U}$ of $\m A$ with the range $U$.

Now the 3-SAT instance:
\[
\Phi \equiv \mmeet_{i=1}^m \ell^i_1 \join \ell^i_2 \join \ell^i_3,
\]
where $\ell^i_j \in \set{x^i_j, \neg x^i_j}$,
can be easily translated to the equation
\begin{eqnarray}
\label{bsat}
\mmeet_{i=1}^m \delta^i_1\po e_{U}(z^i_1) \join
               \delta^i_2\po e_{U}(z^i_2) \join
               \delta^i_3\po e_{U}(z^i_3) &=& 1,
\end{eqnarray}
where
\[
\delta^i_j\po e_{U}(z^i_j) =
\left\{
\begin{array}{ll}
\po e_{U}(z^i_j),      & \mbox{\rm if the literal $\ell^i_j$ is the variable, i.e., $\ell^i_j=x^i_j$},\\
\neg \po e_{U}(z^i_j), & \mbox{\rm if $\ell^i_j$ is the negated variable, i.e., $\ell^i_j=\neg x^i_j$}.
\end{array}
\right.
\]
It should be obvious that the formula $\Phi$ is satisfiable if and only if
the equation (\ref{bsat}) has a solution.
\end{proof}

Combining Theorems \ref{thm-cm-types} and \ref{thm-type-3} we get the following corollary.

\begin{cor}
\label{cor-typeset24}
If $\m A$ is finite algebra from a congruence modular variety
such that $\cpolsatstar A$ is not \npc 
then $\typset{\m A} \ci \set{\tn 2, \tn 4}$
\myqed
\end{cor}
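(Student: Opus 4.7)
The plan is to combine the two preceding results directly via a short contrapositive argument. Since $\m A$ is finite, the variety $\vr V$ generated by $\m A$ is locally finite, and it inherits congruence modularity from the ambient variety. Thus Theorem \ref{thm-cm-types}(1) applies to $\vr V$, giving $\typset{\vr V} \ci \set{\tn 2, \tn 3, \tn 4}$, and in particular $\typset{\m A} \ci \set{\tn 2, \tn 3, \tn 4}$.

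Next I would invoke Theorem \ref{thm-type-3} in contrapositive form: the assumption that $\csat A$ is not \npc\ rules out $\tn 3 \in \typset{\m A}$. Intersecting this with the containment above yields the desired conclusion $\typset{\m A} \ci \set{\tn 2, \tn 4}$.

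There is no real obstacle here; the corollary is a formal consequence of the two cited theorems, and the only thing to check is that the local finiteness hypothesis needed for Theorem \ref{thm-cm-types} is available, which it is because finiteness of $\m A$ lets us replace the ambient variety by $\hsp(\m A)$ without losing congruence modularity and without changing $\typset{\m A}$.
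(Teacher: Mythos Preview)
Your argument is correct and matches the paper's approach exactly: the corollary is stated as an immediate consequence of Theorems \ref{thm-cm-types} and \ref{thm-type-3}, and you have simply spelled out the two-step deduction (restrict the typeset to $\set{\tn 2,\tn 3,\tn 4}$ via congruence modularity, then exclude $\tn 3$ via the contrapositive of Theorem \ref{thm-type-3}). The observation that one may pass to $\hsp(\m A)$ to ensure local finiteness is the right way to justify invoking Theorem \ref{thm-cm-types}.
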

\ssection{Transfer principles and decomposition
\label{sect-decomp}}

In this section we prove that every finite algebra $\m A$ from a congruence modular variety for which $\cpolsatstar A$ is not \npc \ decomposes into a direct product of a solvable algebra and an algebra that has only $\tn 4$ in its typeset.
In order to obtain such a nice decomposition we will first establish so called transfer principles
introduced by Matthew Valeriote in \cite{val-phd}.

\begin{df}
\label{df-ij-transfer}
We say that a finite algebra $\m A$ satisfies the $(\tn i, \tn j)$-transfer principle
if whenever $\alpha \prect{i} \beta \prect{j} \gamma$ are congruences of $\m A$
then there exists a congruence $\beta'$
with $\alpha \prect{j} \beta' \leq \gamma$.
\end{df}

The next Lemma helps us in a better localizing unwanted failures of the transfer principles.

\begin{lm}
\label{lm-ij-transfer}
If an algebra $\m A$ fails to have $(\tn i, \tn j)$-transfer principle
and $\con{\m A}$ is modular then
\begin{enumerate}
  \item \label{transfer-meet}
        $\m A$ has congruences $\alpha' \prect{i} \beta' \prect{j} \gamma'$
        with $\alpha'$ being meet irreducible,
  \item \label{transfer-join}
        $\m A$ has congruences $\alpha' \prect{i} \beta' \prect{j} \gamma'$
        with $\gamma'$ being join irreducible.
\end{enumerate}
\end{lm}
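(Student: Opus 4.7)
The plan is to start from a witness $\alpha \prect{i} \beta \prect{j} \gamma$ of the transfer failure (so no $\beta''$ satisfies $\alpha \prect{j} \beta'' \leq \gamma$) and to use the Dedekind transpose in the modular lattice $\con{\m A}$ to shift this chain upward until it is anchored at a meet irreducible congruence for part (\ref{transfer-meet}), or downward until it ends at a join irreducible congruence for part (\ref{transfer-join}).

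For part (\ref{transfer-meet}) I would take $\alpha'$ maximal subject to $\alpha \leq \alpha'$ and $\beta \not\leq \alpha'$. Any $\delta > \alpha'$ must contain $\beta$ by maximality, so if $\alpha' = \delta_1 \meet \delta_2$ with $\delta_1,\delta_2 > \alpha'$ then $\beta \leq \alpha'$, a contradiction; hence $\alpha'$ is meet irreducible. Since $\alpha' \meet \beta \in \intv{\alpha}{\beta} = \set{\alpha,\beta}$ and $\beta \not\leq \alpha'$, we get $\alpha' \meet \beta = \alpha$, and the Dedekind transpose yields $\alpha' \prect{i} \beta'$ with $\beta' = \alpha' \join \beta$ projective to $\alpha \prect{i} \beta$.

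The key move is extending one cover higher, to $\gamma' = \alpha' \join \gamma$. The transpose $[\beta,\gamma] \nearrow [\beta',\gamma']$ is nontrivial iff $\gamma \not\leq \beta'$, and this is precisely where the transfer failure enters. Assuming $\gamma \leq \beta'$ one has $\alpha' \join \gamma = \beta'$, hence $[\alpha' \meet \gamma, \gamma] \cong [\alpha',\beta']$ is prime; setting $d = \alpha' \meet \gamma$, the bounds $\alpha \leq d < \gamma$ together with $\beta \not\leq d$ (inherited from $\beta \not\leq \alpha'$) force $d \meet \beta = \alpha$, while $d \prec \gamma$, $\beta \leq \gamma$ and $\beta \not\leq d$ force $d \join \beta = \gamma$. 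Modularity then gives $[\alpha,d] \cong [\beta,\gamma]$, so $d$ is a cover of $\alpha$ of type $\tn j$ lying below $\gamma$, contradicting the failure of transfer. Therefore $\gamma \not\leq \beta'$, and the Dedekind transpose delivers $\beta' \prect{j} \gamma'$ as required.

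Part (\ref{transfer-join}) is dual: pick $\gamma'$ minimal subject to $\gamma' \leq \gamma$ and $\gamma' \not\leq \beta$, which is automatically join irreducible, and set $\beta' = \beta \meet \gamma'$ and $\alpha' = \alpha \meet \gamma'$; the upper transpose yields $\beta' \prect{j} \gamma'$ directly. The collapse to avoid this time is $\beta \not\leq \alpha \join \gamma'$: were it to hold, the element $\rho = \alpha \join \gamma'$ would satisfy $\rho \meet \beta = \alpha$ and $\rho \join \beta = \gamma$, so $[\alpha,\rho] \cong [\beta,\gamma]$ would again exhibit a $\tn j$-cover of $\alpha$ below $\gamma$. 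Consequently $\beta \leq \alpha \join \gamma'$, which makes the lower transpose $[\alpha', \beta'] \cong [\alpha, \beta]$ nontrivial and delivers $\alpha' \prect{i} \beta'$. The single genuine obstacle common to both parts is thus the same collapse check, each time resolved by converting the hypothetical degenerate configuration into the very cover that the transfer failure forbids.
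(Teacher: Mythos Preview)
Your proof is correct and follows the same strategy as the paper: choose $\alpha'$ maximal above $\alpha$ with $\beta\not\leq\alpha'$ (dually, $\gamma'$ minimal below $\gamma$ with $\gamma'\not\leq\beta$) and transpose the chain. The paper simply asserts that modularity yields $\intv{\beta}{\gamma}\nearrow\intv{\beta\join\alpha'}{\gamma\join\alpha'}$ without checking that this interval is nontrivial; you have made explicit the one genuinely nonautomatic step, namely that a collapse would produce a type~$\tn j$ cover of $\alpha$ inside $\intv{\alpha}{\gamma}$ and hence contradict the assumed failure of the transfer principle.
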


\begin{proof}
To see (\ref{transfer-meet}) suppose that the failure of $(\tn i, \tn j)$-transfer principle
is witnessed by the three element chain $\alpha \prect{i} \beta \prect{j} \gamma$.
Pick $\alpha'$ to be a maximal congruence that is over $\alpha$ but not over $\beta$.
Then obviously $\alpha'$ is meet irreducible,
as otherwise $\alpha'=\alpha_1 \cap \alpha_2$ with $\alpha_i > \alpha'$
would give $\alpha_i \geq \beta$ so that $\alpha'=\alpha_1 \cap \alpha_2 \geq \beta$.
One can easily check that
$\intv{\alpha}{\beta} \nearrow \intv{\alpha'}{\beta\join\alpha'}$
Moreover modularity of the lattice $\con{\m A}$ gives
$\intv{\beta}{\gamma} \nearrow \intv{\beta\join\alpha'}{\gamma\join\alpha'}$.
Summing up we get
$\alpha' \prect{i} \beta' \prect{j} \gamma'$
for $\beta' = \beta\join\alpha'$ and $\gamma' = \gamma\join\alpha'$.

The item (\ref{transfer-join}) can be shown in a dual way,
by replacing $\gamma$ with a minimal congruence $\gamma'$
that is below $\gamma$ but not below $\beta$.
\end{proof}

The next two Theorems establish both possible transfer principles,
as the typeset $\typset{\m A}$ is restricted in Corollary \ref{cor-typeset24}.

\begin{thm}
\label{thm-24transfer}
If $\m A$ is finite algebra from a congruence modular variety
in which $(\tn 2, \tn 4)$-transfer principle fails,
then $\cpolsatstar A$ is \npc.
\end{thm}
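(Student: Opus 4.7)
The plan is to exploit the failure of the $(\tn 2, \tn 4)$-transfer principle to reduce the intractable two-lattice-equation problem of Proposition \ref{prp-dl01} to $\csat{A}$. The failure supplies congruences $\alpha \prect{2} \beta \prect{4} \gamma$ in $\cn A$ such that no type \tn 4 prime quotient of $\m A$ starting at $\alpha$ lies in $\intv{\alpha}{\gamma}$, and this pathological coexistence of an affine layer below a distributive-lattice layer is exactly what we shall leverage.

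First I would apply Lemma \ref{lm-ij-transfer}(1) to replace the witnessing chain by one in which $\alpha$ is meet irreducible, giving a subdirectly irreducible picture above $\alpha$ in which $\beta$ is the monolith modulo $\alpha$ and $\gamma$ is its unique \tn 4-cover. Although Example \ref{ex-non-hom} warns that passing to $\m A/\alpha$ is not formally sound for $\csat{}$, the reduction we build uses only polynomials of $\m A$ evaluated on explicit constants of $A$, so the encoding lifts back to $\m A$ itself. Next I would extract, via tame congruence theory, a $(\beta,\gamma)$-minimal set $U$ with its unique trace $N=\set{0,1}$, together with polynomials $\meet,\join\in\pol A$ that realise a pseudo-meet and a pseudo-join making $\langle N,\rst\meet N,\rst\join N\rangle$ a two-element distributive lattice; an idempotent polynomial $\po e_U$ with range $U$ so that arbitrary inputs can be funneled into $U$; and a pseudo-Malcev operation $\po d\in\poln 3 A$ on the body of a $(\alpha,\beta)$-minimal set $V$, available from Proposition 4.22 of \cite{hm} since $\typ(\alpha,\beta)=\tn 2$.

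Given an instance of Proposition \ref{prp-dl01},
\begin{eqnarray*}
\mmeet_{i=1}^{m} x_{1}^{i}\join x_{2}^{i}\join x_{3}^{i} &=& 1, \\
\jjoin_{j=1}^{n} y_{1}^{j}\join y_{2}^{j}\join y_{3}^{j} &=& 0,
\end{eqnarray*}
I would replace every variable by its image under $\po e_U$, assemble the corresponding $\meet/\join$-expressions $\po L(\o x)$ and $\po R(\o y)$ inside $N$, fix constants $c_0,c_1\in\gamma\setminus\beta$ representing the two $\gamma$-classes in $N$, and couple the two halves into a single equation $\po g_1(\o z)=\po g_2(\o z)$ by composing with $\po d$, along the lines used in Example \ref{ex-non-hom}. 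The transfer failure is the essential ingredient: it says that the lattice distinction between $0$ and $1$ on $N$ cannot be absorbed by any type \tn 2 or affine rewriting inside $\intv{\alpha}{\gamma}$, so that superimposing $\po d$ on $\meet,\join$ does not collapse $0$ and $1$ modulo $\gamma$. Soundness of the reduction — a lattice witness yields a solution — is routine; completeness requires that every solution of $\po g_1=\po g_2$ in $\m A$ project to a valid $\set{0,1}$-assignment to the original variables.

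The main obstacle I expect is controlling the behavior of the coupling polynomials on \emph{all} inputs, not only on those landing in $N$. Since $\po e_U$ forces values only into $U$, inputs that fall in the tail of $U$ or in a different trace could in principle produce spurious witnesses. This is where the pseudo-Malcev $\po d$ on the type \tn 2 layer and the failure of transfer work in tandem: iterating $\po d$ and composing with $\meet,\join$ lets one fold any off-$N$ output back into a prescribed $\gamma$-class, while the failure of $(\tn 2,\tn 4)$-transfer certifies that this folding cannot identify the two $\gamma$-classes of $N$. Making this final bookkeeping precise — in particular, choosing $\po d$, the constants and the way $\po L$ and $\po R$ are glued so that both directions of the equivalence go through — is the bulk of the work.
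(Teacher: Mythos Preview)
Your overall strategy—reduce the two-equation lattice problem of Proposition~\ref{prp-dl01} to a single $\csat{A}$ instance by running the two lattice expressions inside the type~\tn 4 trace $\set{0,1}$ and then gluing them via the pseudo-Malcev of the type~\tn 2 layer—is exactly the paper's strategy. But two choices you make prevent the argument from going through.

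First, you invoke Lemma~\ref{lm-ij-transfer}(\ref{transfer-meet}) to make the \emph{bottom} congruence meet irreducible. The paper instead uses Lemma~\ref{lm-ij-transfer}(\ref{transfer-join}) to make the \emph{top} congruence join irreducible, and this is not cosmetic. With $\theta\prect{2}\alpha\prect{4}\beta$ and $\beta$ join irreducible, one gets $\beta=\cg{A}{0}{1}$ for the type~\tn 4 trace $\set{0,1}$; hence any pair $(a,b)\in\rst{\alpha}{V}$ in the type~\tn 2 minimal set $V$ already lies in $\cg{A}{0}{1}$, and Lemma~\ref{lm-malcev-chain} (second item, using the pseudo-Malcev on $V$) yields a single polynomial $\po f_{ab}$ with $\po f_{ab}\vpair{0}{1}=\vpair{a}{b}$. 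This projection from the lattice trace into the affine trace is the engine of the whole reduction. With your choice (bottom meet irreducible), all you can conclude is $(a,b)\in\cg{A}{0}{1}\vee\alpha$, and the $\alpha$-links do not collapse when you retract into $V$, so the single-polynomial projection is not available.

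Second, you never actually specify the coupling. The paper's equation is
\[
\po d_V\!\left(\po f_{ab}(\nu(\o y)),\,a,\,\po f_{ab}\big(\nu(\o y)\vee\pi(\o x)\big)\right)=b,
\]
with $(a,b)\in\rst{\alpha}{V}-\theta$, $\pi$ the conjunction-of-clauses and $\nu$ the dual expression. Completeness is then a four-case check using only that $\po d_V$ is a permutation in each slot on $V$ and that $\po f_{ab}$ hits $a$ at $0$ and $b$ at $1$; the transfer failure plays no further role beyond producing the chain. Your appeal to Example~\ref{ex-non-hom} is misleading: that example works by an ad hoc disjoint-range trick in a hand-built algebra, not by any mechanism available here. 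Finally, your worry about tails is moot: by Theorem~\ref{thm-cm-types}, minimal sets in congruence modular varieties have empty tails, so $U=N=\set{0,1}$ and $\po e_U$ already lands in the trace.
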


\begin{proof}
Suppose that $(\tn 2, \tn 4)$-transfer principle fails in $\m A$.
By Lemma \ref{lm-ij-transfer}.(\ref{transfer-join}) this failure can be witnessed with a
three element chain of congruences $\theta \prect{2} \alpha \prect{4} \beta$
with $\beta$ being a join irreducible.
Let $U=\set{0,1}$ be an $(\alpha,\beta)$-minimal set
and $V$ be an $(\theta,\alpha)$-minimal set in $\m A$.
Moreover let $\po e_U$ and $\po e_V$
be unary idempotent polynomials of $\m A$ with the range $U$ and $V$, respectively.
Taking into account the types of minimal sets $U$ and $V$ we know that
$\m A$ has the polynomials $\meet, \join$ that serve as the lattice operations on $\m A|_U$
(with respect to the lattice order $0<1$)
and a polynomial $\po d_V(x,y,z)$ that has the range contained in $V$
and is a Malcev operation on $V$.

\medskip

\begin{senumerate}
\item \label{24-projection}
For every $(a,b)\in \rst{\alpha}{V}$ there is a unary polynomial $\po f_{ab}(x)$ of $\m A$
such that $\po f_{ab}\vpair{0}{1} = \vpair{a}{b}$
\end{senumerate}
To produce such a polynomial $\po f_{ab}$ note that $(a,b) \in \rst{\alpha}{V} \ci \beta$
and $\beta = \cg{A}{0}{1}$, as $\beta$ is join irreducible and $(0,1)\not\in \alpha$.
Now simply recall Lemma \ref{lm-malcev-chain}.

Now with the help of (\ref{sect-decomp}.\ref{24-projection}) we will transform the system of two lattice equations
\begin{senumerate}
\item\label{lat-eqn}
$\displaystyle{
\left\{
\begin{array}{lcl}
  \mmeet_{i=1}^m x^i_1 \join x^i_2 \join x^i_3 &=& 1\\
  \jjoin_{i=1}^n y^i_1 \join y^i_2 \join y^i_3 &=& 0
\end{array}
\right.
}$
\end{senumerate}
into a single equation of the algebra $\m A$.
In view of Proposition \ref{prp-dl01} this will establish \npc{}ness of $\csat A$.

We start with picking $(a,b) \in \rst{\alpha}{V}-\theta$ to put
\begin{senumerate}
\item\label{24eqn}
$\po d_V \left( \po f_{ab}(\nu(\o y)), a, \po f_{ab}(\nu(\o y)\join \pi(\o x))\right) = b$,
where
\begin{eqnarray*}
  \pi(\o x) &=& \mmeet_{i=1}^m \po e_U(x^i_1) \join \po e_U(x^i_2) \join \po e_U(x^i_3), \\
  \nu(\o y) &=& \jjoin_{i=1}^n \po e_U(y^i_1) \join \po e_U(y^i_2) \join \po e_U(y^i_3).
\end{eqnarray*}
\end{senumerate}

First note that if $\o x, \o y$ is the $\set{0,1}$-lattice solution to (\ref{sect-decomp}.\ref{lat-eqn}) then keeping the values for the $x$'s and $y$'s we have
$\po f_{ab}(\nu(\o y))=a$ and $\po f_{ab}(\nu(\o y)\join \pi(\o x))=b$ so that
\[
\po d_V \left( \po f_{ab}(\nu(\o y)), a, \po f_{ab}(\nu(\o y)\join \pi(\o x))\right) =
\po d_V(a,a,b) = b,
\]
as required.

Conversely, if (\ref{sect-decomp}.\ref{24eqn}) has a solution $\o x, \o y$ in $\m A$ then
\begin{itemize}
  \item $\po f_{ab}(\nu(\o y)) = a = \po f_{ab}(\nu(\o y)\join \pi(\o x))$
        is not possible, \\
        as $\po d_V(a,a,a)=a$,
  \item $\po f_{ab}(\nu(\o y)) = b$ and $\po f_{ab}(\nu(\o y)\join \pi(\o x))=a$
        is not possible, \\
        as then we would have $1 = \nu(\o y) \leq \nu(\o y) \join \pi(\o x) = 0$ in the set $U$, contrary to our choice of $0<1$,
  \item $\po f_{ab}(\nu(\o y)) = b = \po f_{ab}(\nu(\o y)\join \pi(\o x))$
        is not possible, \\
        as then $\po d_V(b,a,b)= b$ contrary to the fact that
        $\po d_V(b,b,b)=b$ and $x \mapsto \po d_V(b,x,b)$ is a permutation of $V$.
\end{itemize}
Thus the only possibility for a solution $\o x, \o y$ to (\ref{sect-decomp}.\ref{24eqn}),
is to satisfy $\po f_{ab}(\nu(\o y))=a$ and $\po f_{ab}(\nu(\o y)\join \pi(\o x))=b$,
or equivalently that $\nu(\o y)=0$ and $\pi(\o x)=1$.
Therefore evaluating the $x^i_j$'s and $y^i_j$'s by $\po e_U(x^i_j)$ and $\po e_U(y^i_j)$, respectively, we get a solution to the system of lattice equations (\ref{sect-decomp}.\ref{lat-eqn}).
\end{proof}

\begin{thm}
\label{thm-42transfer}
If $\m A$ is finite algebra from a congruence modular variety
in which $(\tn 4, \tn 2)$-transfer principle fails,
then $\cpolsatstar A$ is \npc.
\end{thm}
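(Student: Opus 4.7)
The plan is to mirror the proof of Theorem \ref{thm-24transfer}, exchanging the roles of join- and meet-irreducibility. I would first apply Lemma \ref{lm-ij-transfer}(\ref{transfer-meet}) to the failure of the $(\tn 4, \tn 2)$-transfer principle, obtaining a chain $\theta \prect{4} \alpha \prect{2} \beta$ in $\con A$ with $\theta$ meet-irreducible; so $\alpha$ is the unique upper cover of $\theta$ and the quotient $\m B := \m A/\theta$ is subdirectly irreducible with monolith $\mu := \alpha/\theta$ of type $\tn 4$. Fix a $(\theta,\alpha)$-minimal set $U = \set{0,1}$ of $\m A$ carrying lattice polynomials $\meet, \join$, and an $(\alpha,\beta)$-minimal set $V$ of $\m A$ whose image $V/\theta$ is a $(\mu,\beta/\theta)$-minimal set of $\m B$ (obtained by first choosing such a minimal set in $\m B$ and lifting a suitable power of the corresponding idempotent polynomial to $\m A$). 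By Theorem \ref{thm-cm-types}(1), $V$ coincides with its body, hence it carries a pseudo-Malcev $\po d_V$; let $\po e_U, \po e_V$ be idempotent polynomials of $\m A$ onto $U, V$ respectively.

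The crux is to construct a unary polynomial $\po f$ of $\m A$ and a pair $(a,b) \in V^2 \cap \alpha \setminus \theta$ with $\po f(0) = a$ and $\po f(1) = b$, playing the role of $\po f_{ab}$ in the proof of Theorem \ref{thm-24transfer}. Pick $(a_0, b_0) \in V^2$ with $(a_0/\theta, b_0/\theta) \in \mu$ and $a_0 \not\congruent{\theta} b_0$; such a pair exists because $V/\theta$ is a type-$\tn 2$ minimal set in $\m B$ sitting just above the monolith $\mu$, so $\mu|_{V/\theta}$ is non-diagonal. Subdirect irreducibility of $\m B$ gives $\mu = \cg{B}{0/\theta}{1/\theta}$, hence $(a_0/\theta, b_0/\theta) \in \cg{B}{0/\theta}{1/\theta}$. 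The second part of Lemma \ref{lm-malcev-chain}, applied in $\m B$ to the pseudo-Malcev on $V/\theta$ and the idempotent $\po e_V$ descended to $\m B$, furnishes a unary polynomial $\po q$ of $\m A$ with $\po q(0) \congruent{\theta} a_0$ and $\po q(1) \congruent{\theta} b_0$. Setting $\po f := \po e_V \circ \po q$, $a := \po f(0)$, $b := \po f(1)$, one has $a, b \in V$ and $(a, b) \congruent{\theta} (a_0, b_0)$, so $(a, b) \in \alpha \setminus \theta$, while $\po f(0) = a$ and $\po f(1) = b$ on the nose. This is the principal obstacle of the proof: the weaker hypothesis of meet-irreducibility of $\theta$ (as opposed to the join-irreducibility of $\beta$ used in Theorem \ref{thm-24transfer}) forces the detour through the subdirectly irreducible quotient and the subsequent use of $\po e_V$ to snap the image of $\po q$ back into $V$ exactly.

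With $\po f$ in hand, the reduction is driven by the very same gadget as in Theorem \ref{thm-24transfer}. Setting
\begin{eqnarray*}
\pi(\o x) &=& \mmeet_{i=1}^m \po e_U(x^i_1) \join \po e_U(x^i_2) \join \po e_U(x^i_3), \\
\nu(\o y) &=& \jjoin_{i=1}^n \po e_U(y^i_1) \join \po e_U(y^i_2) \join \po e_U(y^i_3),
\end{eqnarray*}
the single polynomial equation
\[
\po d_V\bigl(\po f(\nu(\o y)),\, a,\, \po f(\nu(\o y) \join \pi(\o x))\bigr) \,=\, b
\]
is satisfiable in $\m A$ if and only if $\nu(\o y) = 0$ and $\pi(\o x) = 1$ in the two-element lattice $U$. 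The case analysis — relying on $\po d_V(a,a,b) = b$, $\po d_V(a,a,a) = a$, and the fact that $x \mapsto \po d_V(b,a,x)$ is a permutation of $V$ fixing $b$ only at $x = a$ — is identical to the one in Theorem \ref{thm-24transfer}, so Proposition \ref{prp-dl01} yields that $\csat A$ is \npc.
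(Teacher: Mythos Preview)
Your argument breaks down at the sentence ``such a pair exists because $V/\theta$ is a type-$\tn 2$ minimal set in $\m B$ sitting just above the monolith $\mu$, so $\mu|_{V/\theta}$ is non-diagonal.'' This inference is wrong: that $V/\theta$ is a $(\mu,\beta/\theta)$-minimal set only tells you that $(\beta/\theta)|_{V/\theta}$ strictly contains $\mu|_{V/\theta}$; it says nothing about $\mu|_{V/\theta}$ versus $0$. In fact the paper proves the opposite of what you need. In its notation (its $U$ is your $V$) claim~(\ref{alphaUintheta}) establishes $\alpha|_V \subseteq \theta$, i.e.\ $\mu|_{V/\theta}$ \emph{is} diagonal. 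The reason is that a pair $(a,b)\in\alpha|_V\setminus\theta$ would, via the Malcev chain for $\alpha=\theta\vee\cg{A}{0}{1}$ pushed through $\po e_V$, produce a two-element polynomial image $\set{0'',1''}\subseteq V$ which is itself a $(\theta,\alpha)$-minimal set of type~$\tn 4$; but a type~$\tn 4$ minimal set cannot sit inside a type~$\tn 2$ minimal set. Hence no pair $(a_0,b_0)\in V^2\cap(\alpha\setminus\theta)$ exists, and your polynomial $\po f$ cannot be constructed. Relaxing to $(a,b)\in\alpha|_V$ with merely $a\neq b$ does not help either: when $\theta|_V=0$ (for instance whenever $\theta=0$, which is perfectly compatible with the hypotheses) one gets $\alpha|_V=0$, and then every polynomial into $V$ collapses $\set{0,1}$.

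The paper's proof runs in the opposite direction: rather than mapping the type-$\tn 4$ set $U=\set{0,1}$ into the type-$\tn 2$ set $V$, it maps $V$ into $U$. Meet-irreducibility of $\theta$ guarantees that any $(c,d)\in V^2\setminus\theta$ can be polynomially projected onto $\set{0,1}$, and from these projections one builds a polynomial $\po s:V\to\set{0,1}$ that detects membership in a fixed $\alpha|_V$-class. The reduction then uses the lattice structure on $U$ together with the $\alpha$-block structure on $V$ (rather than your $\po d_V$ gadget), splitting into a $3$-SAT reduction when $|V/\alpha|=2$ and a graph-$|V/\alpha|$-colorability reduction otherwise. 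So the two proofs are not mirror images of each other; the asymmetry between the $(\tn 2,\tn 4)$ and $(\tn 4,\tn 2)$ cases is genuine and forces the change of strategy.
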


\begin{proof}
Suppose that $(\tn 4, \tn 2)$-transfer principle fails in $\m A$.
By Lemma \ref{lm-ij-transfer}.(\ref{transfer-join}) this failure can be witnessed with a
three element chain $\theta \prect{4} \alpha \prect{2} \beta$ with $\theta$ being a meet irreducible congruence.
Let $U$ be an $(\alpha,\beta)$-minimal set
and $V=\set{0,1}$ be an $(\theta,\alpha)$-minimal set in $\m A$.
Moreover let $\po e_U$ and $\po e_V$
be unary idempotent polynomials of $\m A$ with the range $U$ and $V$, respectively.
Taking into account the types of minimal sets $U$ and $V$ we know that
$\m A$ has the polynomials $\meet, \join$ that serve as the lattice operations on $\m A|_V$
(with respect to the lattice order $0<1$)
and a polynomial $\po d_U(x,y,z)$ that has the range contained in $U$
and is a Malcev operation on $U$.

We start with the following claims:
\begin{senumerate}
\item\label{01incgcd}
    $(0,1) \in \cg{A}{c}{d}$ for each $(c,d)\not\in \theta$,
\item \label{42-projection}
    for every $(c,d)\in U^2 - \theta$ there is a unary polynomial $\po f_{cd}(x)$ such that
    $\po f_{cd}\vpair{c}{d} = \vpair{0}{1}$,
\item \label{alphaUintheta}
    $\alpha|_U \ci \theta$.
\end{senumerate}
Note that $(0,1) \in \alpha \leq \theta \join \cg{A}{c}{d}$
which together with Lemma \ref{lm-malcev-chain} gives a Malcev chain that connects $0$ with $1$
via projections by unary polynomials pairs from $\theta$ or the pair $(c,d)$.
Applying $\po e_V$ to this chain we get that it entirely lives in $\set{0,1}$.
Since $\theta|_V =0$ we get that at some link in this chain the pair ${0,1}$ occurs as a projection by a unary polynomial $\po f$ applied to the pair $(c,d)$.
This obviously gives $(0,1) \in \cg{A}{c}{d}$.

If, as in the assumptions of (\ref{sect-decomp}.\ref{42-projection}), $(c,d)\in U^2 - \theta$
then either $\po f$ is good enough to serve as $\po f_{cd}$
or we put $\po f_{cd}(x) = \po f\po d_U(c,x,d)$.

To see (\ref{sect-decomp}.\ref{alphaUintheta})
suppose to the contrary that $(0',1') \in \alpha|_U-\theta$.
In particular $(0',1') \in \alpha = \theta \join \cg{A}{0}{1}$.
Thus there is a Malcev chain connecting $0'$ and $1'$
via links of the form $\set{\po f(c), \po f(d)}$, with $\po f$ being unary polynomials of $\m A$
and $(c,d)\in \theta \cup \set{0,1}$.
By applying $\po e_U$ to this chain we may assume that it is fully contained in $U$.
Since $(0',1') \not\in \theta$ at least one link must be obtained by projecting the set $\set{0,1}$ onto a pair $(0'',1'')\in\alpha|_U-\theta$.
Therefore $\set{0'',1''}$ is a $(\theta,\alpha)$-minimal set of type $\tn 4$ lying inside a minimal set $U$ of type $\tn 2$, which is not possible.

\bigskip

Now we pick a transversal $\set{c_0,c_1,\ldots, c_k}$ of $U/\alpha$ and define a unary polynomial
\[
\po s(x) = \jjoin_{i=1}^k \po f_{c_0c_i}(x)
\]
so that $\po s(c_0)=0$ and $\po s(c_i) = 1$ for $i=1,\ldots,k$.
In fact
\begin{senumerate}
\item
for every $a \in U$ we have
$\displaystyle{
\po s(a) =
\left\{
\begin{array}{ll}
0, & \mbox{\rm if $a \in c_0/\alpha$,}\\
1, & \mbox{\rm otherwise,}
\end{array}
\right.
}$
\end{senumerate}
as for every $a \in U \cap c_i/\alpha$ we have $(\po s(a), \po s(c_i))\in \alpha|_U \ci \theta$,
which together with $\theta|_\set{0,1}=0$ gives $\po s(a)=\po s(c_i)$.

\medskip

Now our proof of \npc ss of $\cpolsatstar A$ splits into two cases depending on the size of $U/\alpha$.

\bigskip

\noindent
\textsc{Case 1.} $\card{U/\alpha}=2$, i.e., $U=\set{c_0,c_1}$.

\medskip
In this case we will transform each 3-SAT instance:
\[
\Phi \equiv \mmeet_{i=1}^m \ell^i_1 \join \ell^i_2 \join \ell^i_3,
\]
where $\ell^i_j \in \set{x^i_j, \neg x^i_j}$,
to an equation
\begin{senumerate}
\item\label{42eqn}
$\displaystyle{
\mmeet_{i=1}^m z^i_1 \join z^i_2 \join z^i_3 = 1
}$,
where
\[
z^i_j =
\left\{
\begin{array}{ll}
\po s \po e_U(x^i_j),                  & \mbox{\rm if $\ell^i_j = x^i_j$},\\
\po s \po d_U(\po e_U(x^i_j),c_0,c_1), & \mbox{\rm if $\ell^i_j = \neg x^i_j$}.
\end{array}
\right.
\]
\end{senumerate}

It should be obvious that each evaluation satisfying $\Phi$ can be transformed into a solution of (\ref{sect-decomp}.\ref{42eqn}) by sending the $0$'s to $c_0$ and the $1$'s to $c_1$.

Conversely, if the $a^i_j$'s form a solution to (\ref{sect-decomp}.\ref{42eqn}) then putting
\[
x^i_j =
\left\{
\begin{array}{ll}
0, & \mbox{\rm if $\po e_U(a^i_j) \in c_0/\alpha \cap U$,}\\
1, & \mbox{\rm if $\po e_U(a^i_j) \in c_1/\alpha \cap U$,}
\end{array}
\right.
\]
we get a valuation satisfying $\Phi$.

\bigskip

\noindent
\textsc{Case 2.} $\card{U/\alpha} \geq 3$.

\medskip
In this case with each graph $G=(V,E)$ with $V=\set{v_1,\ldots,v_n}$ we associate a polynomial
$\po t_G(x_1,\ldots,x_n) \in \pol{A}$ in such a way that $G$ is $\card{U/\alpha}$-colorable
iff the equation $\po t_G(\o x)=1$ has a solution in $\m A$.

First observe that for the polynomial
$\po s'(x,y) = \po s \po d_U(x,y,c_0)$ and $a,b\in U$ we have
\begin{senumerate}
\item\label{alpha-class}
$\displaystyle{
\po s'(a,b) =
\left\{
\begin{array}{ll}
0, & \mbox{\rm if $(a,b)\in \alpha$,} \\
1, & \mbox{\rm otherwise.}
\end{array}
\right.
}$
\end{senumerate}
This is due to the fact that the polynomial $\po d_U(x,y,z)$, after fixing values of any two of its variables (by values in $U$), is a permutation of $U$ with respect to the remaining variable.

\medskip
Now we put
\[
\po t_G(x_1,\ldots,x_n) = \mmeet_{\set{v_i,v_j}\in E} \po s'(\po e_U(x_i), \po e_U(x_j)).
\]
From (\ref{sect-decomp}.\ref{alpha-class}) we know that $\po t_G(x_1,\ldots,x_n) = 1$ iff
for all $i,j$ such that $v_i$ and $v_j$ are connected in $G$ by an edge,
$\po e_U(x_i)$ and $\po e_U(x_j)$ are in different $\alpha|_U$-classes.
This means that $\po t_G(\o x) = 1$ has a solution in $\m A$ iff $G$ can be properly colored by
$\alpha|_U$-classes.
\end{proof}

With the help of Theorems \ref{thm-24transfer} and \ref{thm-42transfer} we are ready to prove the promised decomposition.

\begin{cor}
\label{cor-decomposition}
If $\m A$ is finite algebra from a congruence modular variety
then either $\m A$ is isomorphic to a direct product $\m A_2 \times \m A_4$,
where $\typset{\m A_2} \ci \set{\tn 2}$ and $\typset{\m A_4} \ci \set{\tn 4}$
or $\cpolsatstar A$ is \npc.
\end{cor}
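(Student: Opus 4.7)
Suppose $\csat A$ is not \npc. By Corollary~\ref{cor-typeset24} we have $\typset{\m A}\subseteq\set{\tn 2,\tn 4}$, and Theorems~\ref{thm-24transfer} and~\ref{thm-42transfer} give us both the $(\tn 2,\tn 4)$- and $(\tn 4,\tn 2)$-transfer principles in~$\m A$. The plan is to produce two complementary factor congruences separating the type-$\tn 2$ behaviour from the type-$\tn 4$ behaviour.

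The first ingredient would be the solvable radical $\sigma$ of $\m A$, which exists in any finite algebra of a CM variety and satisfies $\typset{0_A,\sigma}\subseteq\set{\tn 2}$ because type~$\tn 2$ is the Abelian type (Theorem~\ref{thm-cm-types}). I would next show $\typset{\sigma,1_A}\subseteq\set{\tn 4}$. If some prime quotient $\sigma\leq\alpha\prect{2}\beta$ existed, I would fix a maximal chain $\sigma=\gamma_0\prec\cdots\prec\gamma_k=\alpha$ and push the type-$\tn 2$ cover downward rung by rung: a $\prect{4}\prect{2}$ configuration would be straightened by $(\tn 4,\tn 2)$-transfer, while a $\prect{2}\prect{2}$ configuration lets me simply keep the lower cover. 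Iterating down to $\sigma$ would yield $\sigma\prect{2}\sigma^*$ with $\sigma^*\leq\beta$, and then $[\sigma^*,\sigma^*]\leq\sigma$ would make $\sigma^*$ solvable over the solvable congruence~$\sigma$, contradicting the maximality of~$\sigma$.

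For the complementary side I would define $\delta$ as the join of all congruences $\eta$ satisfying $\typset{0_A,\eta}\subseteq\set{\tn 4}$; modularity of $\Cn A$ together with the invariance of type under projectivity of prime quotients gives $\typset{0_A,\delta}\subseteq\set{\tn 4}$. No prime quotient can carry both types, so $\sigma\wedge\delta=0_A$ is immediate. For $\sigma\vee\delta=1_A$ I would argue dually: assuming $\sigma\vee\delta<1_A$, any cover $\sigma\vee\delta\prec\beta$ must be of type~$\tn 4$ by the previous paragraph, and modularity identifies $[\delta,\sigma\vee\delta]$ with $[0_A,\sigma]$, so that interval has only type-$\tn 2$ covers; pushing the type-$\tn 4$ cover downward through this chain by iterated $(\tn 2,\tn 4)$-transfer would produce $\delta\prect{4}\beta^*\leq\beta$ with $\typset{0_A,\beta^*}\subseteq\set{\tn 4}$, forcing $\beta^*\leq\delta$ and contradicting $\delta\prec\beta^*$.

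Finally, since $[\sigma,\delta]\leq\sigma\wedge\delta=0_A$, the commutator theory for CM varieties (\cite{fm}) yields that $\sigma$ and $\delta$ permute, so they are complementary factor congruences and $\m A\cong\m A/\delta\times\m A/\sigma$. Modularity also supplies $[\delta,1_A]\cong[0_A,\sigma]$, so setting $\m A_2:=\m A/\delta$ and $\m A_4:=\m A/\sigma$ delivers the required decomposition with $\typset{\m A_2}\subseteq\set{\tn 2}$ and $\typset{\m A_4}\subseteq\set{\tn 4}$. The main obstacle I anticipate is the careful bookkeeping in the two push-down arguments---in particular confirming that each application of a transfer principle really keeps the pushed cover inside the intended interval, so that the contradictions at $\sigma$ and at $\delta$ actually materialize; the concluding assembly, turning $[\sigma,\delta]=0_A$ together with trivial meet and full join into a genuine direct product decomposition, should be a clean invocation of Freese--McKenzie commutator theory.
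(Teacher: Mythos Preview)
Your proposal is essentially the paper's proof: your $\sigma$ is the paper's $\rho_2$, your $\delta$ is its $\rho_4$, and your push-down arguments via the two transfer principles match the paper's claim~(\ref{sect-decomp}.\ref{mod-radicals}) (the paper phrases it as ``take $\alpha$ minimal above $\rho_i$ with a type-$\tn i$ cover and derive a contradiction in one step,'' which is just your iteration compressed). The existence of the type-$\tn 4$ radical $\delta$ as a join, and the verification that $\sigma\wedge\delta=0$ and $\sigma\vee\delta=1$, are handled the same way in both.

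One genuine slip: you justify permutability of $\sigma$ and $\delta$ by ``$[\sigma,\delta]\le\sigma\wedge\delta=0$, so commutator theory yields permutability.'' In a congruence modular variety, $[\alpha,\beta]=0$ does \emph{not} imply $\alpha$ and $\beta$ permute --- already in the three-element chain (a distributive lattice) the two atoms of the congruence lattice meet to~$0$ but fail to permute. What you should cite instead is that $\sigma$, being the solvable radical, is solvable, and in a congruence modular variety every solvable congruence permutes with all congruences (this is exactly the paper's appeal to \cite[Theorem~6.2]{fm}). With that correction your argument goes through.
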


\begin{proof}
Suppose that $\cpolsatstar A$ is not \npc.
From Theorem \ref{thm-type-3} we know that $\typset{A} \ci \set{\tn 2, \tn 4}$.
To get the required decomposition we start with the following easy claim:

\medskip

\begin{senumerate}
\item \label{modular-iso}
For $\alpha,\beta \in \cn{A}$ we have
$\typset{\alpha \cap \beta, \alpha} = \typset{\beta, \alpha \join \beta}$.
\end{senumerate}
In a modular lattice the intervals
$\intv{\alpha \cap \beta}{\alpha}$ and
$\intv{\beta}{\alpha \join \beta}$ are isomorphic
under the mutually converse mappings
$\gamma \mapsto \gamma\join\beta$ and
$\alpha \cap \delta \tomaps \delta$.
In particular every covering pair $\alpha \cap \beta \leq \gamma \prec \gamma' \leq \alpha$
is mapped onto a covering pair
$\beta \leq \gamma\join\beta \prec \gamma'\join\beta \leq \alpha\join\beta$.
In fact $\intv{\gamma}{\gamma'}\nearrow \intv{\gamma\join\beta}{\gamma'\join\beta}$
so that $\typ({\gamma},{\gamma'})=\typ({\gamma\join\beta},{\gamma'\join\beta})$.

\medskip

\begin{senumerate}
\item \label{radicals}
For $i \in \set{2,4}$ there is the largest congruence $\rho_i \in \cn{A}$
with $\typset{0,\rho_i} \ci \set{\tn i}$.
\end{senumerate}
To prove this it suffices to show that for $\alpha,\beta \in \cn{A}$ with
$\typset{0,\alpha} \ci \set{\tn i} \co \typset{0,\beta}$ we have
$\typset{0, \alpha\join\beta} \ci \set{\tn i}$.
Thus let $\alpha \cap \beta \leq \delta \prec \delta' \leq \alpha \join \beta$.
If $\alpha\cap\delta < \alpha\cap\delta'$ or $\alpha\join\delta < \alpha\join\delta'$
that such an inequality is actually a covering so that, by our assumptions and (\ref{sect-decomp}.\ref{modular-iso}), it has type $\tn i$.
Therefore $\delta \prec \delta'$ inherits type either from
$\alpha\cap\delta \prect{i} \alpha\cap\delta'$ or from
$\alpha\join\delta \prect{i} \alpha\join\delta'$.
On the other hand at least one of those strong inequalities has to hold, 
as otherwise the congruences $\alpha\cap\delta \leq \alpha, \delta, \delta' \leq \alpha\join\delta$
would form a pentagon, contradicting the modularity of $\cn{A}$.

\medskip

\begin{senumerate}
\item \label{mod-radicals}
For $\set{i,j} = \set{2,4}$ we have $\typset{\rho_i,1} \ci \set{\tn j}$.
\end{senumerate}
Suppose that (\ref{sect-decomp}.\ref{mod-radicals}) fails,
and $\alpha$ is a minimal congruence above $\rho_i$ that has a cover, say $\beta$, of type $\tn i$.
Since, by the definition of $\rho_i$ all its covers are of type $\tn j$
we know that $\rho < \alpha \prect{i} \beta$.
Therefore there is $\theta\in \cn{A}$ with $\rho \leq \theta \prect{j} \alpha \prect{i} \beta$.
By Theorems \ref{thm-24transfer} and \ref{thm-42transfer} we know that there is $\theta'\in \cn{A}$
with $\theta \prect{i} \theta' \leq \beta$.
This contradicts the minimality of $\alpha$ as $\theta < \alpha$
and $\theta$ has a cover of type $\tn i$.

\medskip

From (\ref{sect-decomp}.\ref{mod-radicals}) we know that for
$\m A_2 = \m A/\rho_4$ and $\m A_4 = \m A/\rho_2$ we have $\typset{\m A_i} \ci \set{\tn i}$.
To show that $\m a$ is isomorphic with the product $\m A_2 \times \m A_4$
first note that $\rho_2 \cap \rho_4 = 0$, by the definitions of $\rho_i$,
and $\rho_2 \join \rho_4 = 1$, by (\ref{sect-decomp}.\ref{mod-radicals}).
Finally, by Theorem 6.2 of \cite{fm}, $\rho_2$ permutes with all congruences of $\m A$, as $\rho_2$ is solvable. This gives that $\rho_2, \rho_4$ gives a factorization of $\m A$, as required.
\end{proof}

The decomposition established in Corollary \ref{cor-decomposition} 
together with the possibility of passing to the quotients 
allows us to separately consider solvable algebras,
i.e. algebras with typeset contained in $\set{\tn 2}$ and
entirely lattice type algebras, i.e. algebras with typeset contained in $\set{\tn 4}$. 
\ssection{Restricting solvable behavior
\label{sect-solvable}}

The aim of this section is to show that every finite solvable algebra $\m A$ from a congruence modular variety is in fact nilpotent or $\m A$ has a homomorphic image $\m A^\prime$ with
$\cpolsatstar {A^\prime}$ being \npc. We start with the following construction.

\begin{lm}
\label{lm-solvable}
Let $\m A$ be a finite solvable subdirectly irreducible algebra from a congruence modular variety.
If $\comm{1}{\mu}>0$, where $\mu$ is the monolith of $\m A$
then $\cpolsatstar A$ is \npc.
\end{lm}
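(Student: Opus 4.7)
The plan is to reduce 3-SAT to $\csat A$ by exploiting the non-centrality of the monolith to cook up a polynomial that behaves like multiplication on a two-element subset, and then iterating it to obtain arbitrarily long conjunctions.

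First I would set up the tame-congruence scaffolding. Since $\m A$ is solvable in a congruence modular variety, Theorem \ref{thm-cm-types} gives $\typset{\m A}\subseteq\set{\tn 2}$, so in particular $\typ(0_A,\mu)=\tn 2$. Fix a $(0_A,\mu)$-minimal set $U$ with idempotent polynomial $\po e_U$, a trace $N\subseteq U$ containing two distinct elements $0,1$, and a pseudo-Malcev operation $\po d\in\poln 3 A$ on $U$; then $(\rst{\m A}{N})/\rst{0_A}{N}$ is polynomially equivalent to a finite vector space. Subdirect irreducibility and $(0,1)\in\mu$ give $\mu=\cg{A}{0}{1}$, and the second item of Lemma \ref{lm-malcev-chain} (applied with $\po d$ and $\po e_U$) supplies, for every $(u,v)\in\mu$, a unary polynomial $\po f_{uv}$ of $\m A$ with $\po f_{uv}(0)=u$ and $\po f_{uv}(1)=v$.

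Second I would convert the hypothesis $\comm{1_A}{\mu}>0$ into a concrete ``multiplication'' polynomial. The failure of $C(1_A,\mu;0_A)$ yields a polynomial $\po t$, elements $a,b\in A$ and $(c,d)\in\mu$ with
\[
\po t(a,c)=\po t(a,d)\qquad\text{but}\qquad \po t(b,c)\neq\po t(b,d).
\]
By Lemma \ref{lm-malcev-chain} one may pull $(c,d)$ back to $(0,1)$ via $\po f_{cd}$, and then apply $\po e_U$ and $\po d$ to force the range of the resulting polynomial into $N$. The type-$\tn 2$ affine structure on $N$ makes the map $y\mapsto\po t(b,y)$ a nontrivial affine permutation, which after subtracting $\po t(b,0)$ in the module and composing with the inverse linear map produces a binary polynomial $\po m(x,y)$ such that
\[
\po m(a,y)=0\quad\text{and}\quad\po m(b,y)=y\qquad\text{for every }y\in N.
\]

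Third I would iterate and reduce. Nesting gives $\po m_k(x_1,\ldots,x_k,y)=\po m(x_1,\po m(x_2,\ldots,\po m(x_k,y)\ldots))$, which on $\set{a,b}^k\times N$ equals $y$ if all $x_i=b$ and equals $0$ otherwise. Together with the module addition on $N$ and the negation $\bar x=\po d(x,a,b)$ that swaps $a$ and $b$, this yields a full Boolean clone of operations $\set{a,b}^n\to\set{a,b}$. To reduce a 3-SAT instance $\Phi\equiv\bigwedge_{i=1}^m(\ell^i_1\join\ell^i_2\join\ell^i_3)$, assign to each propositional variable $v$ an algebra variable $z_v$, encode $v$ by $\po e_U(z_v)$ pushed into $\set{a,b}$, encode negation via $\bar{(\cdot)}$, and express each clause as an ``OR'' built from $\po m$, $\bar{(\cdot)}$ and addition (e.g.\ $\ell_1\join\ell_2\join\ell_3=\overline{\bar\ell_1\meet\bar\ell_2\meet\bar\ell_3}$). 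Then an iterated $\po m_m$ over the clause-indicators reduces the whole formula to a single equation $\po R(\bar z)=1$ that is solvable in $\m A$ precisely when $\Phi$ is satisfiable.

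The main obstacle is the normalization in the second paragraph: starting from an arbitrary polynomial witness of the failure of $C(1_A,\mu;0_A)$ one must use subdirect irreducibility, the affine structure of the trace and the pseudo-Malcev operation together to produce the clean identity $\po m(a,y)=0,\ \po m(b,y)=y$ on $N$; a subsidiary technicality in the reduction is to guarantee that the unconstrained algebra variables $z_v$ effectively take only the two values $a,b$, which is arranged by composing $\po e_U$ with further applications of $\po m$ and $\po d$ to collapse $U\setminus\set{a,b}$ into $\set{a,b}$.
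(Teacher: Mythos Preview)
Your plan has the right flavor but contains a genuine gap in the third step, namely in how you control the \emph{first} argument of the centrality witness. The pair $(a,b)$ coming from the failure of $C(1_A,\mu;0_A)$ consists of two arbitrary elements of $A$; nothing places them inside the $(0,\mu)$-minimal set $U$, and there is no idempotent polynomial of $\m A$ with range $\set{a,b}$. Consequently your proposed negation $\bar x=\po d(x,a,b)$ is not meaningful (the pseudo-Malcev $\po d$ is only controlled on the body of $U$), and your remark that one can ``collapse $U\setminus\set{a,b}$ into $\set{a,b}$'' by composing $\po e_U$, $\po m$ and $\po d$ does not go through: $\po e_U$ lands in $U$, $\po m$ lands in $N$, and $\po d$ does not send $U$ to $\set{a,b}$. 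Without a retraction onto $\set{a,b}$ and a compatible negation, your 3-SAT encoding cannot be completed, because in $\csat A$ the variables range over all of $A$.

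The paper avoids this obstacle by inserting an extra layer: it passes to the centralizer $\alpha=\centr{\mu}{0}$, chooses a cover $\alpha\prec\beta$ (which exists precisely because $\comm{1}{\mu}>0$), and works with an $(\alpha,\beta)$-minimal set $U$ and its idempotent $\po e_U$. The first arguments of the centrality witnesses are then the transversal elements $d_0,\ldots,d_k$ of $U/\alpha$, and the crucial equality $\comm{\mu}{\alpha}=0$ forces the resulting polynomial $\po s(v,y)$ to depend only on the $\alpha$-class of $y\in U$. This is what makes the reduction sound: variables are pushed into $U$ by $\po e_U$, the effective value of a variable is its $\alpha|_U$-class, the pseudo-Malcev $\po d_U$ provides the needed negation (or difference) on $U/\alpha$, and one then splits into a graph-coloring reduction when $|U/\alpha|\geq 3$ and a 3-SAT reduction when $|U/\alpha|=2$. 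Your normalization of the second argument (getting $\po m(b,y)=y$ on $N$) is close to what the paper does for $\po s$, but the missing ingredient is exactly this move to the $(\alpha,\beta)$-level to tame the first argument.
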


\begin{proof}
Put $\alpha =\centr{\mu}{0}$.
If $\comm{1}{\mu}>0$ then there is $\beta \in \cn{A}$ such that
$\mu \leq \alpha \prec \beta$
and obviously $\alpha \prect{2} \beta$ by the solvability of $\m A$.
Moreover we have $\comm{\alpha}{\mu}=0$ while $\comm{\beta}{\mu}=\mu$.
Pick:
\begin{itemize}
  \item an $(\alpha,\beta)$-minimal set $U$,
  \item a transversal $\set{d_0,d_1,\ldots,d_k}$ of $U/\alpha$,
  \item a $(0,\mu)$-minimal set $V$,
  \item a pair $(e,a) \in \mu|_V-0$ and let $N=e/\mu \cap V$ be the trace of $V$ containing both $e$ and $a$.
\end{itemize}
We know that $\m A|_N$ is polynomially equivalent to a (one dimensional) vector space and we may assume that $e$ is its zero element with respect to the vectors addition $+$ which has to be a polynomial of $\m A$.

Now note that by the choice of the $d_i$'s
we know that $\alpha < \alpha \join \cg{A}{d_i}{d_j}$ for $i \neq j$
which gives $\comm{\cg{A}{d_i}{d_j}}{\mu}=\mu$.
This has to be witnessed by a polynomial $\po s_{ij}(x,y_1,\ldots,y_m)$ and elements
$(c,d)\in \cg{A}{d_i}{d_j}$ and $(a_1,b_1),\ldots, (a_m,b_m) \in \mu = \cg{A}{e}{a}$
satisfying
\begin{eqnarray*}
  s'_{ij}(c,a_1,\ldots,a_m) & =    & s'_{ij}(c,b_1,\ldots,b_m), \\
  s'_{ij}(d,a_1,\ldots,a_m) & \neq & s'_{ij}(d,b_1,\ldots,b_m).
\end{eqnarray*}
Since $\m A$ is a solvable algebra in a congruence modular variety, the variety generated by $\m A$ is solvable and therefore congruence permutable (by Theorem 6.3 of \cite{fm}).
Thus $\m A$ has a Malcev term, say $\po d$,
and by Lemma \ref{lm-malcev-chain} we get unary polynomials
$\po q, \po p_1,\ldots,\po p_m$ of $\m A$ with
$\po q(d_i)=c, \po q(d_j)=d$ and $\po p_k(e)=a_k, \po p_k(a)=b_k$ for all $1\leq k\leq m$.
Thus for the polynomial
$\po s_{ij}(x,y) = \po e_V\po s'_{ij}(\po q(y),\po p_1(x),\ldots,\po p_m(x))$
we have
\begin{eqnarray*}
  \po s_{ij}(e,d_i) & =     & \po s_{ij}(a,d_i), \\
  \po s_{ij}(e,d_j) & \neq  & \po s_{ij}(a,d_j).
\end{eqnarray*}
Again referring to Lemma \ref{lm-malcev-chain}
and using  $(e,a) \in \cg{A}{\po s_{ij}(e,d_j)}{\po s_{ij}(a,d_j)}$ we get a unary polynomial
$\po p$ of $\m A$ that takes the pair $({\po s_{ij}(e,d_j)},{\po s_{ij}(a,d_j)})$ to $(e,a)$.
Now, replacing $\po s_{ij}(x,y)$ by $\po d(\po s_{ij}(x,y), \po s_{ij}(e,y), e)$
we get that
\[
\begin{array}{rcccccl}
     && \po s_{ij}(e,d_i) & =     & \po s_{ij}(a,d_i) & = & e,\\
e & = & \po s_{ij}(e,d_j) & \neq  & \po s_{ij}(a,d_j) & = & a.
\end{array}
\]
In fact we know that $\po s_{ij}(e,y)=e$ for all $y\in A$.

Now, for each fixed $y \in A$ the unary polynomial
\[
V \ni v \mapsto \po s_{ij}(v,y) \in V
\]
is either a permutation of $V$ or collapses $\mu|_V$ to $0$,
i.e., it is constant on $\mu|_V$-classes.
Thus, iterating $\po s_{ij}(v,y)$ in the first variable a sufficient number of times
we can modify $\po s_{ij}$ to additionally satisfy that for each fixed $y \in A$ the new polynomial $\po s_{ij}(v,y)$ is either the identity map on $V$ or it is constant on $\mu|_V$-classes.
Actually, in the second case, i.e. if $\po s_{ij}(v,y)$ collapses $\mu|_V$ to $0$
then it collapses the trace $N$ to $\po s_{ij}(e,y)=e$.
Summing up, we produced polynomials $\po s_{ij}$ satisfying
\[
\begin{array}{rccl}
\po s_{ij}(e,y)   &=& e, &\mbox{\rm for each \ } y\in A,\\
\po s_{ij}(v,d_i) &=& e, &\mbox{\rm for each \ } v\in N,\\
\po s_{ij}(v,d_j) &=& v, &\mbox{\rm for each \ } v\in V.
\end{array}
\]
Now, using the fact that $\comm{\mu}{\alpha}=0$
we can keep the above equalities by varying the second variable modulo $\alpha$
\[
\begin{array}{rccl}
\po s_{ij}(e,y) &=& e, &\mbox{\rm for each \ } y\in A,\\
\po s_{ij}(v,y) &=& e, &\mbox{\rm for each \ } v\in N \mbox{\rm \ and \ } y \in d_i/\alpha,\\
\po s_{ij}(v,y) &=& v, &\mbox{\rm for each \ } v\in V \mbox{\rm \ and \ } y \in d_j/\alpha.
\end{array}
\]
Now for $j=0,\ldots,k$ define
\[
\po s_j(x,y) = \po s_{i_1j}(\ldots \po s_{i_{k-1}j}(\po s_{i_{k}j}(x,y),y)\ldots,y),
\]
where $\set{j,i_1,\ldots,i_k}=\set{0,1,\ldots,k}$.
It is easy to observe that $\po s_j$ has the range contained in $V$ and
\[
\begin{array}{cccl}
\po s_{j}(e,y) &=& e, &\mbox{\rm for each \ } y\in A,\\
\po s_{j}(v,y) &=& e, &\mbox{\rm for each \ } v\in N \mbox{\rm \ and \ } y \not\in d_j/\alpha,\\
\po s_{j}(v,y) &=& v, &\mbox{\rm for each \ } v\in V \mbox{\rm \ and \ } y \in d_j/\alpha.
\end{array}
\]
Indeed, the first and the last item follows directly from the definition of $\po s_j$.
To see the middle one note that for $v\in N$ and $y \in d_{i_\ell}/\alpha$ we have
\begin{eqnarray*}
v' & =             & \po s_{i_{\ell+1}}(\ldots \po s_{i_{k-1}j}(\po s_{i_{k}j}(v,y),y)\ldots,y)\\
   &\congruent{\mu}& \po s_{i_{\ell+1}}(\ldots \po s_{i_{k-1}j}(\po s_{i_{k}j}(e,y),y)\ldots,y)\\
   & = & e,
\end{eqnarray*}
i.e. $v'\in N$ so that $\po s_{i_\ell j}(v',y)=e$, and consequently
\begin{eqnarray*}
\po s_{j}(v,y) &=& \po s_{i_1j}(\ldots \po s_{i_{\ell-1}j}(\po s_{i_{\ell}j}(v',y),y)\ldots,y)\\
               &=& \po s_{i_1j}(\ldots \po s_{i_{\ell-1}j}(e,y)\ldots,y)\\
               &=& e.
\end{eqnarray*}

As $\m A|_N$ is polynomially equivalent to a vector space (with $e$ being its neutral element)
and for $v\in N$ and $y \in U$ the elements $\po s_j(v,y)$ are in $\m A|_N$
then it makes sense to sum them up and define
\[
\po s(x,y) = \sum_{j=1}^k \po s_j(x,y)
\]
to get
\[
\begin{array}{rccl}
\po s(e,y) &=& e, &\mbox{\rm for each \ } y\in A,\\
\po s(v,y) &=& e, &\mbox{\rm for each \ } v\in N \mbox{\rm \ and \ } y \in U \cap d_0/\alpha,\\
\po s(v,y) &=& v, &\mbox{\rm for each \ } v\in V \mbox{\rm \ and \ } y \in U -    d_0/\alpha.
\end{array}
\]
Indeed, in the sum defining $\po s$, at most one summand differs from $e$,
namely $\po s_j(v,y)$ for the unique $j$ such that $y \in U \cap d_j/\alpha$.

Now we are ready to code each instance of an \npc \ problem in a single equation of $\m A$ endowed with some polynomials.
As for Theorem \ref{thm-42transfer}
our proof splits into two cases depending on the size of $U/\alpha$.

\bigskip

\noindent
\textsc{Case 1.} $\card{U/\alpha}\geq 3$.

\medskip
In this case with each graph $G=(V,E)$ with $V=\set{v_1,\ldots,v_n}$ we associate a polynomial
$\po t_G(x_1,\ldots,x_n) \in \poln n {A}$ in such a way that $G$ is $\card{U/\alpha}$-colorable
iff the equation $\po t_G(\o x)=1$ has a solution in $\m A$.

For more readability we define polynomials of the form  $v \with_{\po s} Y$
acting on $N \times U^m$ as follows
\[
v \with_{\po s} \set{y_1,\ldots,y_m} = \po s(\ldots \po s(\po s(v,y_1),y_2)\ldots,y_m).
\]
Note that if $(v,\o y) \in N \times U^m$ then the value of $v \with_{\po s} Y$ does not depend on the order of the $y_i$'s and in fact we have
\[
a \with_{\po s} Y =
\left\{
\begin{array}{ll}
a, & \mbox{\ if \ } Y \cap d_0/\alpha = \emptyset,\\
e, & \mbox{\ otherwise.}
\end{array}
\right.
\]
Moreover note that for the polynomial $x-y = \po d_U(x,y,d_0)$ and $x,y\in U$ we have
\[
x-y\in d_0/\alpha \mbox{\ iff \ } (x,y)\in \alpha,
\]
as $\po d_U(x,y,d_0)$ is a permutation of $U$
whenever the value for one of the variables $x,y$ is fixed.

Now for a graph $G$ define $\po t_G(\o x)$ by putting
\[
t_G(x_1,\ldots,x_n) = a \with_{\po s}\set{\po e_U(x_i) - \po e_U(x_j) : \set{v_i,v_j}\in E}.
\]
From what it was said about $a \with_{\po s} Y$ and the difference $-$ on $\m A|_N$
it should be clear that the equation $\po t_G(\o x)=a$ has a solution in $\m A$
iff the elements $\po e_U(x_i)$ and $\po e_U(x_j)$, corresponding to the edge $\set{v_i,v_j}$,
are evaluated in different $\alpha|_U$-classes,
i.e., if $G$ is $(k+1)$-colorable.

\bigskip

\noindent
\textsc{Case 2.} $\card{U/\alpha}=2$, i.e., $U=\set{d_0,d_1}$.

\medskip

Being in this case we start with the following polynomial $\po w(v,y_1,y_2,y_3)$
of $\m A$ acting on $N \times U^3$ as follows
\begin{eqnarray*}
\po w(v,y_1,y_2,y_3) &=& \po s(\po s(\po s(v,y_1),y_2),y_3) \\
                     &&  -\po s(\po s(v,y_1),y_2)-\po s(\po s(v,y_1),y_3)-\po s(\po s(v,y_2),y_3)\\
                     &&  +\po s(v,y_1)+\po s(v,y_2)+\po s(v,y_3),
\end{eqnarray*}
where the addition + and the substraction $-$ is taken in the vector space $\m A|_N$.
One can easily check that
\[
\begin{array}{lcll}
\po w(e,y_1,y_2,y_3) &=& e &\mbox{for all $y_1,y_2,y_3\in A$},\\
\po w(v,y_1,y_2,y_3) &=& e &\mbox{for $v\in N$ and $\set{y_1,y_2,y_3} \ci U \cap d_0/\alpha$,}\\
\po w(v,y_1,y_2,y_3) &=& v &\mbox{for $v\in N$ and
                                                $\set{y_1,y_2,y_3}\cap(U-d_0/\alpha)\neq\emptyset$.}
\end{array}
\]
Analogously to Case 1 we define a polynomials of the form
$v \bigstar_{\po w} T$ acting on $N \times U^{3m}$,
where now $T =\set{(y^i_1,y^i_2,y^i_3) : i=1,\ldots,m}$ is a set of triples of variables,
by putting
\[
v \bigstar_{\po w} T
= \po w(\ldots(\po w(\po w(v,y^1_1,y^1_2,y^1_3),y^2_1,y^2_2,y^2_3)\ldots,y^m_1,y^m_2,y^m_3)).
\]
Again, if the variable $v$ is evaluated in $N$ and all the $y^i_j$'s in $U$
then the value of $v \bigstar_{\po w} T$ does not depend on the order of triples in $T$
neither on the order inside the triples. In fact we have
\[
a \bigstar_{\po w} T =
\left\{
\begin{array}{ll}
e, & \mbox{if there is $j=1,\ldots,m$ with $\set{y_1,y_2,y_3} \ci U \cap d_0/\alpha$,}\\
a, & \mbox{otherwise,}
\end{array}
\right.
\]
i.e., $a \bigstar_{\po w} T$ acts like a conjunction of disjunction of triples.
Indeed, a 3-SAT instance
\[
\Phi \equiv \mmeet_{i=1}^m \ell^i_1 \join \ell^i_2 \join \ell^i_3,
\]
where $\ell^i_j \in \set{x^i_j, \neg x^i_j}$,
can be translated to a polynomial
\[
\po t_\Phi(\o x) = a \bigstar_{\po w} \set{\set{z^i_1,z^i_2,z^i_3} : i=1,\ldots,m},
\]
where
\[
z^i_j =
\left\{
\begin{array}{ll}
\po e_U(x^i_j),
        & \mbox{\rm if the literal $\ell^i_j$ is the variable, i.e., $\ell^i_j=x^i_j$},\\
\po d_U(d_1,\po e_U(x^i_j), d_0),
        & \mbox{\rm if $\ell^i_j$ is the negated variable, i.e., $\ell^i_j=\neg x^i_j$}.
\end{array}
\right.
\]
First note that for $z^i_j = \po d_U(d_1,\po e_U(x^i_j), d_0)$
we have $z^i_j \in U \cap d_{1-\ell}/\alpha$ whenever $e_U(x^i_j) \in U \cap d_{\ell}/\alpha$,
i.e. $\po e_U(x) \mapsto \po d_U(d_1,\po e_U(x), d_0)$ acts as a negation on the set
$\set{d_0/\alpha, d_1/\alpha}$.
Moreover, from what has been already said about $a \bigstar_{\po w} T$,
it should be clear that the equation $\po t_\Phi(\o x) = a$ has a solution in $\m A$
iff $\Phi$ is satisfiable.
Indeed, it suffices to evaluate the $x$'s in $\Phi$ by the boolean value $\ell$ iff
in the corresponding solution of $\po t_\Phi(\o x) = a$ they are evaluated in a way that
$\po e_U(x) \in d_\ell/\alpha$.
\end{proof}

\begin{cor}
\label{cor-solvable}
If a finite algebra $\m A$ from a congruence modular variety is solvable but not nilpotent
then $\m A$ has a homomorphic image $\m A^\prime$ with $\cpolsatstar {A^\prime}$ being \npc.
\end{cor}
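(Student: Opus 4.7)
The plan is to reduce the statement to Lemma~\ref{lm-solvable} by producing a subdirectly irreducible quotient $\m A'$ of $\m A$ whose monolith $\mu$ satisfies $\comm{1_{\m A'}}{\mu}>0$. Both solvability and membership in a congruence modular variety pass to quotients, so such an $\m A'$ will inherit all the remaining hypotheses of the lemma, and then $\csat{A'}$ being \npc falls out immediately.

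The starting observation is that, since $\m A$ is finite and not nilpotent, the descending central series
\[
1_A \;=\; 1_A^{(1)} \;\geq\; 1_A^{(2)} \;\geq\; 1_A^{(3)} \;\geq\; \cdots
\]
(where $1_A^{(i+1)}=\comm{1_A}{1_A^{(i)}}$) never reaches $0_A$; being a decreasing sequence in the finite lattice $\con A$, it must stabilize at some $\theta>0_A$ with $\comm{1_A}{\theta}=\theta$. Next I would choose $\alpha\in\con A$ maximal with respect to the property $\theta\not\leq\alpha$. Then every congruence strictly above $\alpha$ already contains $\theta$, so $\alpha$ is meet-irreducible and $\m A'=\m A/\alpha$ is subdirectly irreducible. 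Writing $\alpha^+$ for the unique cover of $\alpha$ in $\con A$, the same maximality argument forces $\alpha^+=\theta\vee\alpha$, and then monotonicity of the commutator in congruence modular varieties yields
\[
\comm{1_A}{\alpha^+} \;\geq\; \comm{1_A}{\theta} \;=\; \theta \;\not\leq\; \alpha.
\]
Translating this into the quotient $\m A'$ via Propositions \ref{prp-basic-central1}(5) and \ref{prp-basic-central2}(5) gives $\comm{1_{\m A'}}{\mu}>0_{\m A'}$, where $\mu=\alpha^+/\alpha$ is the monolith of $\m A'$. Lemma~\ref{lm-solvable} applied to $\m A'$ then concludes the proof.

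The argument is essentially a one-shot structural move; all the genuine work has already been carried out in Lemma~\ref{lm-solvable}. The only point requiring any care is the choice of $\alpha$: it must be large enough to make the quotient subdirectly irreducible, yet small enough to keep the non-nilpotency witness $\theta$ from collapsing. Taking $\alpha$ maximal with $\theta\not\leq\alpha$ simultaneously secures both, and the fixed-point identity $\comm{1_A}{\theta}=\theta$ is precisely what guarantees that the resulting monolith is not centralized by $1_{\m A'}$, which is the one hypothesis of Lemma~\ref{lm-solvable} that is not automatic.
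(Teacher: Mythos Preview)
Your proof is correct and follows essentially the same route as the paper: both stabilize the lower central series at some $\theta=1^{(k)}>0$ with $\comm{1}{\theta}=\theta$, choose a congruence maximal with respect to not lying above $\theta$ to obtain a subdirectly irreducible quotient, and then verify that the monolith of this quotient is not centralized by $1$ so that Lemma~\ref{lm-solvable} applies. Your write-up is in fact slightly more explicit than the paper's about the commutator computation in the quotient.
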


\begin{proof}
If $\m A$ is solvable but not nilpotent then there is a natural number $k$ such that
\[
1 > 1^{(2)} > \ldots > 1^{(k)} = 1^{(k+1)} > 0.
\]
Now, picking a maximal congruence $\fj$ which is not above $1^{(k)}$
we know that $\fj$ is meet-irreducible (with the unique cover $\fj^+$)
and that the quotient $\m A^\prime = \m a/\fj$ is solvable but not nilpotent,
as in $\m A^\prime$ we have
\[
1^{(k)} = 1^{(k+1)} = \fj^+/\fj.
\]
Now we are in a position to apply Lemma \ref{lm-solvable}.
\end{proof} 
\ssection{Restricting lattice behavior
\label{sect-lattices}}

In this section we study finite algebras from congruence modular varieties such that all prime quotients of its congruences are of lattice type, i.e., of type \tn 4.
We will show that if such an algebra $\m A$ is not a subdirect product of algebras each of which is polynomially equivalent to the 2-element lattice, then $\m A$
has a homomorphic image $\m A^\prime$ with $\polsatstar{A^\prime}$ being \npc.

\bigskip

In the first Lemma of this section we collect some configurations
that lead to \npc{}ness in type $\tn 4$ algebras.

\begin{lm}
\label{lm-pseudonegation}
Let $\m A$ be a finite algebra from a congruence modular variety such that
$\typset{\m A} \ci \set{\tn 4}$.
If one of the following configuration
\begin{enumerate}
  \item \label{lm-pseudonegation-1}
  $\set{0,1}$ is a minimal set and there are three different elements $a,b,c \in A$ and
  $\po f_a, \po f_c \in \poln 1 A$ such that
  $\po f_a\vpair{1}{0} = \vpair{b}{a}$ and
  $\po f_c\vpair{1}{0} = \vpair{c}{b}$,
  \item \label{lm-pseudonegation-2}
  $\set{0,1}$ is the range of a polynomial $\po p \in \poln 1 A$,
  $\set{a,b}$ is a minimal set
  and there are polynomials $\po f, \po g \in \poln 1 A$ such that
  $\po f\vpair{1}{0} = \vpair{b}{a}$ and
  $\po g\vpair{1}{0} = \vpair{a}{b}$,
  \item \label{lm-pseudonegation-3}
  $\set{0,1}$ is a minimal set and one of the sets
  \begin{eqnarray*}
  F_1 &=&
    \bigcap\set{\po f^{-1}(1) \ : \ \po f\in \poln 1 A \mbox{ \ and \ } \po f(A)=\set{0,1}},\\
  F_0 &=&
    \bigcap\set{\po f^{-1}(0) \ : \ \po f\in \poln 1 A \mbox{ \ and \ } \po f(A)=\set{0,1}},
  \end{eqnarray*}
  is empty
\end{enumerate}
can be found in $\m A$ then $\polsatstar{A}$ is \npc.
\end{lm}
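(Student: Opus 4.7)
The plan is, for each of the three configurations, to construct a polynomial-time reduction from a known \npc{} problem into $\csat A$. In all three cases I will use the type-$\tn 4$ hypothesis to extract a pseudo-meet $\meet$ and a pseudo-join $\join$ whose restrictions to the relevant two-element minimal set act as the usual lattice operations, and I will apply the idempotent unary polynomial $\po e$ with range equal to that minimal set to force every evaluated variable into the minimal set before combining it.

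For configuration~(\ref{lm-pseudonegation-2}) the reduction is from $3$-SAT. The polynomials $\po f$ and $\po g$ embed $\set{0,1}$, routed through the given polynomial $\po p$ with range $\set{0,1}$, into the minimal set $\set{a,b}$ in opposite directions, so they can play the role of positive and negative literal encoders. Given a 3-SAT instance $\mmeet_{i=1}^m \ell^i_1 \join \ell^i_2 \join \ell^i_3$, I would replace each positive occurrence of a variable $z_j$ by $\po f(\po p(x_j))$ and each negative occurrence by $\po g(\po p(x_j))$, combine the resulting clauses using the pseudo-meet and pseudo-join on the minimal set $\set{a,b}$, and equate the whole polynomial to the constant $b$. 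The identities $\po f(0)=a$, $\po f(1)=b$, $\po g(0)=b$, $\po g(1)=a$ make a literal evaluate to $b$ iff it is true under the intended Boolean reading, so the produced equation is satisfiable in $\m A$ iff the 3-SAT instance is satisfiable.

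For configuration~(\ref{lm-pseudonegation-1}) I would reduce from the two-equation lattice SAT of Proposition~\ref{prp-dl01}. Over a common set of variables, build two polynomials $E_1$ and $E_2$ of $\m A$ that mirror the two lattice expressions of that proposition, using pseudo-meet and pseudo-join on the minimal set $\set{0,1}$ and with each variable first passed through $\po e$, and then form the single equation $\po f_a(E_1) = \po f_c(E_2)$. Since $\po f_a$ sends $\set{0,1}$ bijectively onto $\set{a,b}$ and $\po f_c$ sends $\set{0,1}$ bijectively onto $\set{b,c}$, and the elements $a,b,c$ are pairwise distinct, the two sides can coincide only at the common value $b$; this happens exactly when $E_1=1$ and $E_2=0$ simultaneously, which is precisely the instance shown to be \npc{} in Proposition~\ref{prp-dl01}.

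For configuration~(\ref{lm-pseudonegation-3}) the argument will be more delicate, and this is where I expect the main obstacle. By symmetry assume $F_1 = \emptyset$; the case $F_0 = \emptyset$ is dual under swapping $0 \leftrightarrow 1$ and $\meet \leftrightarrow \join$. Since every unary polynomial of $\m A$ with range $\set{0,1}$ restricts to a monotone map on the type-$\tn 4$ minimal set $\set{0,1}$, the assumption $1 \notin F_1$ yields a polynomial $\po p$ that is constantly~$0$ on $\set{0,1}$ yet attains the value~$1$ on some element $y \in A \setminus \set{0,1}$. The plan is to combine $\po p$, $\po e$ and the other polynomials witnessing $F_1 = \emptyset$ at different elements of $A$, using pseudo-meet and pseudo-join, to manufacture either configuration~(\ref{lm-pseudonegation-1}) or configuration~(\ref{lm-pseudonegation-2}): concretely, to exhibit two distinct elements $v_0, v_1$ and two polynomials $\po f,\po g$ with $\po f\vpair{1}{0} = \vpair{v_1}{v_0}$ and $\po g\vpair{1}{0} = \vpair{v_0}{v_1}$, or the three-element chain required by~(\ref{lm-pseudonegation-1}). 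The delicate point is that an order-reversal cannot happen on $\set{0,1}$ itself, so it must be produced by routing inputs through $A \setminus \set{0,1}$ via $\po p$ and its relatives; arguing that the mere emptiness of $F_1$ supplies enough distinct values to complete such a routing is the step I expect to cost the most effort.
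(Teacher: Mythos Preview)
Your treatments of configurations (\ref{lm-pseudonegation-1}) and (\ref{lm-pseudonegation-2}) match the paper's proof exactly: the reduction from Proposition~\ref{prp-dl01} via $\po f_a(E_1)=\po f_c(E_2)$ for~(\ref{lm-pseudonegation-1}), and the direct 3-SAT encoding with $\po f,\po g$ as positive/negative literal maps for~(\ref{lm-pseudonegation-2}).

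For configuration~(\ref{lm-pseudonegation-3}) you are making life much harder than necessary. The paper does \emph{not} try to manufacture an instance of~(\ref{lm-pseudonegation-1}) or~(\ref{lm-pseudonegation-2}); it encodes 3-SAT directly. The key observation you are missing is a closure property of the set
\[
P=\set{\po f\in\poln 1 A : \po f(A)=\set{0,1}}:
\]
if $\po g_1,\po g_2\in P$ and $\po g_1^{-1}(1)\cap\po g_2^{-1}(1)\neq\emptyset$, then $\po g_1\meet\po g_2\in P$ and $(\po g_1\meet\po g_2)^{-1}(1)=\po g_1^{-1}(1)\cap\po g_2^{-1}(1)$. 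Since $F_1=\bigcap_{\po f\in P}\po f^{-1}(1)=\emptyset$ but each $\po f^{-1}(1)$ is nonempty, this closure forces the existence of $\po f,\po g\in P$ with $\po f^{-1}(1)\cap\po g^{-1}(1)=\emptyset$. Now encode 3-SAT exactly as in~(\ref{lm-pseudonegation-2}), but working inside the minimal set $\set{0,1}$ itself: a positive literal $x_j$ becomes $\po f(z_j)$, a negated literal becomes $\po g(z_j)$, and the equation is $\mmeet_i(\delta^i_1(z^i_1)\join\delta^i_2(z^i_2)\join\delta^i_3(z^i_3))=1$. Disjointness of the $1$-preimages guarantees that no value of $z_j$ can make both $\po f(z_j)=1$ and $\po g(z_j)=1$, which is precisely what you need to recover a consistent Boolean assignment from any $\m A$-solution. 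No routing through $A\setminus\set{0,1}$ or construction of an order-reversal is required.
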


\begin{proof}
Suppose we have a configuration described in (\ref{lm-pseudonegation-1}) 
and let $\po e$ be a unary idempotent polynomial with the range $\set{0,1}$.
In this case the required \npc{}ness follows from Proposition \ref{prp-dl01} by transforming
the system of the two lattice equations:
\begin{eqnarray*}
  \mmeet_{i=1}^m x^i_1 \join x^i_2 \join x^i_3 &=& 1,\\
  \jjoin_{i=1}^n y^i_1 \join y^i_2 \join y^i_3 &=& 0
\end{eqnarray*}
into a single equation
\[
\po f_a\left(\mmeet_{i=1}^m \po e(x^i_1) \join \po e(x^i_2) \join \po e(x^i_3)\right)
=
\po f_c\left(\jjoin_{i=1}^n \po e(y^i_1) \join \po e(y^i_2) \join \po e(y^i_3)\right)
\]
of the algebra $\m A$, where meets and joins are performed in the minimal set $\set{0<1}$.

In case (\ref{lm-pseudonegation-2}) we code the 3-SAT instance:
\[
\Phi \equiv \mmeet_{i=1}^m \ell^i_1 \join \ell^i_2 \join \ell^i_3,
\]
by the equation
\begin{eqnarray*}
\mmeet_{i=1}^m \delta^i_1\po p(z^i_1) \join
               \delta^i_2\po p(z^i_2) \join
               \delta^i_3\po p(z^i_3) &=& b,
\end{eqnarray*}
where
\[
\delta^i_j\po p(z^i_j) =
\left\{
\begin{array}{ll}
\po {fp}(z^i_j),
    & \mbox{\rm if the literal $\ell^i_j$ is the variable, i.e., $\ell^i_j=x^i_j$},\\
\po {gp}(z^i_j),
    & \mbox{\rm if $\ell^i_j$ is the negated variable, i.e., $\ell^i_j=\neg x^i_j$}.
\end{array}
\right.
\]
and meets and joins are taken in the minimal set $\set{a<b}$.

Finally, in case (\ref{lm-pseudonegation-3}) suppose that $F_1 = \emptyset$.
This in particular means that the set

\[
P=\set{\po f\in \poln 1 A \mbox{ \ and \ } \po f(A)=\set{0,1}}
\]
contains at least two different polynomials.
Note that if $\po g_1, \po g_2 \in P$ then for $\po g(x) = \po g_1(x) \meet \po g_2(x)$
we have $\po g^{-1}(1) = \po g_1^{-1}(1) \cap \po g_2^{-1}(1)$, and if this intersection is nonempty then also $\po g \in P$ as then $\po g(A)=\set{0,1}$.
Thus $F_1 = \emptyset$ gives that there are $\po f, \po g \in P$ with
$\po f^{-1}(1) \cap \po g^{-1}(1) = \emptyset$.
Now we can transform the 3-SAT instance $\Phi$ into the equation
\begin{eqnarray*}
\mmeet_{i=1}^m \delta^i_1(z^i_1) \join
               \delta^i_2(z^i_2) \join
               \delta^i_3(z^i_3) &=& 1
\end{eqnarray*}
where
\[
\delta^i_j(z^i_j) =
\left\{
\begin{array}{ll}
\po {f}(z^i_j),
    & \mbox{\rm if the literal $\ell^i_j$ is the variable, i.e., $\ell^i_j=x^i_j$},\\
\po {g}(z^i_j),
    & \mbox{\rm if $\ell^i_j$ is the negated variable, i.e., $\ell^i_j=\neg x^i_j$}.
\end{array}
\right.
\]
and meets and joins are taken in the minimal set $\set{0<1}$.

The case $F_0 =\emptyset$ can be treated similarly.
\end{proof}

\bigskip
Endowed with the tools provided by Lemma \ref{lm-pseudonegation}
we start enforcing nice lattice behavior of an algebra $\m A$ with $\typset{\m A} = \set{\tn 4}$
by associating with every join irreducible congruence $\alpha$ of $\m A$
a binary relation $\le{\alpha}$ which will turn to be a partial order on $A$ whenever
$\polsatstar{A}$ is not \npc.
If $\alpha$ is join irreducible then by $\alpha^-$ we denote its unique subcover.
Moreover pick $\set{0,1}$ to be an $(\alpha^-,\alpha)$-minimal set.
Since $\typ(\alpha^-,\alpha)=\tn 4$ we know that $\m A|_\set{0,1}$ is polynomially equivalent with the 2-element lattice, and without loss of generality we assume that $0<1$ in this lattice.
We are going to denote this choice of order on this minimal set by typing $\set{0<1}$.

Now for $a,b \in A$ put:
\[
a \le{\alpha} b \mbox{ \ iff there is a polynomial $\po f \in \poln{1}{A}$ with
$\po f \vpair{1}{0} = \vpair{b}{a}$}.
\]
Note that this relation is independent of the choice of the (ordered) $(\alpha^-,\alpha)$-minimal set $\set{0<1}$ as all $(\alpha^-,\alpha)$-minimal set are polynomially equivalent and this equivalence with $\set{0<1}$ propagates the order in the unique way.

Moreover, for further simplicity, we will use the following notation:
\begin{itemize}
  \item $a <_\alpha b$ iff $a \le{\alpha} b$ and $a \neq b$,
  \item $a \comp_\alpha b$ iff $a \le{\alpha} b$ or $b \le{\alpha} a$,
  \item $a \scomp_\alpha b$ iff $a \comp_\alpha b$ and $a \neq b$,
\end{itemize}

\begin{lm}
\label{lm-alpha-connected}
Let $\m A$ be a finite algebra from a congruence modular variety
with $\typset{\m A} \ci \set{\tn 4}$
and let $\alpha$ be a join irreducible congruence of $\m A$.
Then
\begin{enumerate}
  \item \label{lm-alpha-connected-1}
        $a\scomp_\alpha b$, whenever $\set{a,b} \in \minim{A}{\delta}{\delta'}$
        for some $\delta \prec \delta' \leq \alpha$,
  \item \label{lm-alpha-connected-2}
        for every $a \in A$ the graph $(a/\alpha, \scomp_{\alpha})$ is connected,
  \item \label{lm-alpha-connected-3}
        all unary polynomials of $\m A$ preserve the relation $\le\alpha$,
        and all polynomials of $\m A$ preserve the transitive closure of $\le\alpha$.
\end{enumerate}
\end{lm}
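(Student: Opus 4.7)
The plan is to build everything around the observation that, since $\alpha$ is join irreducible with unique subcover $\alpha^-$ and $\set{0,1}$ is an $(\alpha^-,\alpha)$-minimal set with $(0,1)\in\alpha-\alpha^-$, we have $\cg{A}{0}{1}=\alpha$. Indeed, $\cg{A}{0}{1}\leq\alpha$ is automatic, and if $\cg{A}{0}{1}<\alpha$, then $\cg{A}{0}{1}\leq\alpha^-$ (by join irreducibility), contradicting $(0,1)\not\in\alpha^-$. All three items will then be reduced to Lemma \ref{lm-malcev-chain} applied to this congruence.

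Item (\ref{lm-alpha-connected-3}) is the easiest and I would dispatch it first. If $\po f\in\poln 1 A$ witnesses $a\le\alpha b$, i.e.\ $\po f\vpair{1}{0}=\vpair{b}{a}$, and $\po g\in\poln 1 A$ is arbitrary, then $\po g\circ\po f$ witnesses $\po g(a)\le\alpha\po g(b)$. For $\po p\in\poln n A$ and pairs $a_i\le\alpha b_i$, I would change coordinates one at a time: fixing all but the $i$-th coordinate turns $\po p$ into a unary polynomial, which by the previous remark preserves $\le\alpha$; concatenating these $n$ steps gives a chain from $\po p(\o a)$ to $\po p(\o b)$ in $\le\alpha$, hence membership in its transitive closure.

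For item (\ref{lm-alpha-connected-2}), pick $c,d\in a/\alpha$. Then $(c,d)\in\alpha=\cg{A}{0}{1}$, so Lemma \ref{lm-malcev-chain} gives a Malcev chain $c=x_0,\ldots,x_n=d$ with $\set{x_{i-1},x_i}=\set{\po p_i(0),\po p_i(1)}$. All $x_i$ lie in $a/\alpha$ because each consecutive pair sits in $\cg{A}{0}{1}=\alpha$. Whenever $x_{i-1}\neq x_i$, the polynomial $\po p_i$ directly witnesses $x_{i-1}\scomp_\alpha x_i$; trivial links can just be deleted. Hence $c$ and $d$ lie in the same connected component of $(a/\alpha,\scomp_\alpha)$.

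Item (\ref{lm-alpha-connected-1}) will be the one requiring a little more care, and I expect it to be the main obstacle, because we must upgrade a Malcev chain inside $\set{a,b}$ to a single polynomial witnessing comparability. Let $\set{a,b}\in\minim{A}{\delta}{\delta'}$ with $\delta'\leq\alpha$, and let $\po e_{ab}\in\poln 1 A$ be the idempotent polynomial with range $\set{a,b}$ guaranteed by type $\tn 4$. From $(a,b)\in\delta'\leq\alpha=\cg{A}{0}{1}$ and Lemma \ref{lm-malcev-chain}, I get a Malcev chain $a=x_0,\ldots,x_n=b$ with $\set{x_{i-1},x_i}=\set{\po p_i(0),\po p_i(1)}$. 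Composing with $\po e_{ab}$, the polynomials $\po g_i=\po e_{ab}\circ\po p_i$ satisfy $\po g_i\vpair{0}{1}\in\set{a,b}^2$, and setting $y_i=\po e_{ab}(x_i)$ we obtain a chain $a=y_0,\ldots,y_n=b$ entirely inside $\set{a,b}$ with each consecutive pair of the form $\set{\po g_i(0),\po g_i(1)}$. Since $a\neq b$, at least one link must be non-trivial, and for that index $\po g_i\vpair{1}{0}$ equals either $\vpair{a}{b}$ or $\vpair{b}{a}$, giving $a\scomp_\alpha b$ as required.
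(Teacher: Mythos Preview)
Your proof is correct. Items (\ref{lm-alpha-connected-1}) and (\ref{lm-alpha-connected-3}) match the paper's argument essentially verbatim: for (\ref{lm-alpha-connected-1}) the paper also runs a Malcev chain from $(a,b)\in\alpha=\cg{A}{0}{1}$, retracts it into $\set{a,b}$ via the idempotent, and picks out a nontrivial link; for (\ref{lm-alpha-connected-3}) the paper does exactly your compose-and-change-one-coordinate-at-a-time argument.

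For (\ref{lm-alpha-connected-2}) you take a genuinely shorter route than the paper. You observe that a Malcev chain for $(c,d)\in\alpha=\cg{A}{0}{1}$ already has every nontrivial link of the form $\set{\po p_i(0),\po p_i(1)}$, which \emph{by definition} is an edge of $\scomp_\alpha$; this immediately gives connectedness. The paper instead invokes Lemma~2.17 of \cite{hm} to connect any $\alpha$-related pair via $(\beta^-,\beta)$-traces for join irreducible $\beta\leq\alpha$, and then cites item (\ref{lm-alpha-connected-1}) to say each such trace is a $\scomp_\alpha$-edge. Your argument is cleaner for the statement as written. The trade-off is that the paper's detour proves something strictly stronger---connectedness by the \emph{trace relation} $T$ itself, not just by $\scomp_\alpha$---and this stronger fact is exactly what is cited in the proof of Lemma~\ref{lm-alpha-order}(\ref{lm-alpha-order-4}) to conclude $T=\scomp_\alpha$ from acyclicity. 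So be aware that if you adopt your streamlined proof here, you will need to supply that trace-connectedness argument separately when you reach Lemma~\ref{lm-alpha-order}(\ref{lm-alpha-order-4}).
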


\begin{proof}
To see (\ref{lm-alpha-connected-1}) apply Lemma \ref{lm-malcev-chain}
to $(a,b)\in\delta'\leq\alpha=\cg{A}0 1 $ to get a chain connecting $a$ with $b$,
where each link in this chain is obtained by projecting the set $\set{0,1}$ by a unary polynomial.
Since the ends of this chain lie in the minimal set $\set{a,b}$
we can apply the unary idempotent polynomial with the range $\set{a,b}$
to put the chain into the set $\set{a,b}$.
Obviously at least one link in this chain has to be $\set{a,b}$, which finishes the proof.

To see (\ref{lm-alpha-connected-2}) we recall Lemma 2.17 of \cite{hm}
which gives that for every prime quotient $(\delta,\delta')$
each pair $(a,b)\in \delta'$ can be connected via $(\delta,\delta')$-traces and $\delta$-links.
Now, each link of the form $(c,d)\in\delta$ can be decomposed into a chain of links
modulo join irreducible congruences below $\delta$.
Thus recursively we get that any pair $(a,b)\in\alpha$ can be connected via traces with respect to the prime quotients of the form $(\beta^-,\beta)$, where $\beta$ ranges over join irreducible congruences below $\alpha$. Now, by (\ref{lm-alpha-connected-1}), each such trace is an edge in the graph $(a/\alpha, \scomp_{\alpha})$.

Finally, for (\ref{lm-alpha-connected-3}), assume that $a \le{\alpha} b$,
i.e. $\po f \vpair{1}{0} = \vpair{b}{a}$ for some $\po f \in \poln{1}{A}$.
Then obviously for $\po p \in \poln{1}{A}$ we have
$\vpair{\po p(b)}{\po p(a)}= \po p \po f\vpair{1}{0}$,
so that $\po p(a) \le{\alpha} \po p(b)$.
Now if $\po p \in \poln{s}{A}$ and $a_i \le\alpha b_i$ for all $i=1,\ldots,s$ then
\begin{eqnarray*}
\po p(a_1,a_2,a_3,\ldots,a_s)   &\le\alpha& \po p(b_1,a_2,a_3,\ldots,a_s)   \\
                                &\le\alpha& \po p(b_1,b_2,a_3,\ldots,a_s)   \\
                                &\le\alpha& \ldots                          \\
                                &\le\alpha& \po p(b_1,b_2,b_3,\ldots,b_s),
\end{eqnarray*}
so that $(\po p(a_1,a_2,\ldots,a_s), \po p(b_1,b_2,\ldots,b_s))$
lies in the transitive closure of $\le\alpha$.
\end{proof}

\bigskip

\begin{lm}
\label{lm-alpha-order}
Let $\m A$ be a finite algebra from a congruence modular variety,
$\alpha$ be a join irreducible congruence of $\m A$
and $\typset{\m A} \ci \set{\tn 4}$.
Then either $\polsatstar{A}$ is \npc \ or all the following hold:
\begin{enumerate}
  \item \label{lm-alpha-order-1}
        $\le{\alpha}$ is a partial order on $A$ without $3$-element chains,
  \item \label{lm-alpha-order-2}
        $\le{\alpha}$ is preserved by all polynomials of $\m A$,
  \item \label{lm-alpha-order-3}
        for every $a \in A$ the graph $(a/\alpha, \scomp_{\alpha})$
        is connected and acyclic,
  \item \label{lm-alpha-order-4}
        $a\scomp_\alpha b$ if and only if $\set{a,b} \in \minim{A}{\delta}{\delta'}$
        for some $\delta \prec \delta' \leq \alpha$.
\end{enumerate}
\end{lm}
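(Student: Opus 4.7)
I plan to assume throughout that $\polsatstar A$ is not \npc{} and derive (1)--(4) in turn.

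For (1) I would begin with reflexivity, which holds via the constant polynomial $\po f(x) = a$. The remaining assertions rest on a uniform construction: whenever $\po f\vpair 1 0 = \vpair b a$ with $a \ne b$, composing $\po f$ with the idempotent polynomial $\po e$ of range $\set{0,1}$ produces a polynomial whose image is exactly $\set{a,b}$; setting $\gamma = \cg A a b$ and choosing any subcover $\beta \prec \gamma$ in the finite lattice $\con A$, the $2$-element set $\set{a,b}$ becomes $(\beta,\gamma)$-minimal, necessarily of type $\tn 4$ by the typeset hypothesis. If antisymmetry fails, i.e.\ $a <_\alpha b$ and $b <_\alpha a$, this construction places us exactly in the situation of Lemma~\ref{lm-pseudonegation}(\ref{lm-pseudonegation-2}), forcing $\polsatstar A$ to be \npc{}: contradiction. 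A three-element chain $a <_\alpha b <_\alpha c$ of distinct elements analogously realizes the configuration of Lemma~\ref{lm-pseudonegation}(\ref{lm-pseudonegation-1}), with $\po f_a,\po f_c$ the witnessing polynomials. Transitivity then comes for free: from $a \le\alpha b \le\alpha c$ either two of the three coincide (giving $a \le\alpha c$ immediately) or all are distinct, which is now excluded.

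For (2) and (4) I would reduce to previously established facts. Polynomials of arbitrary arity preserve $\le\alpha$ because unary ones do (Lemma~\ref{lm-alpha-connected}(\ref{lm-alpha-connected-3})): the one-coordinate-at-a-time telescoping $\po p(\bar a) \le\alpha \po p(b_1,a_2,\ldots,a_k) \le\alpha \cdots \le\alpha \po p(\bar b)$ collapses via the transitivity just proved. For (4), the ``if'' direction is Lemma~\ref{lm-alpha-connected}(\ref{lm-alpha-connected-1}); for the ``only if'' direction, the very construction used in (1) exhibits $\set{a,b}$ as a $(\beta,\gamma)$-minimal set for some prime quotient $\beta \prec \gamma \le \alpha$ whenever $a \scomp_\alpha b$ with $a \ne b$.

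The hard part is (3). Connectedness is Lemma~\ref{lm-alpha-connected}(\ref{lm-alpha-connected-2}), but acyclicity demands a genuinely new argument. Thanks to the no-3-chain property the Hasse diagram of $\le\alpha$ has height at most $2$, so each $\alpha$-class partitions into ``sources'' and ``sinks'' and the graph $(a/\alpha,\scomp_\alpha)$ is bipartite; every cycle therefore has even length $2k \ge 4$. My plan is to contradict non-\npc{}ness by extracting from a shortest such cycle two polynomials $\po f,\po g$ with image $\set{0,1}$ whose preimages of $1$ (or of $0$) are disjoint, and then invoke Lemma~\ref{lm-pseudonegation}(\ref{lm-pseudonegation-3}). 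Concretely, a $4$-cycle on $s_1,s_2,t_1,t_2$ with $s_i <_\alpha t_j$ for all $i,j$ supplies four unary polynomials $\po f_{ij}$ sending $0 \mapsto s_j$, $1 \mapsto t_i$, and each 2-element set $\set{s_j,t_i}$ is a type~$\tn 4$ minimal set furnishing pseudo-meet and pseudo-join. The construction must blend these lattice operations with the trace-projection polynomials arising from polynomial isomorphism of the various $(\alpha^-,\alpha)$-minimal sets so that the resulting maps land in $\set{0,1}$ and have incompatible preimages of $1$. The main obstacle is exactly this construction: the pseudo-meets and pseudo-joins live on different minimal sets and cannot be combined directly, so the key maneuver is to first transport everything to $\set{0,1}$ via trace projections before combining. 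Cycles of length $2k>4$ should reduce to the $4$-cycle case by minimality of the chosen cycle, since any chord would produce a shorter cycle.
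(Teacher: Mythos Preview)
Your approach to (1) and (4) has a genuine gap. You claim that whenever $a <_\alpha b$ the set $\{a,b\}$ is a $(\beta,\gamma)$-minimal set for $\gamma = \cg{A}{a}{b}$ and some subcover $\beta \prec \gamma$. But for this you need $\{a,b\} \in U_{\m A}(\beta,\gamma)$, i.e.\ $\{a,b\} = \po g(A)$ for some $\po g$ with $\po g(\gamma) \not\subseteq \beta$. Your candidate $\po g = \po f \circ \po e$ has range $\{a,b\}$, but nothing guarantees that $\po g$ separates a $\gamma$-pair modulo $\beta$: you would need $(0,1) \in \gamma$, or an idempotent with range $\{a,b\}$, and neither is automatic (if $\po e(a) = \po e(b)$ then $\po g|_{\{a,b\}}$ is constant and iterating collapses the range). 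The paper's proof of antisymmetry explicitly treats ``$\{a,b\}$ is not a minimal set'' as a separate case: there one connects $a$ and $b$ by a chain of genuine minimal sets (as in Lemma~\ref{lm-alpha-connected}(\ref{lm-alpha-connected-2})) and uses the first step $\{a,d\}$, together with the two witnesses for $a <_\alpha b <_\alpha a$, to land in Lemma~\ref{lm-pseudonegation}(\ref{lm-pseudonegation-1}) with three distinct elements. Your argument for (4) inherits the same defect; the paper instead derives (4) \emph{from} (3): once each $\alpha$-class is a tree under $\scomp_\alpha$, the connected spanning subrelation given by minimal-set edges (Lemma~\ref{lm-alpha-connected}) must be the whole tree.

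Your plan for (3) is the more serious issue, and it diverges entirely from the paper. The paper does not use Lemma~\ref{lm-pseudonegation}(\ref{lm-pseudonegation-3}) here at all. Instead it brings in the directed Gumm terms $\d_1,\ldots,\d_n,\q$ of Theorem~\ref{thm-gumm} and shows, by an intricate walk-on-the-cycle argument, that $\d_i(x,y,z) = x$ for all $x,y,z$ in any putative cycle $C$. This forces $\q$ to act as a Mal'cev operation on $C$, whence $a <_\alpha b$ yields $b = \q(b,a,a) \le_\alpha \q(b,b,a) = a$, contradicting (1). Your proposed route via $F_1 = \emptyset$ is not a minor variant: the ``key maneuver'' you describe (transporting pseudo-meets from several minimal sets back to $\{0,1\}$ and combining them so that the preimages of $1$ become disjoint) is precisely where the difficulty lies, and you have not carried it out---nor is it clear that it \emph{can} be carried out, since the pseudo-operations are only controlled on their own two-element sets. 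Moreover, your reduction from longer cycles to $4$-cycles does not work as stated: minimality of the cycle tells you there is no chord, not that the cycle has length $4$, so chord-free cycles of length $6, 8, \ldots$ are not handled.
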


\begin{proof}
To see (\ref{lm-alpha-order-1}) suppose that $a<_\alpha b <_\alpha c$,
and that this is witnessed
by unary polynomials $\po f_a$ and $\po f_c$, i.e.
$\po f_a\vpair{1}{0} = \vpair{b}{a}$ and $\po f_c\vpair{1}{0} = \vpair{c}{b}$.
Now, if $c \neq a$ refer to Lemma \ref{lm-pseudonegation}.(\ref{lm-pseudonegation-1}).
If $c=a$ and $\set{a,b}$ is a minimal set in $\m A$, Lemma \ref{lm-pseudonegation}.(\ref{lm-pseudonegation-2}) does the job.
Now suppose that $c=a$ but $\set{a,b}$ is not a minimal set in $\m A$.
Since $(a,b) \in \alpha$ then arguing as in the proof of
Lemma \ref{lm-alpha-connected}.(\ref{lm-alpha-connected-2}),
$a$ and $b$ can be connected via $(\beta^-,\beta)$-minimal sets where $\beta$
ranges over join irreducible congruences below $\alpha$.
In particular there is $d\in A$ so that $\set{a,d}$ is minimal
and therefore, by Lemma \ref{lm-alpha-connected}.(\ref{lm-alpha-connected-1}),
$\set{a,d} = \set{\po f(0), \po f(1)}$ for some $\po f \in \poln 1 A$.
Now either $\po f, \po f_a$ or $\po f, \po f_c$ put us into the setting of
Lemma \ref{lm-pseudonegation}.(\ref{lm-pseudonegation-1}).
Summing up, this shows that $\le\alpha$ is a partial order without 3 element chains.

For (\ref{lm-alpha-order-2}) use transitivity of $\leq_\alpha$ and refer to Lemma \ref{lm-alpha-connected}.(\ref{lm-alpha-connected-3}).

The last part of the Lemma would follow from (\ref{lm-alpha-order-3}) and the fact that
the binary relation $T$ defined by
\[
aTb \ \mbox{\rm iff $\set{a,b}$ is a trace with respect to some prime quotient $\delta\prec\delta'\leq\alpha$}
\]
is connected on $\alpha$-classes and is contained in $\scomp_\alpha$
(see Lemma \ref{lm-alpha-connected}).
Since $\scomp_\alpha$ is acyclic $\scomp_\alpha$ has to coincide with $T$.

\medskip
The hardest part of the Lemma is to show (\ref{lm-alpha-order-3}).
It can be inferred from Theorem 3.6 in \cite{larose:taylor}
but we decided to include our proof which seems to be more direct.

Thus suppose to the contrary that $C$ is a cycle in the bipartite graph $(A, \scomp_{\alpha})$.
Thus $\card C$ is even and $\card C \geq 4$.

Moreover let $\d_1(x,y,z),\ldots, \d_n(x,y,z), \q(x,y,z)$ be the directed Gumm terms with the properties described in Theorem \ref{thm-gumm}.
Our goal is to show that
\begin{senumerate}
\item\label{kroliczek}
$x =\d_1(x,y,z)= \ldots = \d_n(x,y,z)$ for all $x,y,z \in C$.
\end{senumerate}
Given (\ref{sect-lattices}.\ref{kroliczek}) we know that $\q(x,y,y)=x$ whenever $x,y \in C$.
This together with the last equality in Theorem \ref{thm-gumm} gives that
$\q$ is a polynomial that behaves like a Malcev operation on cycle $C$.
Now, picking $a,b \in C$ with $a <_\alpha b $
we get $b = \q(b,a,a) \leq_\alpha \q(b,b,a) =a$,
a contradiction.

\bigskip

In order to prove (\ref{sect-lattices}.\ref{kroliczek}) we will simplify notation by omitting the subscript $\alpha$ in $\le\alpha, <_\alpha, \comp_\alpha, \scomp_\alpha$.
Instead we will introduce the notation $\leq^\ell$ for $\ell \in \set{1, -1}$ where
$\leq^1$ stays for $\leq$ and $\leq^{-1}$ for $\geq$
and we will use the notation $a \scomp_C  b$ to denote that $a \scomp  b$ and $a,b \in C$.
Moreover for $a,b$ in the same $\alpha$-class we define:
\begin{itemize}
  \item $\dist a b$ to be the distance of $a$ and $b$ in the graph $(a/\alpha, \scomp)$.
\end{itemize}
If $a,b,c \in C$ we put
\begin{itemize}
  \item $\distt a b C$ to be the distance of $a$ and $b$ in the graph $(C, \scomp_C)$,
  \item $\distt a b c$ to be the length of the shortest path between $a$ and $b$ fully contained in $C$ and containing the vertex $c$,
\end{itemize}

Suppose to the contrary with (\ref{sect-lattices}.\ref{kroliczek})
that there are $a,b,c \in C$ with $\d_1(a,b,c) \neq a$.
This configuration will allow us to construct the sequence of triples $(a_i, b_i, c_i)_{i=0}^k$
of vertices in $C$,
so that after putting $d_i = \d_1(a_i,b_i,c_i)$, the following invariants will be kept:
\begin{senumerate}
  \item\label{abc-even}
        $\dist {b_i}{c_i}$ is even,
  \item\label{abc-move2}
        $\distt{b_{i+2}}{c_{i+2}}{a_{i+2}} < \distt{b_i}{c_i}{a_i}$,
  \item\label{abc-di}
        $d_i \neq a_i$,
  \item\label{abc-end}
        $a_k \in \set{b_k, c_k}$.
\end{senumerate}
The last item gives $d_k=\d_1(a_k,b_k,c_k)=a_k$ 
contrary to (\ref{sect-lattices}.\ref{abc-di}).

\medskip
First we will define the triple $(a_0,b_0,c_0)$.
If $\dist {b}{c}$ is even then we simply put $(a_0,b_0,c_0)=(a,b,c)$
so that (\ref{sect-lattices}.\ref{abc-even}) and (\ref{sect-lattices}.\ref{abc-di}) hold.
If $\dist {b}{c}$ is odd, then exactly one of the distances $\dist a b$, $\dist a c$ is even.
Suppose this is $\dist a b$.
We then move $b$ to its neighbor $b_0 \scomp_C b$ while $(a_0,c_0)$ is set to $(a,b)$.
Obviously $\dist {b_0}{c_0}$ is now even.
Moreover $b_0 <^\ell b$ for some $\ell\in\set{1,-1}$.
In particular both $b$ and $a$ are $<^\ell$-maximal.
Applying (\ref{lm-alpha-order-2}) to $b_0 <^\ell b$ we get
$\d_1(a, b_0, c) \leq^\ell \d_1(a,b,c) \neq a$.
Thus either $d_0 = \d_1(a, b_0, c) = \d_1(a,b,c) \neq a =a_0$
or $d_0 = \d_1(a, b_0, c) <^\ell \d_1(a,b,c)$ is not $<^\ell$-maximal, while $a_0=a$ is.
This shows (\ref{sect-lattices}.\ref{abc-di}).

Now, as long as $a_i \in \set{b_i, c_i}$ fails,
we create the next triple $(a_{i+1}, b_{i+1}, c_{i+1})$
by either moving $a_i$ along an edge and keeping $b_i, c_i$ untouched,
or by moving simultaneously $b_i$ and $c_i$ towards $a_i$ which stays unchanged.
More formally:
\begin{description}
  \item[\textsc{Case 1}] if $\dist{a_i}{d_i} =1$ then put
  \begin{itemize}
    \item $a_{i+1} \scomp_C a_i$ such that $\dist{a_{i+1}}{d_i} = 2$,
    \item $b_{i+1} = b_i$ and $c_{i+1} = c_i$.
  \end{itemize}
  \item[\textsc{Case 2}] if $\dist{a_i}{d_i} \geq 2$ then put
  \begin{itemize}
    \item $a_{i+1} = a_i$,
    \item $b_{i+1} \scomp_C b_i$ and $c_{i+1} \scomp_C c_i$ such that \\
            $\distt{b_{i+1}}{c_{i+1}}{a_i} \leq \distt{b_i}{c_i}{a_i}-2$.
  \end{itemize}
\end{description}
Note that the very last inequality is strong only if the initial situation is
$a_i\scomp_C b_i=c_i$ and results either in $a_i=b_{i+1} \scomp_C b_i \scomp_C c_{i+1}$
or in $a_i=c_{i+1} \scomp_C c_i \scomp_C b_{i+1}$.
Indeed, in this case $\distt{b_i}{c_i}{a_i} = \card C \geq 4$ while $\distt{b_{i+1}}{c_{i+1}}{a_i}\in\set{0,2}$.

Being in \textsc{Case 1} we know that $d_i <^\ell a_i$ for some $\ell\in\set{1,-1}$.
Then $a_{i+1}\in C$ is chosen so that $a_i >^\ell a_{i+1} \neq d_i$.
Thus we have $d_{i+1} \leq^\ell d_i$ and this inequality cannot be strong as then we would have a 3-element path $d_{i+1} <^\ell d_i <^\ell a_i$, contrary to (\ref{lm-alpha-order-1}).

Moreover, after \textsc{Case 1} is performed we fall into \textsc{Case 2}.
Thus in each two consecutive rounds the distance between $b_i$ and $c_i$ through $a_i$ decreases by at least 2, so that the invariant  (\ref{sect-lattices}.\ref{abc-move2}) is kept.
Since in \textsc{Case 2} we move both $b_i$ and $c_i$ to their neighbors,  (\ref{sect-lattices}.\ref{abc-even}) holds as well.
To see (\ref{sect-lattices}.\ref{abc-di}) note that
$b_{i+1} <^\ell b_i$ and $c_{i+1} <^\ell c_i$ for the very same $\ell \in \set{1,-1}$.
Thus $d_{i+1} \leq^\ell d_i$ while $a_{i+1} = a_i$,
which together with $\dist{a_i}{d_i} \geq 2$ gives $d_{i+1} \neq a_{i+1}$.

Finally note that (\ref{sect-lattices}.\ref{abc-move2}) gives that there is
$k \leq \distt {b_0}{c_0}{a_0}$ for which (\ref{sect-lattices}.\ref{abc-end}) holds.

This finishes the proof that $\d_1(a,b,c)= a$ whenever $a,b,c \in C$.
Now we can repeat recursively this argument for $\d_2,\ldots,\d_n$,
so that (\ref{sect-lattices}.\ref{kroliczek}), and therefore (\ref{lm-alpha-order-3}) is shown.
\end{proof}

Now we are ready to establish quite strong property enforced by tractability of $\csat{}$.

\begin{thm}
\label{si2el}
If $\m A$ is a finite subdirectly irreducible algebra from a congruence modular variety
and $\typset{\m A} =\set{\tn 4}$
then either $\card A = 2$ or $\polsatstar{A}$ is \npc.
\end{thm}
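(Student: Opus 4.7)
We establish the contrapositive: assuming $\polsatstar A$ is not \npc, we show $\card A = 2$. Let $\mu$ denote the monolith of $\m A$; being an atom in $\Cn A$ it is join irreducible, so Lemma \ref{lm-alpha-order} applied with $\alpha = \mu$ yields that $\le\mu$ is a polynomial-preserved partial order on $A$ without 3-element chains, and each $\mu$-class forms a tree under $\scomp_\mu$ whose edges are exactly the $(0,\mu)$-minimal sets it contains. Fix a $(0,\mu)$-minimal set $U = \set{0,1}$ with $0 <_\mu 1$, the idempotent $\po e_U \in \poln 1 A$ with range $U$, and pseudo-lattice operations $\meet, \join \in \poln 2 A$ acting as meet and join on $U$.

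Exploiting subdirect irreducibility, for each pair of distinct $a,b \in A$ we have $\cg A a b \geq \mu$, hence $(0,1) \in \cg A a b$; Lemma \ref{lm-malcev-chain} furnishes a Malcev chain, and composing $\po e_U$ with the unary polynomial at the step where the chain switches value in $\set{0,1}$ produces a unary polynomial $\po f_{a,b}$ of range $\set{0,1}$ separating $a$ from $b$. By Lemma \ref{lm-pseudonegation}.(\ref{lm-pseudonegation-3}) both $F_0$ and $F_1$ are nonempty. For $v \in F_1$, combining the above separators via pseudo-meet yields the characteristic polynomial $\po P_v(x) = \mmeet_{b \neq v} \po f_{v,b}(x)$ of $\set{v}$, of range $\set{0,1}$, where each $\po f_{v,b}$ is chosen so that $\po f_{v,b}(v) = 1$ and $\po f_{v,b}(b) = 0$. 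If $F_1$ contained two distinct elements $v_1, v_2$, then $\po P_{v_1}(v_2) = 0$ would contradict $v_2 \in F_1$. Hence $F_1 = \set{v}$ is a singleton, and symmetrically $F_0 = \set{u}$ with $u \neq v$ since $\po e_U(u) = 0 \neq 1 = \po e_U(v)$.

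Assume $\card A \geq 3$ for contradiction. The $\le\mu$-preservation of $\po P_v$ combined with $\po P_v(b) = 0 < 1 = \po P_v(v)$ for every $b \neq v$ forces $v$ to be $\le\mu$-maximal, and symmetrically $u$ is $\le\mu$-minimal. I would next argue that in fact $v$ is the unique $\le\mu$-maximum and $u$ the unique $\le\mu$-minimum of $A$, by a case analysis on the bipartite tree $(v/\mu, \scomp_\mu)$: any branching vertex admitting more than one $\le\mu$-maximal neighbour would, through the polynomial isomorphisms between different $(0,\mu)$-minimal sets combined with $\po P_v$, produce a unary polynomial of range $\set{0,1}$ violating either the uniqueness of $F_1$ or the antisymmetry of $\le\mu$ furnished by Lemma \ref{lm-alpha-order}. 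Once uniqueness is established, $1 \le\mu v$ forces $v = 1$ (otherwise $0 <_\mu 1 <_\mu v$ would be a forbidden 3-chain of distinct elements), symmetrically $u = 0$, and any $a \in A \setminus \set{0,1}$ produces the chain $0 = u \le\mu a \le\mu v = 1$ of three distinct elements, contradicting Lemma \ref{lm-alpha-order}.(\ref{lm-alpha-order-1}). Hence $\card A = 2$.

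The main obstacle is the uniqueness of the $\le\mu$-maximum and minimum: because the no 3-chain condition makes each $\mu$-class a bipartite tree that may a priori carry several maximal vertices, ruling out such branching calls for a careful exploitation of the polynomial isomorphisms between $(0,\mu)$-minimal sets, the characteristic polynomials $\po P_v$ and $\po P_u$, and the singleton property of $F_0, F_1$.
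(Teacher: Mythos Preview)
Your approach contains a structural gap stemming from the decision to work solely with the monolith order $\le\mu$. By definition $a <_\mu b$ requires a unary polynomial $\po f$ with $\po f(0)=a$ and $\po f(1)=b$, whence $(a,b)\in\mu$; thus the strict relation $<_\mu$ is contained in $\mu$ and elements of distinct $\mu$-classes are pairwise $\le\mu$-incomparable. Whenever $\m A$ is subdirectly irreducible but not simple (i.e.\ $\mu < 1_A$), the poset $(A,\le\mu)$ therefore has at least one maximal element in every $\mu$-class, so your intended ``unique $\le\mu$-maximum of $A$'' cannot exist. In particular the inequality $1 \le\mu v$ in your final paragraph presupposes $(1,v)\in\mu$, which you never establish (knowing $\po e_U(v)=1$ does not yield this), and the concluding chain $u \le\mu a \le\mu v$ for an arbitrary $a\in A$ is impossible once $a$ lies outside $u/\mu$. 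Your hand-waved uniqueness step (``a case analysis on the bipartite tree $(v/\mu,\scomp_\mu)$'') speaks, at best, about a single $\mu$-class.

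The paper remedies this by passing to a \emph{global} order: it forms the transitive closure $\leq$ of $\bigcup_{\alpha}\le\alpha$ over \emph{all} join irreducible congruences $\alpha$, verifies that this is a connected polynomial-preserved partial order on all of $A$, and then proves that the chosen element of $F_1$ is its unique maximum via a nontrivial induction using the directed Gumm terms of Theorem~\ref{thm-gumm}; only after that does the no-3-chain property of each $\le\alpha$ force $\card A=2$. The ingredients of your outline that are sound --- the separating polynomials $\po f_{a,b}$, the characteristic polynomial $\po P_v$, and the conclusion $\card{F_0}=\card{F_1}=1$ --- are correct and pleasant, but they do not by themselves control elements outside the $\mu$-class of $\set{0,1}$.
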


\newcommand{\Ji}{\Upsilon}

\begin{proof}
To be able to use the properties of the partial orders $\leq_\alpha$
established in Lemma \ref{lm-alpha-order} we assume that $\polsatstar{A}$ is not \npc.

Let $\mu$ be the monolith of $\m A$
and $\Ji$ be the set of all join irreducible congruences of $\m A$.
Pick a $(0,\mu)$-minimal set $N=\set{0,1}$.
Then for every $\alpha \in \Ji$ and each $(\alpha^-,\alpha)$-minimal set $U$ there is a unary polynomial $\po f_\alpha$ such that $\po f_\alpha(U) =\set{0,1}$.
Note that any unary polynomial $\po g$ for which $\po g(U)=\set{0,1}$ we have
$\po g|_U = \po f_\alpha|_U$, as otherwise we would have $0 <_\alpha 1 <_\alpha 0$, contrary to
Lemma \ref{lm-alpha-order}.
This allows us to name the elements of an $(\alpha^-,\alpha)$-minimal set by $0_\alpha, 1_\alpha$
by requiring $\po f_\alpha \vpair {1_\alpha}{0_\alpha} = \vpair 1 0$.
Note that this naming is independent of the choice of polynomials $\po f_\alpha$.
Now the ordering $0_\alpha < 1_\alpha$ determines the (bipartite) order $\leq_\alpha$ with the properties described in Lemma \ref{lm-alpha-order}.
Moreover the orderings of the form $\leq_\alpha$ are coherent in the following sense:
\begin{senumerate}
\item\label{coherence}
    for $\alpha,\beta \in \Ji$ and $(a,b)\in \alpha\cap\beta$ we have
    $a\le{\alpha} b$ iff $a\le{\beta} b$.
\end{senumerate}
Indeed suppose that $a <_\alpha b$ and $a >_\beta b$.
This gives that
$\po g_\alpha \vpair{1_\alpha}{0_\alpha} = \vpair{b}{a} = \po g_\beta \vpair{0_\beta}{1_\beta}$
for some unary polynomials $\po g_\alpha, \po g_\beta$.
On the other hand the pair $\vpair{b}{a}$ can be polynomially mapped, by say $\po f$,
onto $\set{0,1}$.
Now either $\po f \po g_\alpha\vpair{1_\alpha}{0_\alpha} = \vpair{0}{1}$
or         $\po f \po g_\beta \vpair{1_\beta}{0_\beta}= \vpair{0}{1}$, contrary to our previous choices of orders in minimal sets.

\medskip

Our first goal is to show that
\begin{senumerate}
\item\label{acyclic}
    the transitive closure $\leq$ of the sum $\bigcup_{\alpha \in \Ji} \le{\alpha}$
    is a connected partial order $A$ and it is preserved by all polynomials of $\m A$.
\end{senumerate}
The only obstacle for $\leq$ to be a partial order is the existence of a cycle of the form
\[
a_0 <_{\alpha_1} a_1 <_{\alpha_2} a_2 \ \ldots \ a_{k-1} <_{\alpha_k} a_k <_{\alpha_0} a_0.
\]
We know that $U=\set{a_0,a_k}$ is a $(\beta^-,\beta)$-minimal set for some join irreducible congruence $\beta \leq \alpha_0$.
Applying unary idempotent polynomial $\po e_U$, with the range $U$, to such a cycle we get
\[
\po e_U(a_0) \leq_{\alpha_1} \po e_U(a_1) \leq_{\alpha_2} \po e_U(a_2) \ \ldots \ \po e_U(a_{k-1}) \leq_{\alpha_k} \po e_U(a_k) \leq_{\alpha_0} \po e_U(a_0).
\]
We induct on $j=0,1,\ldots,k$ to show that $\po e_U(a_j)=a_0$.
First note that (\ref{sect-lattices}.\ref{coherence}) applied to $a_k <_{\alpha_0} a_0$
gives $a_0 \nleqslant_{\alpha_{j+1}} a_k$.
However, by the induction hypothesis
$a_0 = \po e_U(a_j) \leq_{\alpha_{j+1}} \po e_U(a_{j+1}) \in \set{a_0,a_k}$,
so that we must have $\po e_U(a_{j+1}) = a_0$, as required.
But now we have $a_0 =\po e_U(a_k)=a_k$, an obvious contradiction.

To see that the partial order $\leq$ is connected and preserved by the polynomials
simply recall the arguments used in the proof of Lemma \ref{lm-alpha-connected}.(\ref{lm-alpha-connected-2}) and Lemma \ref{lm-alpha-order}.(\ref{lm-alpha-order-2}).

\bigskip
We are working under the assumption that $\polsatstar{A}$ is not \npc.
Thus, defining the sets $F_v$ with $v\in N=\set{0,1}$ by putting
\[
F_v = \bigcap\set{\po f^{-1}(v) \ : \ \po f\in \poln 1 A \mbox{ \ and \ } \po f(A)=N},
\]
Lemma \ref{lm-pseudonegation}.(\ref{lm-pseudonegation-3})
allows us to assume that both $F_0$ and $F_1$ are nonempty.
Pick an `upper' element $u\in F_1$ and a `lower' element $d\in F_0$.
The names for them are justified by the following observation.
\begin{senumerate}
\item\label{extremal}
    For every $\po f \in \poln 1 A$ the element $\po f(u)$ is maximal in $\po f(A)$,
    while $\po f(d)$ is minimal in $\po f(A)$.
    In particular, $u$ is maximal, while $d$ is minimal in the poset $(A,\leq)$.
\end{senumerate}
Indeed, otherwise there is $w \in A$ such that
$\po f(u) <_\alpha \po f(w)$ for some $\alpha \in \Ji$.
But then $\set{\po f(u),\po f(w)}$ is a minimal set and using the consistency of our partial order
one can polynomially project, say by $\po g$,
the pair $\vpair{\po f(w)}{\po f(u)}$ onto the pair $\vpair{1}{0}$.
Composing this with a unary idempotent polynomial $\po e$ with the range $N$ we have
$\po {egf} \vpair{w}{u} = \vpair{1}{0}$, contrary to our choice of $u$.

By the very same token one shows the properties of $d$.

\medskip
We will show that in fact there are no other minimal or maximal elements in $(A,\leq)$.

\begin{senumerate}
\item\label{extremal2}
    $u$ is the largest element in the poset $(A,\leq)$, while $d$ is the smallest one.
\end{senumerate}
By symmetry of our assumptions we can restrict ourselves to show that $u$ is the largest element.
Suppose to the contrary that there is another maximal element in $(A,\leq)$.
Since the poset is connected we may assume that there are elements $b,c\in A$ with
$u > b < c$ and $c$ being maximal.

We will be using directed Gumm terms $\d_1,\ldots, \d_n, \q$,
provided by Theorem \ref{thm-gumm}, to define unary polynomials
\[
\po f_i(x) = \d_i(x,b,c), \ \  \mbox{\rm for all \ } i=1,\dots,n
\]
and show that they satisfy:
\begin{enumerate}
  \item[(i)]   $\po f_i(c) = c$,
  \item[(ii)]  $\po f_i(b) \neq c$,
  \item[(iii)] $\po f_i(u) \neq c$.
\end{enumerate}
The item (i) follows directly from the properties of the $\d_i$'s.
Moreover, for $i=1$ the item (ii) is secured by $\po f_1(b) = \d_1(b,b,c) = b <c$,
while the failure of (iii) would lead to a contradiction
\[
c = \po f_1(u) = \d_1(u,b,c) \leq \d_1(u,u,c) = u.
\]
The failure of any of the items (ii) or (iii)
at the level $(i+1)$ would give one of the following
\[
\begin{array}{ccccccc}
c &=& \po f_{i+1}(b) &=& \d_{i+1}(b,b,c) &\leq& \d_{i+1}(u,u,c),\\
c &=& \po f_{i+1}(u) &=& \d_{i+1}(u,b,c) &\leq& \d_{i+1}(u,u,c).
\end{array}
\]
In each case the maximality of $c$ yields
\[
c = \d_{i+1}(u,u,c) = \d_i(u,c,c) \geq \d_i(u,b,c) = \po f_i(u),
\]
and now the maximality of $\po f_i(u)$ gives $\po f_i(u)=c$,
a contradiction with the induction hypothesis.

\medskip
Now,
\[
c = \q(b,b,c) \leq \q(u,c,c) = \d_n(u,c,c),
\]
together with maximality of $c$ gives
\[
c = \d_n(u,c,c) \geq \d_n(u,b,c) = \po f_n(u),
\]
so that maximality of $\po f_n(u)$ gives $\po f_n(u)=c$,
contrary to (iii).
This contradiction shows (\ref{sect-lattices}.\ref{extremal2}).

\bigskip
To conclude our proof we will strengthen (\ref{sect-lattices}.\ref{extremal2}) to:
\begin{senumerate}
\item\label{extremal3}
    $A=\set{u,d}$.
\end{senumerate}
First suppose that $\set{0',1'}$ is a $(0,\mu)$-minimal set.
Thus for every $\alpha\in\Ji$ we have $d \leq 0' <_{\alpha} 1' \leq u$.
Now if $d < 0'$ or $1' < u$ then we can pick an element $a$ such that either
$a <_\alpha 0'$ or $1' <_\alpha a$ for some $\alpha\in\Ji$.
In any case we will have a 3-element directed path in the poset $(A, <_\alpha)$
which is not possible in view of Lemma \ref{lm-alpha-order}.(\ref{lm-alpha-order-1}).
Thus $\set{d,u}$ is the only $(0,\mu)$-minimal set of $\m A$.

Now, suppose there are elements $a,b \in A$ such that
$d \leq a <_\alpha b \leq u$.
Then $\set{a,b}$ is a minimal set
which has to be the range of some unary idempotent polynomial $\po e$ of $\m A$.
But then
$\po e(d) \leq \po e(a) < \po e(b) \leq \po e(u)$
and consequently monotonicity of $\po e$ gives
$\po e \vpair {d}{u} = \vpair {a}{b}$.
In particular $(a,b)\in\mu$ and in fact the set $\set{a,b}$ is $(0,\mu)$-minimal,
so that $\set{a,b} = \set{d,u}$.
This shows that $A$ can not have any other elements than $d$ or $u$.
\end{proof}

Directly from Theorem \ref{si2el} we get

\begin{cor}
\label{2el}
Let $\m A$ be a finite algebra from a congruence modular variety
and $\typset{\m A} =\set{\tn 4}$.
Then either $\m A$ is a subdirect product of $2$-element algebras each of which is polynomially equivalent to the $2$-element lattice,
or $\m A$ has a subdirectly irreducible homomorphic image $\m A^\prime$ such that
$\polsatstar{A^\prime}$ is \npc.
\myqed
\end{cor}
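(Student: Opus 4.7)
The plan is to derive Corollary \ref{2el} from Theorem \ref{si2el} by means of Birkhoff's subdirect representation theorem together with the preservation of typesets under homomorphism. First I would invoke Birkhoff's theorem to embed $\m A$ as a subdirect product of its subdirectly irreducible quotients $(\m A_i)_{i\in I}$, where each $\m A_i = \m A/\gamma_i$ with $\gamma_i$ a strictly meet-irreducible congruence of $\m A$.

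Next I would verify that each factor $\m A_i$ still has $\typset{\m A_i} \ci \set{\tn 4}$. This is immediate from the preservation fact recalled in Section \ref{sect-background}: if $\alpha \prec \beta$ in $\Cn{\m A/\gamma_i}$, then lifting to $\Cn{\m A}$ gives $\alpha' \prec \beta'$ with $\gamma_i \leq \alpha'$ and $\typ(\alpha,\beta) = \typ(\alpha',\beta')$, so the type must already appear in $\typset{\m A}$. Since each $\m A_i$ is subdirectly irreducible and nontrivial, it has at least one prime quotient (its monolith), hence $\typset{\m A_i} = \set{\tn 4}$.

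At this point I would apply Theorem \ref{si2el} to each factor and split into cases. If for some $i$ we have $\csat{A_i}$ \npc, then $\m A' = \m A_i$ is the required homomorphic image and we are done. Otherwise $\card{A_i} = 2$ for every $i$. Because $\typset{\m A_i}=\set{\tn 4}$, the unique nontrivial prime quotient of $\m A_i$ has type \tn 4, and the corresponding $(0_{A_i},1_{A_i})$-minimal set must be all of $A_i$ (it has two elements and no proper nonempty subsets of size $\geq 2$). By the standard description of type \tn 4 minimal sets summarized in Section \ref{sect-background}, the induced algebra on the unique trace is polynomially equivalent to the two-element distributive lattice; since the minimal set coincides with $A_i$, this equivalence extends to all of $\m A_i$. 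Consequently $\m A$ is a subdirect product of two-element algebras each polynomially equivalent to the two-element lattice, as claimed.

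There is no real obstacle here: the corollary is essentially bookkeeping, combining Theorem \ref{si2el} with Birkhoff's theorem and the preservation of types under quotients. The only point requiring a moment of attention is to confirm that a nontrivial two-element algebra whose sole prime quotient is of type \tn 4 is genuinely polynomially equivalent to the two-element lattice, which is exactly what the tame congruence theory description of type \tn 4 traces, recalled in Section \ref{sect-background}, provides.
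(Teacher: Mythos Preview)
Your proposal is correct and is exactly the natural unpacking of what the paper leaves implicit: the paper simply writes ``Directly from Theorem~\ref{si2el} we get'' and appends a \myqed, so your Birkhoff-plus-type-preservation argument is precisely the intended route. The one step you flag---that a two-element algebra with a type~\tn 4 monolith is polynomially equivalent to the two-element lattice---is indeed handled by the type~\tn 4 trace description in Section~\ref{sect-background} together with the empty-tail property from Theorem~\ref{thm-cm-types}.
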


\bigskip
The next example shows that a subdirect product of $2$-element algebras each of which is polynomially equivalent to the $2$-element lattice need not be polynomially equivalent to a distributive lattice.

\begin{ex}
Let $\m A=(A,\po m)$ be a subreduct of $(\set{0,1},\meet, \join)^3$, with
\begin{itemize}
  \item $A=\set{(1,1,1), (0,1,1), (1,0,1), (1,1,0)}$
  \item and $\po m$ being the majority operation
        $\po m(x,y,z) = (x \join y) \meet (y \join z) \meet (z \join x)$.
\end{itemize}
Then
\begin{itemize}
  \item $\m A$ belongs to congruence distributive variety
  \item $\m A$ is a subdirect product of algebras polynomially equivalent to two element lattices,
  \item $\m A$ is not polynomially equivalent to a distributive lattice.
\end{itemize}
\end{ex}

\begin{proof}
The first two items are obvious.
To see the third one note that, up to isomorphism,
there are only two four element lattices:
\begin{itemize}
  \item the four element chain,
  \item the four element Boolean lattice.
\end{itemize}
On the other hand, for three pairwise different elements $a,b,c \in A$
we have $\po m(a,b,c) = \o 1$, where $\o 1 = (1,1,1)$.
Sending isomorphically, say by $h$, all possible 3-element tuples from $A$
into one of the above 4-element lattices
we simply cannot find a room for $h(\o 1)$
under the assumption that $\po m$ preserves lattice order. 
\end{proof}

\ssection{Polynomial time algorithms
\label{sect-algo}}

\newcommand{\J}[2]{J_{#1,#2}}
\newcommand{\ess}[1]{\textsf{Ess}\left( #1 \right)}

Combining Corollaries \ref{cor-decomposition}, \ref{cor-solvable} and \ref{2el}
we get the following Theorem.

\begin{thm}
\label{thm-nil-dl-decomp}
Let $\m A$ be a finite algebra from a congruence modular variety such that
$\polsatstar{A^\prime}$ is not \npc \ for every quotient $\m A^\prime$ of $\m A$.
Then $\m A$ is isomorphic to a direct product $\m N \times \m D$,
where $\m N$ is a nilpotent algebra
and $\m D$ is a subdirect product of $2$-element algebras each of which is polynomially equivalent to the $2$-element lattice.
\myqed
\end{thm}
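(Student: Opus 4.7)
The plan is to combine the three structural results already established in the preceding sections: the type-splitting direct decomposition of Corollary \ref{cor-decomposition}, the solvable-versus-nilpotent dichotomy of Corollary \ref{cor-solvable}, and the pure type~\tn 4 reduction of Corollary \ref{2el}. The only glue required is the observation that the assumption ``no quotient of $\m A$ has \npc \ $\csat{}$'' is inherited by each direct factor of $\m A$ (which is a quotient via the corresponding projection) and by every further quotient of such a factor.

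First I would apply Corollary \ref{cor-decomposition} to $\m A$ itself, instantiating the hypothesis at $\m A^\prime = \m A$ to ensure that $\csat A$ is not \npc. This yields an isomorphism $\m A \cong \m A_2 \times \m A_4$ with $\typset{\m A_2} \subseteq \set{\tn 2}$ and $\typset{\m A_4} \subseteq \set{\tn 4}$. I would then handle the two factors separately.

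For the factor $\m A_2$: since its typeset is contained in $\set{\tn 2}$, every prime quotient in $\con{A_2}$ is Abelian, so a straightforward induction on the length of a composition chain (using Abelianness of the monolith at each step) shows that $\m A_2$ is solvable. Were $\m A_2$ not nilpotent, Corollary \ref{cor-solvable} would supply a quotient $\m A_2^\prime$ of $\m A_2$ with $\csat{A_2^\prime}$ being \npc; but $\m A_2^\prime$ is also a quotient of $\m A$, contradicting the hypothesis. Hence $\m A_2$ is nilpotent, and I put $\m N := \m A_2$. For the factor $\m A_4$: if $\m A_4$ is trivial, it is already a subdirect product of the required form in a degenerate sense; otherwise $\typset{\m A_4} = \set{\tn 4}$ and Corollary \ref{2el} offers exactly two alternatives, of which the one producing a subdirectly irreducible quotient with \npc \ $\csat{}$ is again excluded by inheritance of the hypothesis. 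Therefore $\m A_4$ is a subdirect product of $2$-element algebras each polynomially equivalent to the $2$-element lattice, and I set $\m D := \m A_4$.

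Because the statement is a pure assembly of previously proved results, no genuinely new difficulty is expected; the only points that deserve explicit attention are the inheritance of the hypothesis through factors and their further quotients, and the formal handling of the one-element factor case.
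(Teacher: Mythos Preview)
Your proposal is correct and matches the paper's approach exactly: the paper's proof is simply the one-line remark ``Combining Corollaries \ref{cor-decomposition}, \ref{cor-solvable} and \ref{2el} we get the following Theorem,'' and you have merely spelled out the straightforward details of that combination.
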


\bigskip
The aim of this section is to prove a partial converse to Theorem \ref{thm-nil-dl-decomp}
where nilpotency is strengthened to supernilpotency.

\begin{thm}
\label{thm-supernil-dl-decomp}
Let $\m A$ be a finite algebra from a congruence modular variety
that decomposes into a direct product $\m N \times \m D$,
where $\m N$ is a supernilpotent algebra
and $\m D$ is a subdirect product of $2$-element algebras each of which is polynomially equivalent to the $2$-element lattice.
Then for every quotient $\m A^\prime$ of $\m A$ the problem
$\polsatstar{A^\prime}$ is solvable in polynomial time.
\end{thm}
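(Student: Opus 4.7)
The plan is to reduce the $\csat{A'}$ problem on any quotient $\m A'$ of $\m A$ to two independent problems, one on a supernilpotent factor and one on a DL-like factor, each of which can then be solved in polynomial time by separate methods.

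First I would verify that the direct-product decomposition is inherited by every quotient. Since $\typset{\m N}\ci\set{\tn 2}$ and $\typset{\m D}\ci\set{\tn 4}$, the two factors share no type; by Freese--McKenzie theory this forces every congruence of $\m N\times\m D$ to split as a product $\theta_N\times\theta_D$, so $\m A'\simeq \m N/\theta_N\times \m D/\theta_D$. Both supernilpotency (preserved by quotients through the behaviour of the higher commutator) and the DL-like property (quotients of a two-element lattice are isomorphic to it or trivial) pass to the quotient. Hence without loss of generality I may assume $\m A'=\m N'\times\m D'$ with $\m N'$ supernilpotent and $\m D'$ a subdirect product of $2$-element algebras polynomially equivalent to the $2$-element lattice.

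Next, because the operations of $\m A'$ act coordinate-wise on $N'\times D'$, every polynomial of $\m A'$ splits into independent $\m N'$ and $\m D'$ components. Writing each variable $x_i$ as $(x_i^N,x_i^D)$, the equation $\po t(\o x)=\po s(\o x)$ over $\m A'$ becomes equivalent to the conjunction of $\po t_N(\o x^N)=\po s_N(\o x^N)$ in $\m N'$ and $\po t_D(\o x^D)=\po s_D(\o x^D)$ in $\m D'$, with disjoint tuples of variables. Hence it suffices to exhibit polynomial time algorithms for $\csat{N'}$ and $\csat{D'}$ separately. For $\m D'$, the lattice reduct is distributive (being a sublattice of a power of the two-element lattice), and every polynomial of $\m D'$ is coordinate-wise monotone, hence expressible using meet, join, and constants from $D'$; therefore $\csat{D'}$ reduces to satisfiability over a finite distributive lattice and lies in $\ptime$ by Schwarz's theorem \cite{schwarz}.

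For $\m N'$ I expect the main work to lie. By Theorem \ref{thm-prime-supernil} we may first split $\m N'$ into factors of prime power order and argue in each by the same coordinate-wise trick. The bounded arity $k$ of commutator polynomials guaranteed by supernilpotency is the crucial ingredient: every polynomial of $\m N'$ admits a representation by polynomially many interaction patterns of arity at most $k$, so the satisfiability of $\po t(\o x)=\po s(\o x)$ can be tackled by lifting through the nilpotency series $1_{\m N'}>1_{\m N'}^{(2)}>\ldots>1_{\m N'}^{(k+1)}=0_{\m N'}$, solving the resulting linear systems over the Abelian quotients by Gaussian elimination and assembling the solutions level by level. The main obstacle here is precisely the bookkeeping of circuit size during this lifting: the supernilpotency bound is exactly what prevents the blow-up of commutator components that obstructs the merely nilpotent case, which is why the analogous statement for nilpotent but not supernilpotent algebras remains the open Problem \ref{prob-nil-supernil}.
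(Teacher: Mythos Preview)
Your reduction to the two factors is correct and is precisely what the paper does: the product $\m N\times\m D$ has no skew congruences (this follows from $\typset{\m D}\ci\set{\tn 4}$), so every quotient again decomposes as $\m N'\times\m D'$ of the same kind, and an equation over $\m A'$ splits coordinatewise into one equation over each factor.

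However, both of your factor-by-factor arguments have genuine gaps.

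\medskip
\noindent\textbf{The DL-like factor.}
Your appeal to Schwarz fails because $\m D'$ need not be, or even be polynomially equivalent to, a distributive lattice. The paper explicitly exhibits such an example at the end of Section~\ref{sect-lattices}: the set $\{(1,1,1),(0,1,1),(1,0,1),(1,1,0)\}$ with the majority operation is a subdirect product of three $2$-element lattices but is not closed under coordinatewise meet (e.g.\ $(0,1,1)\wedge(1,0,1)=(0,0,1)$ is missing) and is not polynomially equivalent to any distributive lattice. Hence there is no ``lattice reduct'' to which you can reduce, and even after expressing each basic operation as a lattice polynomial over $\{0,1\}^k$ you cannot restrict the variables back to $D'$. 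The paper instead proves directly (Theorem~\ref{thm-dl}) the \emph{\usp}: for any polynomial $\po t$ of $\m D'$ and any $a\in D'$, if $\po t(\o x)=a$ has a solution then already $\po t(a,\ldots,a)=a$. This is immediate from monotonicity in each $2$-element factor and is inherited by subdirect products, and it reduces $\csat{D'}$ to evaluating the two sides on the $\card{D'}$ constant tuples.

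\medskip
\noindent\textbf{The supernilpotent factor.}
Your sketch ``lift through the lower central series, Gaussian elimination on each Abelian slice'' is not a proof. Even in $2$-step nilpotent groups the lifted equation picks up commutator cross-terms and is no longer linear; without a further idea the number of candidates to try can blow up, and nothing you wrote bounds it. The paper's argument (Theorem~\ref{thm-supernil}) is entirely different and is the substantive content of the theorem: one shows, via a Ramsey-type argument going back to Goldmann--Russell and Horv\'{a}th, that there is a constant $d=d(\m A)$ such that whenever $\po w(\o x)=0$ has a solution it has one with at most $d$ nonzero coordinates. The proof decomposes $\po w$ (using \cite[Chapter XIV]{fm}) into a sum of commutator polynomials each depending on fewer than $k$ variables (this is exactly where $k$-supernilpotency enters), colours the $(<k)$-element subsets of the support of a given solution by residues modulo $\card A$ of certain counting functions, applies Ramsey to find a large monochromatic set $T$, and shows via inclusion--exclusion that zeroing all coordinates in $T$ preserves the value $0$. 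This yields the polynomial search space $\binom{n}{d}\cdot\card A^{d}$. Your intuition that the bound on commutator arity is the key is right, but it is exploited through Ramsey combinatorics, not through iterated Gaussian elimination.
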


Before proving this Theorem note that
if an algebra $\m A$ decomposes into a direct product of the form described above
then all its quotients decompose in the very same way.
This is an immediate consequence of the fact that the product of the form
$\m N \times \m D$ has no skew congruences
which in turn follows from $\typset{\m D} \ci \set{\tn 4}$.

The proof of Theorem \ref{thm-supernil-dl-decomp} splits into two parts.
We show that for both factors of $\m A$ the problem has polynomial time solution.
Actually we will show that in both cases if the polynomial equation
$\po t(x_1,\ldots,x_n)=\po s(x_1,\ldots,x_n)$
has a solution in $A^n$ then it has a solution in a relatively small subset $S$ of $A^n$,
namely in a subset of size bounded by a polynomial in $n$.
The reader should be however warned here that we are not going to show that all solutions are contained in this small set $S$.

\begin{thm}
\label{thm-dl}
Let $\m D$ be a subdirect product of finitely many $2$-element algebras
each of which is polynomially equivalent to the $2$-element lattice.
Then $\polsatstar{D}$ is solvable in polynomial time.
\end{thm}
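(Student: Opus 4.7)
The plan is to exploit the coordinatewise monotone behavior that $\m D$ inherits from the two-element lattice factors. Fix a subdirect embedding $\m D \hookrightarrow \m D_1 \times \cdots \times \m D_k$ with each $D_i = \set{0,1}$, and let $\leq$ denote the pointwise order on $\set{0,1}^k$ restricted to $D$. Because each $\m D_i$ is polynomially equivalent to the two-element lattice, every polynomial of $\m D_i$ is a monotone Boolean function. Since $\m D$ is a subalgebra of the product, each fundamental operation of $\m D$ acts coordinatewise, so its $i$-th component is a polynomial of $\m D_i$ and hence monotone; constants are trivially monotone and composition preserves monotonicity. A routine induction therefore shows that every polynomial $\po t \in \poln{n}{D}$ is monotone in each argument with respect to $\leq$.

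The key claim is that for any polynomials $\po t, \po s$ of $\m D$ the equation $\po t(\o x) = \po s(\o x)$ has a solution in $D^n$ if and only if there exists $e \in D$ with
\[
\po t(e,\ldots,e) = e = \po s(e,\ldots,e).
\]
The direction ($\Leftarrow$) is immediate. For ($\Rightarrow$), suppose $\o x^* \in D^n$ satisfies $\po t(\o x^*) = \po s(\o x^*) = e$ and argue coordinatewise. Writing the $i$-th component of $\po t$ as $\po t(\o y)^i = T_i(y_1^i,\ldots,y_n^i)$ for a monotone $T_i \colon \set{0,1}^n \to \set{0,1}$, we have $T_i(x_1^{*i},\ldots,x_n^{*i}) = e^i$. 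If $e^i = 1$ then by monotonicity $T_i(1,\ldots,1) \geq T_i(x_1^{*i},\ldots,x_n^{*i}) = 1$, while if $e^i = 0$ then $T_i(0,\ldots,0) \leq T_i(x_1^{*i},\ldots,x_n^{*i}) = 0$. In both cases $T_i(e^i,\ldots,e^i) = e^i$, and the same argument applies to $\po s$, yielding $\po t(e,\ldots,e) = e = \po s(e,\ldots,e)$.

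The resulting algorithm simply enumerates the constantly many elements $e \in D$, evaluates both output gates at the constant assignment $x_1 = \cdots = x_n = e$ in time linear in the circuit size, and answers ``yes'' iff some $e$ gives both outputs equal to $e$. The overall cost is $O(\card{D} \cdot N)$, where $N$ is the size of the input circuit, which is polynomial since $\m D$ is a fixed finite algebra. No substantial obstacle is anticipated in this proof: beyond the coordinatewise monotonicity observation, the only ingredient is the reduction of an arbitrary satisfying assignment to a constant one, which the monotonicity furnishes for free.
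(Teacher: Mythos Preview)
Your proposal is correct and follows essentially the same approach as the paper: both reduce to checking constant assignments by exploiting the coordinatewise monotonicity inherited from the two-element lattice factors. The paper packages this as a \emph{Uniform Solution Property} (if $\po t(\o x)=a$ has a solution then $\po t(a,\dots,a)=a$) and simply asserts that this property is preserved under subdirect products, whereas you unpack that preservation claim explicitly via the coordinatewise argument; the substance is identical.
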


\begin{proof}
The basic observation is that for the $2$-element lattice $\m D$,
and therefore for every algebra polynomially equivalent to the $2$-element lattice,
the problem $\polsatstar{D}$ is solvable in polynomial time by a very special algorithm.

Indeed, if $\po t, \po s \in \pol D$ the equation
$\po t(\o x) = \po s(\o x)$ has a solution, say $(a_1,\dots,a_n)$,
then both $\po t(a_1,\dots,a_n)$ and $\po s(a_1,\dots,a_n)$ have the same value $a$.
But for a polynomial $\po t$ over the $2$-element lattice
one can easily show,
that if $\po t(a_1,\dots,a_n)= a$ then $\po t(a,\dots,a)= a$.
Indeed, by the monotonicity of the polynomials of $\m D$ we have
\[
\po t(0,\ldots, 0) \leq \po t(a_1,\dots,a_n) \leq \po t(1,\ldots, 1)
\]
and if $\po t(a_1,\dots,a_n)=0$ then $\po t(0,\ldots, 0)$ has to be $0$ as well.
Similarly $\po t(a_1,\dots,a_n)=1$ implies $\po t(1,\ldots, 1)=1$.

Therefore, to determine if $\po t(\o x) = \po s(\o x)$
has a solution over $\m D$ it suffices to show whether
$\po t(a,\dots,a)= \po s(a,\ldots,a)$ for some $a\in D$.

\medskip
We say that an algebra $\m A$ has \usp, or \USP \ for short,
if for every polynomial $\po t(\o x)\in \pol A$ and $a \in A$
\[
\big( \exists \o x \ \ \po t(x_1,\ldots,x_n) =a \big) \Rightarrow \po t(a,\dots,a)= a
\]
What we have just shown is that the $2$-element lattice has \USP,
and that $\polsatstar{A}$ is polynomially time solvable for every finite algebra $\m A$ with \USP.

Now we can conclude the proof by noting that a subdirect product of algebras with \USP,
has \USP \ itself.
Actually \USP \ is preserved under forming homomorphic images, subalgebras, products or reducts.
\end{proof}

\bigskip

The reduction of searching a solution of an equation in supernilpotent realm to a relatively small set is much more involved than in lattice case.
Our proof is modeled after the Ramsey type argument
introduced by Mikael Goldmann and Alexander Russell in \cite{goldman-russell}
for nilpotent groups, and later cleaned up by G\'{a}bor Horv\'{a}th \cite{horvath:positive}
in the realm of nilpotent groups and nilpotent rings.

\begin{thm}
\label{thm-supernil}
Let $\m A$ be a finite supernilpotent algebra from a congruence modular variety.
Then $\polsatstar{A}$ is solvable in polynomial time.
\end{thm}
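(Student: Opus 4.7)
The plan is to adapt the Ramsey-type strategy that Goldmann and Russell devised for nilpotent groups, and that Horv\'ath extended to nilpotent rings, to the full congruence-modular supernilpotent setting. The key leverage is the rank bound on commutator polynomials supplied by Theorem \ref{thm-prime-supernil}(3): every commutator polynomial of $\m A$ has rank at most some fixed $k = k(\m A)$, and by Theorem \ref{thm-prime-supernil}(2) we may freely decompose $\m A$ as a product of algebras of prime power order.

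First I would reduce to a single prime-power factor. Since equations over a product split componentwise, $\polsatstar{A_1 \times \cdots \times \m A_r}$ polynomially reduces to the $\polsatstar{A_i}$'s individually, so it suffices to handle the case $\card{A} = p^m$. Within this factor I would induct on the nilpotency class $k$. The base case, when $\m A$ is Abelian, is handled by Theorem \ref{int-scsat-zadori} (Gaussian elimination over the associated ring). For the inductive step, pick a minimal non-trivial congruence $\mu$ which, by nilpotency, is Abelian and central. Then $\m A/\mu$ is supernilpotent of strictly smaller class; by induction, satisfiability of the projected instance $\po t(\o x) = \po s(\o x) \pmod{\mu}$ is decidable in polynomial time, producing either rejection or a candidate coset modulo $\mu$.

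The heart of the argument is the Ramsey-type enumeration at each inductive level. Having fixed a base assignment $\o a$ projecting to a quotient solution, I would expand
\[
\po t(x_1,\ldots,x_n) = \po t(\o a) + \Delta_{\po t}(\o x)
\]
inside the central Abelian section $\mu$, where the difference operation $+$ is the Abelian group structure on $\mu$-blocks. The rank-$k$ bound on commutator polynomials, in the form established by Aichinger and Mudrinski, says that $\Delta_{\po t}$ decomposes as a sum of contributions indexed by subsets $S \subseteq \set{1,\ldots,n}$ with $\card S \leq k$, each contribution being a polynomial in the variables $(x_i)_{i \in S}$ alone. There are only $\binom{n}{\leq k} \cdot \card{A}^k = O(n^k)$ such elementary contributions, and since $\mu$ is Abelian and central they combine linearly modulo $\mu$. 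The equation $\Delta_{\po t}(\o x) = \Delta_{\po s}(\o x)$ inside $\mu$ is therefore an affine system over a finite ring in $O(n^k)$ ``atomic'' unknowns derived from the original $\o x$, and Gaussian elimination solves it (or certifies inconsistency) in polynomial time; any solution is then lifted back to an assignment of the original variables.

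The main obstacle will be the bookkeeping that lets one actually extract the rank-$k$ expansion of $\Delta_{\po t}$ from a circuit for $\po t$ in polynomial time, and verify that modulo the next central section the contributions of small subsets of variables really do combine linearly in the way needed by Gaussian elimination. In the group and ring cases one has a syntactically explicit commutator expansion; here one must instead use the abstract rank-$k$ commutator calculus of Aichinger and Mudrinski, keep track of the error terms introduced when lifting solutions across the nilpotent chain, and maintain an induction invariant strong enough that the polynomial produced at each inductive step still admits an efficiently computable rank-$k$ expansion. Assembling these ingredients, with the class $k$ of $\m A$ treated as a constant, yields a polynomial-time algorithm whose running time is of the order $n^{O(k)}$.
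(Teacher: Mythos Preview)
Your high-level instincts are right---the rank bound on commutator polynomials from Theorem~\ref{thm-prime-supernil}(3) is the key, and Goldmann--Russell/Horv\'ath is the correct model. But the concrete inductive scheme you describe has a genuine gap, and it is not the ``bookkeeping'' you flag at the end.

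The gap is in the lifting step. Suppose the inductive call returns a quotient solution $\o a/\mu$. You are correct that, by centrality of $\mu$, the liftability of \emph{that particular} $\o a$ reduces to a linear system over the module $0/\mu$ and can be decided by Gaussian elimination. But the inductive hypothesis gives you only a yes/no answer, or at best one quotient solution. If $\o a$ does not lift you cannot stop: a different quotient solution $\o a'$ (not $\mu$-congruent to $\o a$) may lift even though $\o a$ does not, and in a nilpotent algebra of class $\geq 2$ the full set of quotient solutions has no linear structure you can use to enumerate or parameterize it. No induction invariant short of ``describe the whole solution set modulo $\mu$'' closes this, and that set can be exponential. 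A secondary issue: the ``$O(n^k)$ atomic unknowns'' are the values of rank-$\leq k$ absorbing polynomials, not free module elements; solving for them linearly ignores the constraint that they all arise from a single $\o x$.

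The paper's proof drops the induction entirely and uses Ramsey's theorem directly. One rewrites the equation as $\po w(\o x)=0$ via the Malcev term, expands $\po w$ (via Chapter~XIV of Freese--McKenzie) as a non-associative sum of commutator polynomials each with fewer than $k$ essential variables, and then colors the $(<k)$-subsets of the support of a given solution by residue-counts modulo $\card{A}$. Ramsey's theorem produces a monochromatic set of size $m=(k-1)!\cdot\card{A}$; an inclusion--exclusion computation shows that zeroing those $m$ coordinates preserves the value $0$. Iterating gives a constant $d=d(\m A)$ such that solvability is witnessed by some $\o x$ with at most $d$ non-zero entries, and the algorithm is brute force over the $O(n^d\cdot\card{A}^d)$ such tuples---no lifting, no quotient solution sets.
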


\begin{proof}
Suppose now that $\m A$ is a finite nilpotent algebra from a congruence modular variety.
Then $\m A$ generates a variety in which every algebra is nilpotent, and therefore congruence permutable. In particular $\m A$ has a Malcev term $\po d(x,y,z)$.

From now on we fix an element $0$ of $\m A$ and define a binary operation $+$ by putting:
\[
x + y = \po d(x,0,y).
\]
Unfortunately the binary operation $+$ does not need to be associative. Thus in longer sums we adopt the convention of associating to the left.
More formally, if $\tup{a_1,a_2,\ldots,a_\ell}$ is an ordered list of elements of $\m A$
then by $\sum \tup{a_1,a_2,\ldots,a_\ell}$ we mean $((a_1+a_2)+a_3)+\ldots+a_\ell$.

According to Corollary 7.4 in \cite{fm}, we know that for every $b,c \in A$
the function $x \mapsto \po d(x,b,c)$ is a permutation of $A$.
In particular
\[
\po d(x,y,0) = 0 \mbox{ \ \ iff \ \ } x=y.
\]
Thus, each equation $\po t(\o x) = \po s(\o x)$ can be equivalently replaced by an equation
of the form $\po w(\o x) = 0$,
where $\po w(\o x) = \po d(\po t(\o x), \po s(\o x), 0)$
has linear size in terms of the size of the original equation.

Our polynomial time algorithm for checking whether $\po w(\o x) =0$ has a solution is based on the following phenomena of supernilpotent algebras:
\begin{senumerate}
\item\label{d-non-zero}
For every finite supernilpotent algebra $\m A$ there is a positive integer $d$ such that
every equation of the form $\po w(\o x)=0$ has a solution iff
it has a solution with at most $d$ non-zero values for the $x_i$'s,
i.e. $\cardd \equa{\o x \neq 0} \leq d$.
\end{senumerate}

Given (\ref{sect-algo}.\ref{d-non-zero}) we simply check
if $\po w(x_1,\ldots,x_n)=0$ has a solution among $ {n \choose d} \cdot \card{A}^d$ possible evaluations of the $x_i$'s with $\cardd \equa{\o x \neq 0} \leq d$.
Unfortunately the degree $d$ of the polynomial bounding the run time of the algorithm can be really huge, as it is obtained by a Ramsey type argument applied to the numbers:
\begin{itemize}
  \item $k$ -- the degree of supernilpotency of the algebra $\m A$,
  \item $C = \card{A}^{k \cdot \card{A}}$,
  \item $m = (k-1)! \cdot \card{A}$
\end{itemize}
to get that:
\begin{senumerate}
\item\label{ramsey}
There is a positive integer $d$ such that
for every set $S$ with $\card{S} \geq d$
and every coloring of all at most $(k-1)$-element subsets of $S$ with $C$ colors
there exists $m$-element subset $T$ of $S$
such that all at most $(k-1)$-element subsets of $T$ with the same number of elements
have the same color.
\end{senumerate}
For a proof of the above statement we refer e.g.
to Theorem 2, Chapter 1 in the monograph \cite{grs}.

\medskip
Now, to prove (\ref{sect-algo}.\ref{d-non-zero}) we will show that
\begin{senumerate}
\item\label{d-minus-m}
Each solution $\o b = b_1,\ldots,b_n$ of $\po w(x_1,\ldots,x_n)=0$
with $\cardd \equa {\o b \neq 0} > d$
can be replaced with a solution $\o {b'} = b'_1,\ldots,b'_n$
with $\cardd \equa {\o {b'} \neq 0} = \cardd \equa {\o b \neq 0}-m$.
\end{senumerate}

\medskip
\noindent
For the rest of the proof we fix a solution $\o b = b_1,\ldots,b_n$ of $\po w(x_1,\ldots,x_n)=0$.
Now, a very careful reading of Chapter XIV of \cite{fm}, especially the proof of Lemma 14.6,
allows us to represent $\po w(\o x)$ as
$\sum \tup{\po w_1(\o x),\ldots, \po w_k(\o x)}$,
where $k$ is the degree of supernilpotency, and therefore nilpotency, of $\m A$
and each $\po w_\ell(\o x)$ has the form
\[
\po w_\ell(\o x) = \sum \tup{c_\ell, \po t_{\ell,1}(\o x),\ldots, \po t_{\ell,n_\ell}(\o x)},
\]
with
\begin{itemize}
  \item $c_\ell \in A$,
  \item $\po t_{\ell,j}(\o x)=0$
        whenever $x_i =0$ for at least one $i \in \ess{\po t_{\ell,j}}$,
        where $\ess{\po t_{\ell,j}}$ is the set of numbers of variables on which
        $\po t_{\ell,j}$ essentially depends,
  \item for each $a\in A$ the sublist $\J \ell a$ of $\tup{1,\ldots,n_\ell}$
        consisting of the $j$'s for which $\po t_{\ell,j}(\o b)=a$
        is convex in $\tup{1,\ldots,n_\ell}$.
\end{itemize}
The second item above simply means that the $\po t_{\ell,j}$'s are commutator expressions
and therefore our assumption that $\m A$ is $k$-supernilpotent gives
\begin{itemize}
  \item $\card{\ess{\po t_{\ell,j}}} < k$.
\end{itemize}

Now, suppose that the set $S = \equa{\o b \neq 0}$ is too big, i.e. $\cardd{S} > d$.
Define a coloring $\fj$ of (at most $(k-1)$-element) subsets of $S$ by $C$ colors as follows.
For a subset $I \ci S$ its color $\fj_I$ is set to be a function of the form
$\set{1,\ldots,k} \times A \map \set{0,1,\ldots, \card{A}-1}$
determined by
\[
\fj_I(\ell,a) = \cardd\set{j\in \J \ell a : I\ci \ess{\po t_{\ell,j}}} \mod \card{A}.
\]
Now, (\ref{sect-algo}.\ref{ramsey}) supplies us with $T \ci S$ such that $\card{T}=m$
and $\fj_{I_1} = \fj_{I_2}$ whenever $I_1,I_2 \ci T$ and $\card{I_1}=\card{I_2} <k$.

We modify $\o b$ to $\o {b'}$ by zeroing the $x_i$'s with $i\in T$, i.e.
\[
b'_i =
\left\{
\begin{array}{ll}
0,      &\mbox{\rm if \ } i \in T,\\
b_i,    &\mbox{\rm otherwise.}
\end{array}
\right.
\]
Obviously $\cardd \equa {\o {b'} \neq 0} = \cardd \equa {\o b \neq 0}-m$,
as required in (\ref{sect-algo}.\ref{d-minus-m}).
To prove that $\po w(b'_1,\ldots,b'_n) = 0$
we will show that
\begin{senumerate}
\item\label{b-to-bprime}
for each $\ell=1,\dots,k$ and $a\in A$ we have
\[
\sum \tup{c,\ldots, \po t_{\ell,j}(\o {b'}),\ldots }_{j \in \J \ell a} =
\sum \tup{c,\ldots, \po t_{\ell,j}(\o {b}),\ldots }_{j \in \J \ell a}.
\]
\end{senumerate}
Note that in the sum on the left hand side some of the summands switched from $a$ to $0$
(if $a \neq 0$).
Let $Z$ collects the numbers $j$ of such summands.
Obviously
$Z = \set{j\in \J \ell a : T \cap \ess{\po t_{\ell,j}} \neq \emptyset}$.
We will show that
\begin{senumerate}
\item\label{divisible}
$\card{Z}$ is divisible by $\card{A}$.
\end{senumerate}
Given (\ref{sect-algo}.\ref{divisible})
we argue towards (\ref{sect-algo}.\ref{b-to-bprime}) as follows.
We have already noticed that
$x \mapsto x+a = \po d(x,0,a)$ is a permutation of $A$.
Let $\sigma$ be the order of this permutation, so that
\[
\sum \tup{x,\underbrace{a,\ldots,a}_{\text{$\sigma$ times}}}=x.
\]
Moreover the fact that $x+0 = \po d(x,0,0)=x$ allows us to omit all the $0$'s in the lists under the sums. This gives the first equality in the display below.
\begin{eqnarray*}
\sum \tup{c,\ldots, \po t_{\ell,j}(\o {b'}),\ldots }_{j \in \J \ell a}
&=&
\sum \tup{c,\underbrace{a,\ldots\ldots\ldots,a}_{\text{$(\card{\J \ell a}-\card{Z})$ times}}}
\\
&=&
\sum \tup{c,\underbrace{a,\ldots\ldots,a}_{\text{$\card{\J \ell a}$ times}}}
\\
&=&
\sum \tup{c,\ldots, \po t_{\ell,j}(\o {b}),\ldots }_{j \in \J \ell a}
\end{eqnarray*}
The second equality in this display uses the fact that $\sigma$ divides $\card{Z}$, which follows from (\ref{sect-algo}.\ref{divisible})
and Lemma 14.7 in \cite{fm} telling us that $\sigma$ dives $\card{A}$.

\medskip

Thus, we are left with the proof of (\ref{sect-algo}.\ref{divisible}).
To do this, for $I \ci T$ define
\[
Z_I = \set{ j \in \J \ell a : I \ci \ess{\po t_{\ell,j}}}
\]
and observe that
\[
Z = \bigcup_{i \in T} Z_\set{i}
\ \ \ \mbox{\rm and } \ \ \
Z_I = \bigcap_{i \in I} Z_\set{i}.
\]
Thus the inclusion-exclusion principle, together with $\card{\ess{\po t_{\ell,j}}} < k$, gives
\[
\card{Z} = \sum_{I \ci T \atop 0<\card{I}<k} \left(-1\right)^{\card{I}+1} \cdot \card{Z_I}.
\]
However we know that $\card{Z_I}$ is (modulo $\card{A}$) nothing else but $\fj_I(\ell,a)$.
Since $\fj_I$ depends only on the size of $I$ we put
$\zeta_q = \fj_I(\ell,a)$ for $\card{I}=q$ to get that modulo $\card{A}$ we have
\[
\card{Z} =  \sum_{q=1}^{k-1} (-1)^{q+1}{m \choose q} \cdot \zeta_q.
\]
To conclude the proof that $\card{Z}$ is divisible by $\card{A}$
observe that for $q=1,\ldots,k-1$
all the binomial coefficients $m \choose q$ are divisible by $\card{A}$, as
$m$ was set to be $(k-1)! \cdot \card{A}$.
\end{proof}

\ssection{Simultaneous satisfiability of many circuits}
\label{sect-multi-csat}

This section is devoted to the problem $\mcsat{}$.
All we have to do is to prove Corollary \ref{int-mcsat}.

\medskip
\noindent
{\bf Corollary \ref{int-mcsat}.}
{\em
Let $\m A$ be a finite algebra from a congruence modular variety.
\begin{enumerate}
\item
\label{int-csat-nonpc}
    If $\m A$ has no quotient $\m A^\prime$ with $\mcsat{A^\prime}$ being \npc
    then $\m A$ is isomorphic to a direct product $\m M \times \m D$,
    where $\m M$ is an affine algebra
    and $\m D$ is a subdirect product of $2$-element algebras
    each of which is polynomially equivalent to the $2$-element lattice.
\item
\label{int-csat-ptime}
    If $\m A$ decomposes into a direct product $\m M \times \m D$,
    where $\m M$ is an affine algebra
    and $\m D$ is a subdirect product of $2$-element algebras each of which is polynomially equivalent to the $2$-element lattice,
    then for every quotient $\m A^\prime$ of $\m A$ the problem
    $\csat{A^\prime}$ is solvable in polynomial time.
\end{enumerate}
}

\medskip
\begin{proof}
First note that every instance of $\csat{A}$ is also a instance of $\mcsat{A}$,
so that $\mcsat{A}$ is \npc whenever $\csat{A}$ is \npc.
Moreover, $\mcsat{A}$ can be treated as a problem of satisfiability
of systems of equations of the form
\[
\po g_1(x_1,\ldots,x_n)=\po g_2(x_1,\ldots,x_n)=\ldots=\po g_k(x_1,\ldots,x_n).
\]
Thus $\mcsat{A}$ is in fact a subproblem of $\scsat{A}$,
so that Gaussian elimination type algorithms for affine $\m A$'s show that in this case
$\mcsat{A} \in \ptime$ (see also Theorem \ref{int-scsat-zadori}).
Therefore, by Theorem \ref{int-csat} we are left with the following two classes of algebras:
\begin{itemize}
 \item nilpotent non-affine algebras,
 \item subdirect products of algebras each of which is polynomially equivalent to the $2$-element lattice.
\end{itemize}

In case $\m A$ is nilpotent we can easily interpret $\scsat{A}$ into $\mcsat{A}$.
Indeed, nilpotent $\m A$ has a Malcev polynomial $\po d$, such that for all $a,b \in A$
the function $x\mapsto\po d(x,a,b)$ is a permutation.
Thus, after arbitrarily fixing $a \in A$, the system of equations over $\m A$
\begin{eqnarray*}
        \po g_1(x_1,\ldots,x_n) &=& \po h_1(x_1,\ldots,x_n)\\
                                &\vdots&\\
        \po g_k(x_1,\ldots,x_n) &=& \po h_k(x_1,\ldots,x_n),
\end{eqnarray*}
can be equivalently rewritten to the following instance of $\mcsat{A}$
\[
\begin{array}{ccccccc}
\po d(\po g_1(x_1,\ldots,x_n),\po h_1(x_1,\ldots,x_n),a)
&=&\ldots&=&
\po d(\po g_k(x_1,\ldots,x_n),\po h_k(x_1,\ldots,x_n),a)
&=&
a.
\end{array}
\]
This interpretation, together with Theorem \ref{int-scsat-zadori}, makes $\mcsat{A}$ \npc
whenever $\m A$ is nilpotent but not affine.

To see that subdirect products of algebras each of which is polynomially equivalent to the $2$-element lattice stay on the polynomial side recall, from the proof of Theorem \ref{thm-dl},
that such algebras have \usp.
Thus checking if
\[
\begin{array}{ccccccc}
\po g_1(x_1,\ldots,x_n)
&=&
\po g_2(x_1,\ldots,x_n)
&=& \ldots &=&
\po g_k(x_1,\ldots,x_n)
\end{array}
\]
has a solution in $\m A$, reduces to finding $a\in A$  with
\[
\begin{array}{ccccccc}
\po g_1(a,\ldots,a)
&=&
\po g_2(a,\ldots,a)
&=& \ldots &=&
\po g_k(a,\ldots,a).
\end{array}
\]
\end{proof}

\ssection{Equivalence of Circuits}
\label{sect-ceqv}

This section considers the equivalence of circuits as defined in the problem $\ceqv{}$.
Our results in this direction are covered by Theorem \ref{int-ceqv-nonnil}.

\medskip
\noindent
{\bf Theorem \ref{int-ceqv-nonnil}.}
{\em
Let $\m A$ be a finite algebra from a congruence modular variety.
If $\m A$ has no quotient $\m A^\prime$ with $\ceqv{A^\prime}$ being \conpc
then $\m A$ is nilpotent.
}

\medskip
\begin{proof}
First note that if $\tn 3$ or $\tn 4$ is in $\typset{\m A}$ then
$\m A$ has a minimal set $U=\set{0,1}$ such that $\m A|_U$ is polynomially equivalent to
either $2$-element Boolean algebra or $2$-element lattice.
But $\ceqv{}$ is \conpc for both of these small algebras (see Example \ref{ex-lattices}).
Arguing like in the proof of Theorem \ref{thm-type-3} this intractability can be carried up to $\ceqv{A}$.

Thus we are left with solvable algebras, i.e. with $\typset{\m A} \ci \set{\tn 2}$.
To force such algebra $\m A$ to be nilpotent
we can argue to the contrary like in the proof of Corollary \ref{cor-solvable}
to produce its subdirectly irreducible quotient monolith of which does not centralize $1$.
This allows us to mimic the proof of Lemma \ref{lm-solvable}.
Indeed the polynomials $\po t_G(\o x)$ and $\po t_\Phi(\o x)$ produced there
(to encode graph colorability or Boolean satisfiability, respectively)
take only two values: $a$ and $e$.
Now, instead of considering the satisfiability of the equation $\po t(\o x) = a$
we check whether $\po t(\o x)$ always return the value $e$.
\end{proof}


\end{document}